\begin{document}

\newcommand\cutoffint{\mathop{-\hskip -4mm\int}\limits}
\newcommand\cutoffsum{\mathop{-\hskip -4mm\sum}\limits}
\newcommand\cutoffzeta{-\hskip -1.7mm\zeta} 
\newcommand{\goth}[1]{\ensuremath{\mathfrak{#1}}}
\newcommand{\bbox}{\normalsize {}%
        \nolinebreak \hfill $\blacksquare$ \medbreak \par}
\newcommand{\simall}[2]{\underset{#1\rightarrow#2}{\sim}}

\newtheorem{theorem}{Theorem}[section]
\newtheorem{prop}[theorem]{Proposition}
\newtheorem{lemdefn}[theorem]{Lemma-Definition}

\newtheorem{propdefn}[theorem]{Proposition-Definition}
\newtheorem{lem}[theorem]{Lemma}
\newtheorem{thm}[theorem]{Theorem}
\newtheorem{coro}[theorem]{Corollary}
\newtheorem{claim}[theorem]{Claim}
\theoremstyle{definition}
\newtheorem{defn}[theorem]{Definition}
\newtheorem{rk}[theorem]{Remark}
\newtheorem{ex}[theorem]{Example}
\newtheorem{coex}[theorem]{Counterexample}

\renewcommand{\theenumi}{{\it\roman{enumi}}}
\renewcommand{\theenumii}{\alpha{enumii}}

\newenvironment{thmenumerate}{\leavevmode\begin{enumerate}[leftmargin=1.5em]}{\end{enumerate}}

\newcommand{\delete}[1]{{}}
\newcommand{\optional}[1]{{\color{lightgray}  #1}}

\newcommand{\nc}{\newcommand}

\nc{\dforest}{\mathfrak{f}}
\nc{\mlabel}[1]{\label{#1}}  
\nc{\mcite}[1]{\cite{#1}}  
\nc{\mref}[1]{\ref{#1}}  
\nc{\mbibitem}[1]{\bibitem{#1}} 

\delete{
\nc{\mlabel}[1]{\label{#1}  
{\hfill \hspace{1cm}{\small{{\ }\hfill(#1)}}}}
\nc{\mcite}[1]{\cite{#1}{\small{{{\ }(#1)}}}}  
\nc{\mref}[1]{\ref{#1}{{{{\ }(#1)}}}}  
\nc{\mbibitem}[1]{\bibitem[\bf #1]{#1}} 
}



\newcommand{\LTwoLadder}[2]{\begin{picture}(12,5)(0,-1)
\put(3,-2){\circle*{2}}
\put(3,-2){\line(0,1){7}}
\put(3,6){\circle*{2}}
\put(4,-4){\tiny #1}
\put(4,4){\tiny #2}
\end{picture}}

\nc{\ola}[1]{\stackrel{#1}{\longrightarrow}}
\nc{\mtop}{\top\hspace{-1mm}}
\nc{\mrm}[1]{{\rm #1}}
\nc{\depth}{{\mrm d}}
\nc{\id}{\mrm{id}}
\nc{\Id}{\mathrm{Id}}
\nc{\mapped}{operated\xspace}
\nc{\Mapped}{Operated\xspace}
\newcommand{\redtext}[1]{{\textcolor{red}{#1}}}
\newcommand{\Hol}{\text{Hol}}
\newcommand{\Mer}{\text{Mer}}
\newcommand{\n}{\text{lin}}
\nc{\ot}{\otimes}
\nc{\mphi}{\eta}
\nc{\Hom}{\mathrm{Hom}}
\nc{\CS}{\mathcal{CS}}
\nc{\bfk}{\mathbf{K}}
\nc{\lwords}{\calw}
\nc{\ltrees}{\calf}
\nc{\lpltrees}{\calp}
\nc{\Map}{\mathrm{Map}}
\nc{\rep}{\beta}
\nc{\free}[1]{\bar{#1}}
\nc{\OS}{\mathbf{OS}}
\nc{\OM}{\mathbf{OM}}
\nc{\OA}{\mathbf{OA}}
\nc{\based}{based\xspace}
\nc{\tforall}{\text{ for all }}
\nc{\hwp}{\widehat{P}^\calw}
\nc{\sha}{{\mbox{\cyr X}}}
\font\cyr=wncyr10 \font\cyrs=wncyr7
\nc{\Mor}{\mathrm{Mor}}
\def\lc{\lfloor}
\def\rc{\rfloor}
\nc{\oF}{{\overline{F}}}
\nc{\mge}{_{bu}\!\!\!\!{}}
\newcommand{\bottop}{\top\hspace{-0.8em}\bot}

\newcommand{\W}{\mathbb{W}}
\newcommand{\R}{\mathbb{R}}
\newcommand{\bbR}{\mathbb{R}}
\newcommand{\bbC}{\mathbb{C}}

\newcommand{\C}{\mathbb{C}}
\newcommand{\K}{\mathbb{K}}
\newcommand{\Z}{\mathbb{Z}}
\newcommand{\Q}{\mathbb{Q}}
\newcommand{\bbB}{\mathbb{B}}
\newcommand{\N}{\mathbb{N}}
\newcommand{\F}{\mathbb{F}}
\newcommand{\T}{\mathbb{T}}
\newcommand{\bbG}{\mathbb{G}}
\newcommand{\U}{\mathbb{U}}
\newcommand{\loc}{locality\xspace}
\newcommand{\Loc}{Locality\xspace}
\newcommand {\frakc}{{\mathfrak {c}}}
\newcommand {\frakd}{{\mathfrak {d}}}
\newcommand {\frakr}{{\mathfrak {r}}}
\newcommand {\fraku}{{\mathfrak {u}}}
\newcommand {\fraks}{{\mathfrak {s}}}
\newcommand{\frakS}{S}
\newcommand {\bbf}{\ltrees}
\newcommand {\bbg}{{\mathbb{G}}}
\newcommand {\bbp}{\lpltrees}
\newcommand {\bbw}{{\mathbb{W}}}
\newcommand {\cala}{{\mathcal {A}}}
\newcommand {\calb}{\mathcal {B}}
\newcommand {\calc}{{\mathcal {C}}}
\newcommand {\cald}{{\mathcal {D}}}
\newcommand {\cale}{{\mathcal {E}}}
\newcommand {\calf}{{\mathcal {F}}}
\newcommand {\calg}{{\mathcal {G}}}
\newcommand {\calh}{\mathcal{H}}
\newcommand {\cali}{\mathcal{I}}
\newcommand {\call}{{\mathcal {L}}}
\newcommand {\calm}{{\mathcal {M}}}
\newcommand {\calp}{{\mathcal {P}}}
\newcommand {\calr}{{\mathcal {R}}}
\newcommand {\cals}{{\mathcal {S}}}
\newcommand {\calt}{{\mathcal {T}}}
\newcommand {\calv}{{\mathcal {V}}}
\newcommand {\calw}{{\mathcal {W}}}
\newcommand {\calms}{{\mathcal {M}}{\mathcal {S}}}
\nc{\vep}{\varepsilon}
\def \e {{\epsilon}}
\nc {\diagtree}{{\bf D}}
\newcommand{\sy  }[1]{{\color{purple}  #1}} 
\newcommand{\cy}[1]{{\color{cyan}  #1}}
\newcommand{\cyt}[1]{{\color{cyan}\texttt{  #1}}}
\newcommand{\zb }[1]{{\color{blue}  #1}}
\newcommand{\li}[1]{{\color{red} #1}}
\newcommand{\lir}[1]{{\it\color{red} (Li: #1)}}
\newcommand{\lit}[2]{\sout{\color{red}{#1}}{\color{red} #2}}


\newcommand{\tdun}[1]{\begin{picture}(10,5)(-2,-1)
\put(0,0){\circle*{2}}
\put(3,-2){\tiny #1}
\end{picture}}

\newcommand{\tddeux}[2]{\begin{picture}(12,5)(0,-1)
\put(3,0){\circle*{2}}
\put(3,0){\line(0,1){5}}
\put(3,5){\circle*{2}}
\put(3,-2){\tiny #1}
\put(3,4){\tiny #2}
\end{picture}}

\newcommand{\tdtroisun}[3]{\begin{picture}(20,12)(-5,-1)
\put(3,0){\circle*{2}}
\put(-0.65,0){$\vee$}
\put(6,7){\circle*{2}}
\put(0,7){\circle*{2}}
\put(5,-2){\tiny #1}
\put(6,5){\tiny #2}
\put(-5,8){\tiny #3}
\end{picture}}

\def\ta1{{\scalebox{0.25}{ 
\begin{picture}(12,12)(38,-38)
\SetWidth{0.5} \SetColor{Black} \Vertex(45,-33){5.66}
\end{picture}}}}

\def\tb2{{\scalebox{0.25}{ 
\begin{picture}(12,42)(38,-38)
\SetWidth{0.5} \Vertex(45,-3){5.66}
\SetWidth{1.0} \Line(45,-3)(45,-33) \SetWidth{0.5}
\Vertex(45,-33){5.66}
\end{picture}}}}

\def\tc3{{\scalebox{0.25}{ 
\begin{picture}(12,72)(38,-38)
\SetWidth{0.5} \SetColor{Black} \Vertex(45,27){5.66}
\SetWidth{1.0} \Line(45,27)(45,-3) \SetWidth{0.5}
\Vertex(45,-33){5.66} \SetWidth{1.0} \Line(45,-3)(45,-33)
\SetWidth{0.5} \Vertex(45,-3){5.66}
\end{picture}}}}

\def\td31{{\scalebox{0.25}{ 
\begin{picture}(42,42)(23,-38)
\SetWidth{0.5} \SetColor{Black} \Vertex(45,-3){5.66}
\Vertex(30,-33){5.66} \Vertex(60,-33){5.66} \SetWidth{1.0}
\Line(45,-3)(30,-33) \Line(60,-33)(45,-3)
\end{picture}}}}

\def\xtd31{{\scalebox{0.35}{ 
\begin{picture}(70,42)(13,-35)
\SetWidth{0.5} \SetColor{Black} \Vertex(45,-3){5.66}
\Vertex(30,-33){5.66} \Vertex(60,-33){5.66} \SetWidth{1.0}
\Line(45,-3)(30,-33) \Line(60,-33)(45,-3)
\put(38,-38){\em \huge x}
\end{picture}}}}

\def\ytd31{{\scalebox{0.35}{ 
\begin{picture}(70,42)(13,-35)
\SetWidth{0.5} \SetColor{Black} \Vertex(45,-3){5.66}
\Vertex(30,-33){5.66} \Vertex(60,-33){5.66} \SetWidth{1.0}
\Line(45,-3)(30,-33) \Line(60,-33)(45,-3)
\put(38,-38){\em \huge y}
\end{picture}}}}

\def\xldec41r{{\scalebox{0.35}{ 
\begin{picture}(70,42)(13,-45)
\SetColor{Black}
\SetWidth{0.5} \Vertex(45,-3){5.66}
\Vertex(30,-33){5.66} \Vertex(60,-33){5.66}
\Vertex(60,-63){5.66}
\SetWidth{1.0}
\Line(45,-3)(30,-33) \Line(60,-33)(45,-3)
\Line(60,-33)(60,-63)
\put(38,-38){\em \huge x}

\end{picture}}}}

\def\xyrlong{{\scalebox{0.35}{ 
\begin{picture}(70,72)(13,-48)
\SetColor{Black}
\SetWidth{0.5} \Vertex(45,-3){5.66}
\Vertex(30,-33){5.66} \Vertex(60,-33){5.66} \SetWidth{1.0}
\Line(45,-3)(30,-33) \Line(60,-33)(45,-3)
\put(38,-38){\em\huge x}
\SetWidth{0.5}
\Vertex(45,-63){5.66} \Vertex(75,-63){5.66} \SetWidth{1.0}
\Line(60,-33)(45,-63) \Line(60,-33)(75,-63)
\put(55,-63){\em\huge y}
\end{picture}}}}

\def\xyllong{{\scalebox{0.35}{ 
\begin{picture}(70,72)(13,-48)
\SetColor{Black}
\SetWidth{0.5} \Vertex(45,-3){5.66}
\Vertex(30,-33){5.66} \Vertex(60,-33){5.66} \SetWidth{1.0}
\Line(45,-3)(30,-33) \Line(60,-33)(45,-3)
\put(40,-33){\em\huge y}
\SetWidth{0.5}
\Vertex(15,-63){5.66} \Vertex(45,-63){5.66} \SetWidth{1.0}
\Line(30,-33)(15,-63) \Line(30,-33)(45,-63)
\put(25,-63){\em\huge x}
\end{picture}}}}

\def\xyldec43{{\scalebox{0.35}{ 
\begin{picture}(70,62)(13,-25)
\SetColor{Black}
\SetWidth{0.5} \Vertex(45,-3){5.66}
\Vertex(15,-33){5.66} \Vertex(45,-38){5.66}
\Vertex(75,-33){5.66}
\SetWidth{1.0}
\Line(45,-3)(15,-33) \Line(45,-3)(45,-38)
\Line(45,-3)(74,-33)
\put(25,-33){\em\huge x}
\put(50,-33){\em\huge y}
\end{picture}}}}

\def\te4{{\scalebox{0.25}{ 
\begin{picture}(12,102)(38,-8)
\SetWidth{0.5} \SetColor{Black} \Vertex(45,57){5.66}
\Vertex(45,-3){5.66} \Vertex(45,27){5.66} \Vertex(45,87){5.66}
\SetWidth{1.0} \Line(45,57)(45,27) \Line(45,-3)(45,27)
\Line(45,57)(45,87)
\end{picture}}}}

\def\tf41{{\scalebox{0.25}{ 
\begin{picture}(42,72)(38,-8)
\SetWidth{0.5} \SetColor{Black} \Vertex(45,27){5.66}
\Vertex(45,-3){5.66} \SetWidth{1.0} \Line(45,27)(45,-3)
\SetWidth{0.5} \Vertex(60,57){5.66} \SetWidth{1.0}
\Line(45,27)(60,57) \SetWidth{0.5} \Vertex(75,27){5.66}
\SetWidth{1.0} \Line(75,27)(60,57)
\end{picture}}}}

\def\tg42{{\scalebox{0.25}{ 
\begin{picture}(42,72)(8,-8)
\SetWidth{0.5} \SetColor{Black} \Vertex(45,27){5.66}
\Vertex(45,-3){5.66} \SetWidth{1.0} \Line(45,27)(45,-3)
\SetWidth{0.5} \Vertex(15,27){5.66} \Vertex(30,57){5.66}
\SetWidth{1.0} \Line(15,27)(30,57) \Line(45,27)(30,57)
\end{picture}}}}

\def\th43{{\scalebox{0.25}{ 
\begin{picture}(42,42)(8,-8)
\SetWidth{0.5} \SetColor{Black} \Vertex(45,-3){5.66}
\Vertex(15,-3){5.66} \Vertex(30,27){5.66} \SetWidth{1.0}
\Line(15,-3)(30,27) \Line(45,-3)(30,27) \Line(30,27)(30,-3)
\SetWidth{0.5} \Vertex(30,-3){5.66}
\end{picture}}}}

\def\thII43{{\scalebox{0.25}{ 
\begin{picture}(72,57) (68,-128)
    \SetWidth{0.5}
    \SetColor{Black}
    \Vertex(105,-78){5.66}
    \SetWidth{1.5}
    \Line(105,-78)(75,-123)
    \Line(105,-78)(105,-123)
    \Line(105,-78)(135,-123)
    \SetWidth{0.5}
    \Vertex(75,-123){5.66}
    \Vertex(105,-123){5.66}
    \Vertex(135,-123){5.66}
  \end{picture}
  }}}

\def\thj44{{\scalebox{0.25}{ 
\begin{picture}(42,72)(8,-8)
\SetWidth{0.5} \SetColor{Black} \Vertex(30,57){5.66}
\SetWidth{1.0} \Line(30,57)(30,27) \SetWidth{0.5}
\Vertex(30,27){5.66} \SetWidth{1.0} \Line(45,-3)(30,27)
\SetWidth{0.5} \Vertex(45,-3){5.66} \Vertex(15,-3){5.66}
\SetWidth{1.0} \Line(15,-3)(30,27)
\end{picture}}}}

\def\xthj44{{\scalebox{0.35}{ 
\begin{picture}(42,72)(8,-8)
\SetWidth{0.5} \SetColor{Black} \Vertex(30,57){5.66}
\SetWidth{1.0} \Line(30,57)(30,27) \SetWidth{0.5}
\Vertex(30,27){5.66} \SetWidth{1.0} \Line(45,-3)(30,27)
\SetWidth{0.5} \Vertex(45,-3){5.66} \Vertex(15,-3){5.66}
\SetWidth{1.0} \Line(15,-3)(30,27)
\put(25,-3){\em\huge x}
\end{picture}}}}

\def\ti5{{\scalebox{0.25}{ 
\begin{picture}(12,132)(23,-8)
\SetWidth{0.5} \SetColor{Black} \Vertex(30,117){5.66}
\SetWidth{1.0} \Line(30,117)(30,87) \SetWidth{0.5}
\Vertex(30,87){5.66} \Vertex(30,57){5.66} \Vertex(30,27){5.66}
\Vertex(30,-3){5.66} \SetWidth{1.0} \Line(30,-3)(30,27)
\Line(30,27)(30,57) \Line(30,87)(30,57)
\end{picture}}}}

\def\tj51{{\scalebox{0.25}{ 
\begin{picture}(42,102)(53,-38)
\SetWidth{0.5} \SetColor{Black} \Vertex(61,27){4.24}
\SetWidth{1.0} \Line(75,57)(90,27) \Line(60,27)(75,57)
\SetWidth{0.5} \Vertex(90,-3){5.66} \Vertex(60,27){5.66}
\Vertex(75,57){5.66} \Vertex(90,-33){5.66} \SetWidth{1.0}
\Line(90,-33)(90,-3) \Line(90,-3)(90,27) \SetWidth{0.5}
\Vertex(90,27){5.66}
\end{picture}}}}

\def\tk52{{\scalebox{0.25}{ 
\begin{picture}(42,102)(23,-8)
\SetWidth{0.5} \SetColor{Black} \Vertex(60,57){5.66}
\Vertex(45,87){5.66} \SetWidth{1.0} \Line(45,87)(60,57)
\SetWidth{0.5} \Vertex(30,57){5.66} \SetWidth{1.0}
\Line(30,57)(45,87) \SetWidth{0.5} \Vertex(30,-3){5.66}
\SetWidth{1.0} \Line(30,-3)(30,27) \SetWidth{0.5}
\Vertex(30,27){5.66} \SetWidth{1.0} \Line(30,57)(30,27)
\end{picture}}}}

\def\tl53{{\scalebox{0.25}{ 
\begin{picture}(42,102)(8,-8)
\SetWidth{0.5} \SetColor{Black} \Vertex(30,57){5.66}
\Vertex(30,27){5.66} \SetWidth{1.0} \Line(30,57)(30,27)
\SetWidth{0.5} \Vertex(30,87){5.66} \SetWidth{1.0}
\Line(30,27)(45,-3) \SetWidth{0.5} \Vertex(15,-3){5.66}
\SetWidth{1.0} \Line(15,-3)(30,27) \Line(30,57)(30,87)
\SetWidth{0.5} \Vertex(45,-3){5.66}
\end{picture}}}}

\def\tm54{{\scalebox{0.25}{ 
\begin{picture}(42,72)(8,-38)
\SetWidth{0.5} \SetColor{Black} \Vertex(30,-3){5.66}
\SetWidth{1.0} \Line(30,27)(30,-3) \Line(30,-3)(45,-33)
\SetWidth{0.5} \Vertex(15,-33){5.66} \SetWidth{1.0}
\Line(15,-33)(30,-3) \SetWidth{0.5} \Vertex(45,-33){5.66}
\SetWidth{1.0} \Line(30,-33)(30,-3) \SetWidth{0.5}
\Vertex(30,-33){5.66} \Vertex(30,27){5.66}
\end{picture}}}}

\def\tn55{{\scalebox{0.25}{ 
\begin{picture}(42,72)(8,-38)
\SetWidth{0.5} \SetColor{Black} \Vertex(15,-33){5.66}
\Vertex(45,-33){5.66} \Vertex(30,27){5.66} \SetWidth{1.0}
\Line(45,-33)(45,-3) \SetWidth{0.5} \Vertex(45,-3){5.66}
\Vertex(15,-3){5.66} \SetWidth{1.0} \Line(30,27)(45,-3)
\Line(15,-3)(30,27) \Line(15,-3)(15,-33)
\end{picture}}}}

\def\tp56{{\scalebox{0.25}{ 
\begin{picture}(66,111)(0,0)
\SetWidth{0.5} \SetColor{Black} \Vertex(30,66){5.66}
\Vertex(45,36){5.66} \SetWidth{1.0} \Line(30,66)(45,36)
\Line(15,36)(30,66) \SetWidth{0.5} \Vertex(30,6){5.66}
\Vertex(60,6){5.66} \SetWidth{1.0} \Line(60,6)(45,36)
\SetWidth{0.5}
\SetWidth{1.0} \Line(45,36)(30,6) \SetWidth{0.5}
\Vertex(15,36){5.66}
\end{picture}}}}

\def\tq57{{\scalebox{0.25}{ 
\begin{picture}(81,111)(0,0)
\SetWidth{0.5} \SetColor{Black} \Vertex(45,36){5.66}
\Vertex(30,6){5.66} \Vertex(60,6){5.66} \SetWidth{1.0}
\Line(60,6)(45,36) \SetWidth{0.5}
\SetWidth{1.0} \Line(45,36)(30,6) \SetWidth{0.5}
\Vertex(75,36){5.66} \SetWidth{1.0} \Line(45,36)(60,66)
\Line(60,66)(75,36) \SetWidth{0.5} \Vertex(60,66){5.66}
\end{picture}}}}

\def\tr58{{\scalebox{0.25}{ 
\begin{picture}(81,111)(0,0)
\SetWidth{0.5} \SetColor{Black} \Vertex(60,6){5.66}
\Vertex(75,36){5.66} \SetWidth{1.0} \Line(60,66)(75,36)
\SetWidth{0.5} \Vertex(60,66){5.66}
\SetWidth{1.0} \Line(60,36)(60,66) \Line(60,6)(60,36)
\SetWidth{0.5} \Vertex(60,36){5.66} \Vertex(45,36){5.66}
\SetWidth{1.0} \Line(60,66)(45,36)
\end{picture}}}}

\def\ts59{{\scalebox{0.25}{ 
\begin{picture}(81,111)(0,0)
\SetWidth{0.5} \SetColor{Black}
\Vertex(75,36){5.66} \SetWidth{1.0} \Line(60,66)(75,36)
\SetWidth{0.5} \Vertex(60,66){5.66}
\SetWidth{1.0} \Line(60,36)(60,66) \SetWidth{0.5}
\Vertex(60,36){5.66} \Vertex(45,36){5.66} \SetWidth{1.0}
\Line(60,66)(45,36) \Line(75,6)(75,36) \SetWidth{0.5}
\Vertex(75,6){5.66}
\end{picture}}}}

\def\tt591{{\scalebox{0.25}{ 
\begin{picture}(81,111)(0,0)
\SetWidth{0.5} \SetColor{Black}
\Vertex(75,36){5.66} \SetWidth{1.0} \Line(60,66)(75,36)
\SetWidth{0.5} \Vertex(60,66){5.66}
\SetWidth{1.0} \Line(60,36)(60,66) \SetWidth{0.5}
\Vertex(60,36){5.66} \Vertex(45,36){5.66} \SetWidth{1.0}
\Line(60,66)(45,36) \SetWidth{0.5} \Vertex(45,6){5.66}
\SetWidth{1.0} \Line(45,6)(45,36)
\end{picture}}}}

\def\bigdect{{\scalebox{0.4}{ 
\begin{picture}(140,120)(0,-60)
\SetColor{Black}
\SetWidth{0.5} \Vertex(70,60){5.66}
\put(48,60){\em\huge$\alpha$}
\SetWidth{1.0} \Line(70,60)(0,20)
\SetWidth{0.5} \Vertex(0,20){5.66}
\put(-15,25){\em\huge$\beta$}
\SetWidth{1.0} \Line(70,60)(70,20)
\SetWidth{0.5} \Vertex(70,20){5.66}
\put(50,20){\em\huge$e$}
\SetWidth{1.0} \Line(70,60)(140,20)
\SetWidth{0.5} \Vertex(140,20){5.66}
\put(150,25){\em\huge$\delta$}

\SetWidth{1.0} \Line(0,20)(-50,-20)
\SetWidth{0.5} \Vertex(-50,-20){5.66}
\put(-70,-20){\em\huge $a$}
\SetWidth{1.0} \Line(0,20)(0,-20)
\SetWidth{0.5} \Vertex(0,-20){5.66}
\put(-20,-20){\em\huge$\gamma$}
\SetWidth{1.0} \Line(0,20)(50,-20)
\SetWidth{0.5} \Vertex(50,-20){5.66}
\put(50,-38){\em\huge$d$}

\SetWidth{1.0} \Line(0,-20)(-30,-50)
\SetWidth{0.5} \Vertex(-30,-50){5.66}
\put(-45,-68){\em\huge$b$}
\SetWidth{1.0} \Line(0,-20)(30,-50)
\SetWidth{0.5} \Vertex(30,-50){5.66}
\put(25,-68){\em\huge$c$}

\SetWidth{1.0} \Line(140,20)(100,-10)
\SetWidth{0.5} \Vertex(100,-10){5.66}
\put(80,-20){\em\huge$f$}
\SetWidth{1.0} \Line(140,20)(140,-20)
\SetWidth{0.5} \Vertex(140,-20){5.66}
\put(150,-30){\em\huge$\sigma$}
\SetWidth{1.0} \Line(140,-20)(140,-60)
\SetWidth{0.5} \Vertex(140,-60){5.66}
\put(150,-70){\em\huge$g$}
\SetWidth{1.0} \Line(140,20)(180,-10)
\SetWidth{0.5} \Vertex(180,-10){5.66}
\put(190,-30){\em\huge$\tau$}
\SetWidth{1.0} \Line(180,-10)(180,-60)
\SetWidth{0.5} \Vertex(180,-60){5.66}
\put(190,-70){\em\huge$h$}
\end{picture}}}}

\title[Locality and renormalisation]{ Locality and renormalisation:\\  universal properties
and integrals on trees}

\author{Pierre Clavier}
\address{Institute of Mathematics,
University of Potsdam,
D-14476 Potsdam, Germany}
\email{clavier@math.uni-potsdam.de}

\author{Li Guo}
\address{Department of Mathematics and Computer Science,
         Rutgers University,
         Newark, NJ 07102, USA}
\email{liguo@rutgers.edu}

\author{Sylvie Paycha}
\address{Institute of Mathematics,
University of Potsdam,
D-14469 Potsdam, Germany\\ On leave from the Universit\'e Clermont-Auvergne\\
Clermont-Ferrand, France}
\email{paycha@math.uni-potsdam.de}

\author{Bin Zhang}
\address{College of Mathematics,
Sichuan University, Chengdu, 610064, China}
\email{zhangbin@scu.edu.cn}

\date{\today}

\begin{abstract}
The purpose of this paper is to build an algebraic framework suited to regularise branched structures emanating from rooted forests and which   encodes the locality principle. This  is achieved by means of the  universal
properties in the \loc framework of properly decorated rooted forests. These universal properties are  then applied to derive the multivariate regularisation of integrals indexed by rooted forests. We study their renormalisation, along the lines of Kreimer's  toy model for Feynman integrals.
\end{abstract}

\subjclass[2010]{08A55, 16T99, 81T15, 32A20, 52B20}

\keywords{locality, renormalisation, algebraic Birkhoff factorisation, partial algebra, operated algebra, Hopf algebra, Rota-Baxter algebra, symbols, Kreimer's toy model}

\maketitle

\tableofcontents
\vfill\eject \noindent

\section{Introduction}
The purpose of this paper is to apply to the study of concrete models an algebraic formulation of the locality principle developed in~\cite{CGPZ1} in the context of the (co)algebraic approach to perturbative quantum field theory
initiated by Connes and Kreimer ~\cite{CK,K}. The concrete model we focus on is Kreimer's rooted forests toy model, which serves as an experimental model to study renormalisation and related renormalisation groups. Rooted forests are useful to generate branched structures, such as  branched integrals and branched sums, but also geometric rough paths.

To apply the  algebraic formulation of \cite{CGPZ1}, we need to investigate locality structures for rooted forests  and to build   locality algebra homomorphisms which encode the regularisation. We then study the renormalisation of branched integrals, and show an invariance property of renormalised values under a similarity transform of decorated rooted forests. The case of branched sums was studied in ~\cite{CGPZ2}, using some of the results of the present paper.

The paper starts with an abstract algebraic part (Sections~\ref
{sec:loc} and~\ref{sec:univpropwordsforests}), which establishes the universal property of   properly decorated rooted forests, and gives the general  algebraic framework to regularize branched structures. This is  followed by an application (Sections~\ref{sec:tree} and~\ref{sec:renCK}) to  branched integrals  in the context of Kreimer's toy model for Feynman integrals, which we revisit  using  a multivariate renormalisation approach.

So in order to renormalise  these a priori divergent integrals, we use a multivariate locality regularisation derived from the universal property of  properly decorated rooted forests instead of the usual univariate regularisation procedure ~\cite{CK} and the multivariate regularisation we used before~\cite {GPZ4}, with the locality setting playing a key role. This multivariate approach gives rise to renormalised branched integrals which inherit the \loc property from properly decorated rooted forests.

We next review the contents of the paper in some detail. The  tools  underlying the algebraic constructions in  this paper are  {\bf  operated   \loc sets and   algebras} discussed in Section \ref{sec:loc}. They are obtained as a combination  of  the notions of operated sets (resp. algebras) and  \loc sets (resp. algebras). Paragraph \ref{subsec:opstruc} is dedicated to operated structures, leaving \loc aside. Theorem \mref{thm:freeopmonoid} builds from rooted trees   decorated by a set $\Omega$, the free object in the category of $\Omega$-operated monoids. Alongside these results,  in Corollary \mref{co:initop}  we derive anew the fact that the space spanned by rooted forests decorated by $\Omega$ is  the
initial object in the category of $\Omega$-operated monoids~\cite{G1,Mo}.

The universal property is then implemented to lift maps from the decoration set $\Omega$ to maps defined on rooted forests $\calf_\Omega$ decorated by $\Omega$ when $\Omega $ is equipped with additional structures. In Proposition~\mref{prop:lift_phi_non_loc}  a map $\phi:\Omega\longrightarrow\Omega$  on the decorating
monoid (resp. algebra) $\Omega$ is lifted to a morphism  of monoids (resp. algebras) $\widehat\phi:\calf_\Omega\longrightarrow\Omega$ (resp. $\widehat\phi:\bfk \calf_\Omega\longrightarrow\Omega$ for the monoid algebra $\bfk\calf_\Omega$ with coefficients in a field $\bfk$) from rooted forests decorated by $\Omega$. The universal property is further discussed in a relative context in Proposition \mref{pp:liftrel} to lift a morphism $\phi:\Omega_1\longrightarrow\Omega_2$ of monoids (algebras) to a morphism
$\widehat\phi:\calf_{\Omega_1}\longrightarrow\calf_{\Omega_2}$ ($\widehat\phi:\bfk \calf_{\Omega_1}\longrightarrow \bfk\calf_{\Omega_2}$) of monoids (algebras).

In Paragraph \ref{subs:loc}, we recall the concept of locality and in Paragraph \ref{subsection:loc_op_structures} we introduce the notions of operated \loc structures, preceded by the notions of operated \loc sets in Definition~\mref{defn:loc_op_set},  of operated \loc
 semigroups, monoids and algebras.

Section \ref{sec:univpropwordsforests} deals with the universal property of properly decorated rooted forests.  In
Paragraph \ref{subsec:propdecforests}  we consider properly decorated rooted forests, to which we extend the universal properties of ordinary decorated rooted forests incorporating \loc.
 Theorem \mref{thm:univ_prop_trees_loc} is the \loc version of Corollary \mref{co:initop}, while Proposition~\ref{pp:liftedphi} and Corollary~\ref{thm:existencemapbranching} are the \loc version of Propositions~\mref{pp:liftrel}~and \mref{prop:lift_phi_non_loc} respectively. Corollary \ref{thm:existencemapbranching} was used in \cite{CGPZ2} to renormalised discrete sums attached to trees.

In Section \mref{sec:tree} we implement the {aforementioned} universal \loc properties on forests to the study of branched integrals in the context of Kreimer's toy model \cite{K}, to construct a regularisation of branched integrals in the locality setup.

For this purpose, we build  a locality morphism in several steps.
In Paragraph \ref{ss:ComplexPower}, we first introduce the  group ring $\calm[\call]$ over the algebra $\calm$ of meromorphic germs with linear poles generated by the additive monoid $\call$ of multivariate linear forms,  and equip it  with a locality algebra structure $\perp$. In  Paragraph \ref{subsec:localg}, we   view $(\calm[\call], \perp)$  as a locality algebra operated by $\call$ (Lemma \ref{lem:opL}).
Paragraph \ref{subsec:locmorforests} is dedicated to locality morphisms defined on forests decorated   by $(\call,\perp)$. The universal property  discussed in Theorem~\mref{thm:locinit} yields a $\calm[\call]$-valued locality algebra homomorphism ${\mathcal R}$ (Lemma \ref{lem:RFd}, Lemma \ref{lem:Rformula}) defined on $\call$-decorated forests, from which we build an $\call$-valued locality morphism ${ \mathcal  R}_1 $ on $\call$-decorated rooted forests. We moreover provide an explicit formula of its evaluation on properly decorated forests and show that it is a \loc algebra homomorphism in Proposition~\mref{thm:locrenorm}.

In Section \ref{sec:renCK}, following the renormalisation scheme by locality morphisms presented in Paragraph~\ref{subsec:renmorph}, we build  a renormalised map $ \pi_+\,  { \mathcal  R}_1$. It takes values in holomorphic germs
at zero which we evaluate   at zero  to build the renormalised map as a  \loc character  $\calr ^{\rm ren}$ on the \loc algebra of properly decorated rooted forests (Proposition-Definition \mref{thm:locrenormr}). An interesting
feature of this renormalisation process is that similar properly decorated rooted forests (Definition \ref{defn:similar}) have the same renormalised values, as shown in Paragraph~\mref{ss:similar}. This results from
Theorem~\ref {thm:reductionproc}  which  yields an algorithm to evaluate the renormalised value of any given branched integral and whose proof uses  computational techniques for multivariate meromorphic germs with linear poles developed in  \cite{GPZ3}.

In conclusion, this paper aims at the application of our locality principle, provides  new algebraic tools for multivariate regularisation and renormalisation associated with rooted tree structures, some of which were used in \cite{CGPZ2}. This multivariate renormalisation scheme is then implemented on a non-trivial example, namely Kreimer's toy model.

\section{\Loc operated sets and algebras}
\mlabel{sec:loc}

In this section we introduce the concepts of \loc operated set, semigroup and algebra, and construct the free objects in the corresponding categories. For this purpose we first revisit these concepts and constructions without the \loc conditions, and later adapt to the \loc setting.

\subsection{Operated structures and free objects}
\label{subsec:opstruc}
After recalling the concepts of operated structures, we  give the free construction for operated sets, operated semigroups and monoids, and operated algebras, successively.

\subsubsection{The notion of operated structures}

Let $\Omega$ be a set. Recall that an {\bf $\Omega$-operated set (resp. semigroup, monoid, vector space, unital algebra)} $(U,\beta)$
(resp. $(U,m_U,\beta)$, $(U,m_U,1_U,\beta)$, $(U,+,\beta)$, $(U,m_U,1_U, +,\beta)$)
(see \cite{G1, G2}) is a  set (resp. semigroup, monoid, vector space, unital algebra) $U$ together with a set of operators
$$\beta:=\beta_U^\Omega:=\{\beta^\omega:=\beta_U^\omega:U\to U\,|\, \omega\in \Omega\}$$
parameterised by $\Omega$. More precisely, this means that there is a map
\begin{equation}
\beta_U=\beta_U^\Omega:	\Omega\times U\longrightarrow  U, \quad
	(\omega, u)\longmapsto  \beta_U^\omega(u).
\mlabel{eq:alpha}
\end{equation}
 The maps   $\beta_U^\omega$ are often called {\bf grafting operators}.
In the case of a vector space or a unital algebra, we also assume that the operators $\beta_U^\omega$ are linear.

A homomorphism from an $\Omega$-operated object $(U,\rep_U)$ to an $\Omega$-operated object $(V,\rep_V)$ is a morphism $f:U\to V$ in the corresponding category without the $\Omega$-actions with the property
\begin{equation}
f(\rep_U^\omega(u)) = \rep_V^\omega (f(u))\quad \text{ for all } u\in U, \omega\in \Omega.
\label{eq:opcategory}
\end{equation}
We therefore have the category $\OS_{\Omega}$ (resp ${\bf OSG}_\Omega$, resp. $\OM_\Omega$, resp. $\OA_\Omega$) of $\Omega$-operated sets (resp. semigroups, resp. monoids, resp. algebras)

\subsubsection{Free operated monoids and algebras}
Let us consider free objects in the categories of various operated algebraic structures. For ``classical" algebraic structures without operators, such as associative and Lie algebras, the free objects have a generating set $X$.
For operated algebraic structures, we already have a set of operators. So we need to be careful in distinguishing the two sets: the set $\Omega$ of operators and the set $X$ of generators for a free object.
Even though for the applications in this paper, the generating set $X$ will be taken to be the empty set, we give a uniform approach with arbitrary generating sets which might be applied to broader contexts.

We next construct free objects in the category of $\Omega$-operated monoids and algebras, with the latter following naturally from the former.

Let us introduce some terminology. A {\bf rooted tree} (resp. {\bf planar rooted tree}) is a connected loopless graph (resp. planar graph), whose edges are oriented, thereby equipping the tree with a  partial order on its set of vertices with a unique minimal element, called the {\bf root}. A {\bf rooted forest} (resp. {\bf planar rooted forest}) is a concatenation of (resp. planar) rooted trees.
 Any  maximal vertex (i.e. one that has no element above it) for this partial order is called a {\bf leaf}. The set of leaves of a forest $F$ is denoted by $l(F)$.
For a rooted forest or planar rooted forest $F$, let $V(F)$ denote the set of vertices of $F$. A vertex of $F$ is called {\bf non-leaf} or {\bf interior}  if it is not a leaf vertex. For the tree with unique vertex, the vertex is taken to be a leaf vertex.

The regularised integrals parametrised by trees which arise in the renormalisation procedure considered later in this paper obey the commutativity property. Consequently, we will focus on (non planar) trees and take the forest concatenation to be commutative.

The following concept is a natural generalisation of notions from~\cite{G1,ZGG}. See also~\cite{GZ} for the planar case.

\begin{defn}
Let  $\Omega$ and $X$ be disjoint sets.
An  $(\Omega,X)$-{\bf decorated rooted forest} is a pair $(F, d)$, where $F$ is a rooted forest and $d: V(F)\to \Omega\cup X$  from the set $V(F)$ of vertices of $F$  {is such that $d(V(F)\setminus l(F))\subseteq\Omega$, i.e.}
whose restriction to the non-leaf vertices is in $\Omega$
(but whose restriction to the leaf vertices is in $\Omega\cup X$). Let $\calf _{\Omega,X}$ denote the set of $(\Omega,X)$-decorated rooted forests together with the ``empty tree" denoted $1$.
\mlabel{de:forest}
\end{defn}

For $\omega\in \Omega$, we define the grafting operator
$$B_+^\omega:\calf _{\Omega,X}\to \calf_{\Omega,X}$$
which sends any rooted forest $(F,d)$ to a rooted tree by adding to $(F,d)$ a new root decorated by $\omega$, and sends the empty tree $1$ to the tree
$\bullet_\omega$ with a single leave decorated by $\omega$. We set $B:=\{B_+^\omega\,|\,\omega\in \Omega\}$.

The number of vertices of a rooted forest, which we call the {\bf degree} of the rooted forest, provides a grading on  forests and decorated forests.
The following simple result  further follows from the analog statement in the undecorated case. See~\cite{G1} for example.

\begin{lem}
Let $i:X\to \calf_{\Omega,X}, x\mapsto \bullet_x, x\in X,$ be the canonical embedding of $X$ into $\calf_{\Omega,X}$.
An $(\Omega,X)$-decorated rooted forest $(F,d)$ is either $1$ or can be written in  an unique way as follows:
\begin{enumerate}
 {\item $F\in Im(i)$, that is $F=\bullet_x$ for some $x\in X$.}
\item	
If $F$ is a non-empty tree  {not in the image of $i$}, then $ (F,d)= B_+^{\omega}(\overline{F},\bar{d})$ for some $(\Omega,X)$-decorated rooted forest $(\overline{F},\bar{d})$ $($which might be $1$$)$ with $\deg(F,d)=\deg(\overline{F},\overline{d})+1$;
\item
If $F$ is not a tree, then $(F,d)=(F_1,d_1)\cdots (F_k,d_k), \ k\ge 2,$ with $(\Omega,X)$-rooted trees $F_i\not=1$ and with
$\deg(F,d)=\deg(F_1,d_1)+\cdots+\deg(F_n,d_n)$.
\end{enumerate}
\mlabel{lem:forestdecom}
\end{lem}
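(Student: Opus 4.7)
The plan is to reduce the statement to the undecorated analogue, which is standard (see~\cite{G1}), and then show that the decoration is uniquely determined by the combinatorial decomposition.

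First, I would argue existence by case analysis on $F$. If $F = 1$ there is nothing to show. Otherwise, the undecorated forest $F$ is either a single tree or a concatenation of trees. When $F$ has at least two tree components, write $F = F_1 \cdots F_k$ ($k \ge 2$) in the unique way provided by the undecorated structure theorem and let $d_i$ be the restriction of $d$ to $V(F_i)$; since each non-leaf vertex of $F_i$ is a non-leaf vertex of $F$, the constraint $d_i(V(F_i)\setminus l(F_i)) \subseteq \Omega$ is inherited from $d$, so each $(F_i,d_i)$ is a decorated rooted tree and we land in case (iii). When $F$ is a single tree reduced to one vertex, that vertex is a leaf and $d$ assigns to it an element of $\Omega \cup X$: if $d \in X$ we are in case (i), and if $d = \omega \in \Omega$ we have $(F,d) = B_+^\omega(1)$, which is case (ii) with $(\overline{F},\overline{d}) = 1$. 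When $F$ is a single tree with at least two vertices, the root is non-leaf so $d$ sends it to some $\omega \in \Omega$; removing the root produces, in the unique undecorated way, a non-empty forest $\overline{F}$, and restricting $d$ gives a decoration $\overline{d}$ still satisfying the non-leaf constraint (interior vertices of $\overline{F}$ are interior in $F$). Then $(F,d) = B_+^\omega(\overline{F}, \overline{d})$ with the degree identity, giving case (ii).

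For uniqueness, I would first observe that the three cases are mutually exclusive: (i) is characterised by $F$ being a single vertex with decoration in $X$; (ii) by $F$ being a tree whose root has decoration in $\Omega$; and (iii) by $F$ having more than one tree component. Within case (iii), uniqueness of the $(F_i,d_i)$ (up to permutation, since concatenation is commutative) follows from the analogous uniqueness for undecorated forests together with the fact that each $d_i$ is forced to be the restriction of $d$ to $V(F_i)$. Within case (ii), the decorated pair $(\overline{F},\overline{d})$ is recovered by deleting the root of $F$ and restricting $d$; uniqueness of $\omega$ is immediate since $\omega = d(\text{root})$, and uniqueness of $(\overline{F},\overline{d})$ reduces to the undecorated uniqueness.

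The degree identities $\deg(F,d) = \deg(\overline{F},\overline{d}) + 1$ in case (ii) and $\deg(F,d) = \sum_i \deg(F_i,d_i)$ in case (iii) are then direct consequences of the fact that degree counts vertices, and the decomposition of $V(F)$ is disjoint in both cases.

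The only subtlety, and the point where I would spend the most care, is verifying that the leaf-versus-non-leaf status of each vertex is preserved under the decompositions, so that the decoration constraint $d(V(F)\setminus l(F)) \subseteq \Omega$ is inherited by $\overline{d}$ and by each $d_i$. This is straightforward because in case (iii) the components share no vertices with each other, and in case (ii) the only vertex whose leaf status can change upon amputation of the root is one that was adjacent to the root; but such a vertex is non-leaf in $\overline{F}$ only if it had further descendants in $F$, in which case it was already non-leaf in $F$, so its $d$-value lies in $\Omega$ as required.
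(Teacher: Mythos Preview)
Your proposal is correct and follows exactly the approach indicated in the paper, which does not give a detailed proof but simply remarks that the result ``follows from the analog statement in the undecorated case'' with a reference to~\cite{G1}. You have spelled out in detail what the paper leaves implicit, namely the reduction to the undecorated decomposition together with the observation that the decoration constraints are inherited under restriction.
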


The following results for planar rooted forests have been obtained in~\cite{ZGG} when $|X|=1$ and in~\cite{GZ} for general $X$. The results here, for rooted forests, follows from the same argument.

\begin{thm} \mlabel{thm:obj_free_non_loc}
Let $\Omega$ and $X$ be sets with $\Omega$ non-empty. Let $\bfk$ be a field.
\begin{enumerate}
\item
The operated set $(\calf_{\Omega,X},B)$, with the forest concatenation, is an $\Omega$-operated {commutative} monoid with unit $1$. The linear span $(\bfk\calf_{\Omega,X},B)$ is an $\Omega$-operated unital {commutative} $\bfk$-algebra.
\mlabel{it:opm}
\item
The operated monoid $\calf_{\Omega,X}$ together with the map
$$ i:X\to \calf_{\Omega,X}, x\mapsto \bullet_x, x\in X,$$
is the free object on $X$ in the category of $\Omega$-operated commutative monoids.
More precisely, for any $\Omega$-operated commutative monoid $(U,\beta)$ and map $f: X\to U$, there is unique morphism $\free{f}: \calf_{\Omega,X} \to U$ of $\Omega$-operated commutative monoids such that $\free{f}\circ i=f$.
\mlabel{it:freeopm}
\item
The operated algebra $\bfk\,\calf_{\Omega,X}$ is the free object on $X$ in the category of $\Omega$-operated commutative algebras, characterised by a universal property similar to the previous one.
\mlabel{it:freeopa}
\end{enumerate}
\mlabel{thm:freeopmonoid}
\end{thm}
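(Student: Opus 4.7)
The plan is to build on the uniqueness of decomposition given by Lemma \ref{lem:forestdecom}, mirroring the proof strategy used in \cite{ZGG,GZ} for the planar case but replacing the planar concatenation of trees with its commutative analogue.

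For part \emph{(i)}, I would first verify that the forest concatenation, viewed as a commutative monoid operation on the free commutative monoid generated by trees, is well-defined, associative, commutative, and has the empty tree $1$ as unit; this is immediate from the combinatorial description of $\calf_{\Omega,X}$. Each $B_+^\omega$ is a well-defined map $\calf_{\Omega,X}\to\calf_{\Omega,X}$ by construction (adjoining a new root decorated by $\omega$, with the convention $B_+^\omega(1)=\bullet_\omega$), so $(\calf_{\Omega,X},B)$ is an $\Omega$-operated commutative monoid. Extension to $(\bfk\calf_{\Omega,X},B)$ is then by bilinear extension of the product and linear extension of each $B_+^\omega$.

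For the universal property in \emph{(ii)}, given an $\Omega$-operated commutative monoid $(U,\beta)$ and a map $f:X\to U$, I would define $\bar f:\calf_{\Omega,X}\to U$ by recursion on the degree, using the trichotomy of Lemma \ref{lem:forestdecom}:
\begin{align*}
\bar f(1) &= 1_U,\\
\bar f(\bullet_x) &= f(x) \text{ for } x\in X,\\
\bar f(B_+^\omega(\overline F,\bar d)) &= \beta^\omega(\bar f(\overline F,\bar d)),\\
\bar f((F_1,d_1)\cdots(F_k,d_k)) &= \bar f(F_1,d_1)\cdots \bar f(F_k,d_k).
\end{align*}
The uniqueness clause of Lemma \ref{lem:forestdecom} ensures that each $(F,d)\neq 1$ falls into exactly one of the three recursive cases, and strict decrease of the degree guarantees termination. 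Commutativity of $U$ is crucial to make the product formula independent of how a non-tree forest is listed as a commutative product of its tree components. A straightforward induction on degree then shows that $\bar f$ is a monoid morphism (case of products), that $\bar f\circ B_+^\omega=\beta^\omega\circ\bar f$ (case of non-empty trees), and that $\bar f\circ i=f$. Uniqueness of $\bar f$ follows by the same induction: any morphism $g$ of $\Omega$-operated commutative monoids with $g\circ i=f$ must satisfy the four recursive equations defining $\bar f$, hence $g=\bar f$.

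For part \emph{(iii)}, I would use the standard passage from monoids to monoid algebras. Given an $\Omega$-operated commutative $\bfk$-algebra $(A,\beta_A)$ and a map $f:X\to A$, apply \emph{(ii)} to the underlying commutative monoid of $A$ to get a unique operated monoid morphism $\bar f:\calf_{\Omega,X}\to A$, and then extend $\bar f$ by $\bfk$-linearity to $\bfk\calf_{\Omega,X}\to A$; this extension is an algebra morphism commuting with the grafting operators because both properties hold on the monoid basis. Uniqueness in the algebra setting reduces to uniqueness on the basis $\calf_{\Omega,X}$, already proved in \emph{(ii)}. The main subtlety, and the only place requiring genuine care, is the well-definedness issue in the recursion of \emph{(ii)} coming from the non-planar concatenation: one must check that the defining formula does not depend on the chosen ordering of the tree factors, which is exactly where the commutativity assumption on $U$ enters.
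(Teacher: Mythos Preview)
Your proposal is correct and matches the paper's approach: the paper itself gives no proof here, instead noting that the planar analogue was established in \cite{ZGG,GZ} and that ``the results here, for rooted forests, follows from the same argument.'' Your recursive construction of $\bar f$ via the trichotomy of Lemma~\ref{lem:forestdecom}, with the well-definedness of the product clause resting on the commutativity of $U$, is precisely that same argument adapted to the non-planar setting.
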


\subsubsection{Initial objects and the relative extensions}

We will be most interested in a special case of $(\Omega,X)$-decorated forests, namely when $X=\emptyset$ is the empty set. Since there is unique map  {(known as the empty map)} from $\emptyset$ to any set,   the subsequent
statement follows directly from
Theorem~\mref{thm:freeopmonoid}.

\begin{coro} Let a set $\Omega$ be given.
\begin{enumerate}
\item
The set $\calf_\Omega:=\calf_{\Omega,\emptyset}$ of $\Omega$-decorated rooted forests with all vertices decorated by $\Omega$ is the initial object in the category of $\Omega$-operated commutative monoids.
\mlabel{it:initmon}
\item
The space $\bfk\,\calf_\Omega$ of $\Omega$-decorated rooted forests is the initial object in the category of $\Omega$-operated commutative  {algebras}.
\mlabel{it:initalg}
\end{enumerate}
\mlabel{co:initop}
\end{coro}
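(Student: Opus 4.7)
The plan is to obtain both statements as immediate specialisations of Theorem~\ref{thm:freeopmonoid} to the case $X=\emptyset$, by observing that the universal property of a free object on the empty set is precisely the defining property of an initial object.

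Concretely, I would first recall that for the empty set there is a unique (empty) map $i:\emptyset\to\calf_\Omega$, and for any $\Omega$-operated commutative monoid $(U,\beta)$ there is also a unique map $f:\emptyset\to U$. Theorem~\ref{thm:freeopmonoid}(\emph{ii}), applied with $X=\emptyset$, then yields a unique morphism $\bar f:\calf_\Omega\to U$ of $\Omega$-operated commutative monoids with $\bar f\circ i=f$. The compatibility condition $\bar f\circ i=f$ is vacuous here, since both sides are maps out of $\emptyset$ and hence automatically equal. Thus $\bar f$ is simply the unique morphism $\calf_\Omega\to U$ in the category $\OM_\Omega$ of $\Omega$-operated commutative monoids, which proves that $\calf_\Omega$ is initial in $\OM_\Omega$ and establishes part~(\emph{i}).

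For part~(\emph{ii}), I would argue identically, applying instead Theorem~\ref{thm:freeopmonoid}(\emph{iii}) to the category $\OA_\Omega$ of $\Omega$-operated commutative $\bfk$-algebras: the empty map into any $\Omega$-operated commutative $\bfk$-algebra $(A,\beta)$ lifts uniquely through $i:\emptyset\to\bfk\calf_\Omega$ to a morphism $\bar f:\bfk\calf_\Omega\to A$ of $\Omega$-operated commutative $\bfk$-algebras, and uniqueness together with the vacuity of the compatibility condition identifies this $\bar f$ with the unique morphism $\bfk\calf_\Omega\to A$ in $\OA_\Omega$.

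There is no real obstacle in this proof: the only subtlety worth flagging is the convention that $\calf_\Omega$ contains the empty tree $1$, which serves as the monoid unit and ensures that $\calf_\Omega$ is nonempty and that the required unit-preserving morphism to $U$ exists and is unique, consistently with Lemma~\ref{lem:forestdecom}. With the terminology fixed as in the excerpt, the corollary is a direct consequence and no further argument is needed.
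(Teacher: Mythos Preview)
Your proposal is correct and follows exactly the paper's approach: the paper states that the corollary ``follows directly from Theorem~\mref{thm:freeopmonoid}'' since there is a unique (empty) map from $\emptyset$ to any set, which is precisely your argument.
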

\begin{rk}
 This result was already shown in aforementioned papers~\cite{G1,Mo} by recursively constructing the morphism of $\Omega$-operated algebras. In contrast, our
 proof is a simple application of Theorem \mref{thm:obj_free_non_loc}.
\end{rk}

Let $\Omega$ be a monoid (resp. an algebra). Let $\phi:\Omega\to \Omega$ be a map (resp. linear map). Then $\Omega$ becomes an $\Omega$-operated structure with the operators

\begin{equation}
 \beta_\phi^\omega: \Omega \longrightarrow \Omega, \quad
\omega' \longmapsto \phi(\omega \omega'), \quad \omega, \omega'\in \Omega.
\mlabel{eq:selfop}
\end{equation}

It then follows from Corollary~\mref{co:initop} that

\begin{prop} \mlabel{prop:lift_phi_non_loc} Let $\Omega$ be a commutative monoid $($resp. algebra$)$. A  $($resp. linear$)$ map $\phi:\Omega\to \Omega$  induces a unique homomorphism
$$\widehat{\phi}:  \calf_\Omega \longrightarrow \Omega \quad
(\text{resp. }  \bfk \calf_\Omega \longrightarrow \Omega)$$
of commutative monoids $($resp. algebras$)$ such that
\begin{equation}
\widehat{\phi}(B_+^\omega(F))=\phi\big(\omega(\widehat{\phi}(F))\big).
\mlabel{eq:indphi}
\end{equation}
\mlabel{pp:indphi}
\end{prop}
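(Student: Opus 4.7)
The plan is to apply Corollary~\ref{co:initop} directly after equipping $\Omega$ itself with a suitable $\Omega$-operated structure induced by $\phi$. The setup has already been provided: by Equation~\eqref{eq:selfop}, for each $\omega\in\Omega$ one defines $\beta_\phi^\omega\colon \Omega\to\Omega$ by $\omega'\mapsto \phi(\omega\omega')$. In the monoid case, this is just a family of maps parametrised by $\Omega$, so $(\Omega,\beta_\phi)$ is an $\Omega$-operated commutative monoid in the sense of Section~\ref{subsec:opstruc}. In the algebra case, one further verifies that each $\beta_\phi^\omega$ is $\bfk$-linear, which follows from the linearity of left-multiplication by $\omega$ and of $\phi$, so that $(\Omega,\beta_\phi)$ is an $\Omega$-operated commutative $\bfk$-algebra.

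Next I would invoke Corollary~\ref{co:initop}: $\calf_\Omega$ (respectively $\bfk\calf_\Omega$) is the initial object in the category of $\Omega$-operated commutative monoids (respectively algebras). Hence there is a unique morphism of $\Omega$-operated commutative monoids (respectively algebras)
\[
\widehat{\phi}\colon \calf_\Omega \longrightarrow \Omega
\qquad \bigl(\text{resp.}\ \widehat{\phi}\colon \bfk\calf_\Omega \longrightarrow \Omega\bigr).
\]
This gives both existence and uniqueness in one stroke, provided we can show that the desired identity~\eqref{eq:indphi} is exactly the intertwining condition~\eqref{eq:opcategory} for this choice of operated structures.

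To verify this, I would unpack~\eqref{eq:opcategory} for the grafting operators $B_+^\omega$ on $\calf_\Omega$ and the operators $\beta_\phi^\omega$ on $\Omega$: being a morphism of $\Omega$-operated objects means
\[
\widehat{\phi}\bigl(B_+^\omega(F)\bigr) \;=\; \beta_\phi^\omega\bigl(\widehat{\phi}(F)\bigr) \;=\; \phi\bigl(\omega\,\widehat{\phi}(F)\bigr),
\]
for every $\omega\in\Omega$ and $F\in\calf_\Omega$, which is precisely Equation~\eqref{eq:indphi}. Thus any morphism $\widehat{\phi}$ that intertwines the operators satisfies~\eqref{eq:indphi}, and conversely any morphism of commutative monoids (or algebras) satisfying~\eqref{eq:indphi} intertwines $B_+^\omega$ with $\beta_\phi^\omega$, hence is a morphism of $\Omega$-operated objects and is therefore the unique one provided by initiality.

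There is no real obstacle here; the only point requiring a moment's care is recognising that the defining relation~\eqref{eq:indphi} is nothing other than the compatibility of $\widehat{\phi}$ with the grafting operators once $\Omega$ is viewed as $\Omega$-operated via $\beta_\phi$. The argument therefore reduces entirely to the initial-object property established in Corollary~\ref{co:initop}, which itself rests on Theorem~\ref{thm:freeopmonoid} applied with $X=\emptyset$.
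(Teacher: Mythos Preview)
Your proposal is correct and follows essentially the same approach as the paper's own proof: equip $\Omega$ with the $\Omega$-operated structure $\beta_\phi$ from Eq.~\eqref{eq:selfop}, invoke the initial-object property of Corollary~\ref{co:initop}, and observe that Eq.~\eqref{eq:indphi} is precisely the operator-compatibility condition~\eqref{eq:opcategory}. Your write-up is simply more explicit about unpacking the intertwining relation and checking linearity in the algebra case.
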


\begin{proof}
As we observed before the proposition, $(\Omega,\beta_\phi:=\{\beta_\phi^\omega\,|\,\omega\in \Omega\})$ is an $\Omega$-operated commutative monoid (resp. algebra). The existence and uniqueness of a  homomorphism
$\widehat{\phi}: \calf_\Omega \to \Omega$ of $\Omega$-operated monoids (resp. algebras) then follows from Corollary~\mref{co:initop}, and Eq.~(\mref{eq:indphi}) boils down to the compatibility condition for the $\Omega$-operations.
\end{proof}

We next extend the universal property of $\calf_\Omega$ to the relative context.
\begin{prop}
Let $\phi:\Omega_1\to \Omega_2$ be a map and let $(U,\beta)$ be an $\Omega_2$-operated commutative monoid.
Then $U$ has an $\Omega_1$-action induced from $\phi$. Further $\phi$ lifts uniquely to a homomorphism of $\Omega_1$-operated monoids as defined in Eq.~\eqref{eq:opcategory}:
$$ \phi^\sharp: \calf_{\Omega_1} \to U.$$
More precisely $\phi^\sharp$ is characterised by the properties
				  \begin{eqnarray}
					 & \phi^\sharp(1) = 1_{U}, \mlabel{eq:relopid} \\
					 &    \phi^\sharp ((F_1,d_1)\cdots (F_n,d_n)) = \phi^\sharp(F_1,d_1)\cdots\phi^\sharp(F_n,d_n),\mlabel{eq:relopprod}\\
					 & \phi^\sharp \left(  B_+^\omega(F,d)\right) =  \beta_{U,+}^{\phi(\omega)}\left(\phi^\sharp(F,d)\right) \mlabel{eq:relopop}
				 \end{eqnarray}
for $\Omega_1$-decorated rooted forests $(F,d), (F_1,d_1), \cdots, (F_n, d_n) $    and $\omega\in \Omega_1$.

The same applies when $\Omega_1, \Omega_2$ are monoids $($resp. algebras$)$, $U$ is an $\Omega_1$-operated monoid $($resp. algebra$)$ and $\phi: \Omega_1\to \Omega_2$ is a map $($resp. linear map$)$.
\mlabel{pp:liftrel}
\end{prop}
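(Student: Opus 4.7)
The plan is to reduce Proposition~\mref{pp:liftrel} to an immediate application of Corollary~\mref{co:initop} by pulling back the $\Omega_2$-operated structure on $U$ along $\phi$ to obtain an $\Omega_1$-operated structure, and then invoking initiality of $\calf_{\Omega_1}$ in the category of $\Omega_1$-operated commutative monoids.

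First I would make precise the induced $\Omega_1$-action on $U$. For each $\omega\in\Omega_1$, define $\tilde\beta_U^\omega:=\beta_U^{\phi(\omega)}:U\to U$, and set $\tilde\beta_U:=\{\tilde\beta_U^\omega\mid \omega\in\Omega_1\}$. Then $(U,m_U,1_U,\tilde\beta_U)$ is an $\Omega_1$-operated commutative monoid (no compatibility between $\phi$ and the monoid structures of $\Omega_1,\Omega_2$ is needed at this stage, since an $\Omega$-operated structure on a monoid only requires a family of set-maps indexed by $\Omega$).

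Second, by Corollary~\mref{co:initop}\,\mref{it:initmon}, $\calf_{\Omega_1}$ is the initial object in the category $\OM_{\Omega_1}$ of $\Omega_1$-operated commutative monoids. Applied to the object $(U,\tilde\beta_U)$ just constructed, this yields a unique morphism $\phi^\sharp:\calf_{\Omega_1}\to U$ of $\Omega_1$-operated commutative monoids. Properties~\eqref{eq:relopid} and~\eqref{eq:relopprod} are exactly the statement that $\phi^\sharp$ is a unital monoid homomorphism (noting $\phi^\sharp(1)=1_U$ since $1$ is the monoidal unit of $\calf_{\Omega_1}$). Property~\eqref{eq:relopop} unpacks the compatibility condition~\eqref{eq:opcategory}: for all $\omega\in\Omega_1$ and $(F,d)\in\calf_{\Omega_1}$,
\[
\phi^\sharp\bigl(B_+^\omega(F,d)\bigr)=\tilde\beta_U^\omega\bigl(\phi^\sharp(F,d)\bigr)=\beta_U^{\phi(\omega)}\bigl(\phi^\sharp(F,d)\bigr),
\]
which, up to notation, is~\eqref{eq:relopop}. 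Uniqueness of $\phi^\sharp$ is likewise the uniqueness from initiality.

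For the monoid (resp.\ algebra) case, I would repeat the same argument verbatim, with one extra observation at the beginning: when $\Omega_1,\Omega_2$ are monoids (resp.\ algebras) and $U$ is an $\Omega_1$-operated monoid (resp.\ algebra) equipped with the additional grafting operators $\tilde\beta_U^\omega$ defined via $\phi$, linearity of $\tilde\beta_U^\omega$ (in the algebra case) follows from linearity of $\beta_U^{\phi(\omega)}$; then apply Corollary~\mref{co:initop}\,\mref{it:initalg} instead of \mref{it:initmon} to obtain $\phi^\sharp:\bfk\calf_{\Omega_1}\to U$. There is no real obstacle here: the entire content of the proposition is the observation that the functor $\phi^*:\OM_{\Omega_2}\to \OM_{\Omega_1}$ (restriction of operators along $\phi$) is well-defined, after which initiality does all the work. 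The only place one must be mildly careful is to distinguish between a \emph{map} $\phi$ of sets and a \emph{morphism} of monoids/algebras: the pullback construction of $\tilde\beta_U$ only uses $\phi$ as a set map, so no algebraic compatibility of $\phi$ is required for the universal factorisation to exist.
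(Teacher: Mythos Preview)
Your proposal is correct and follows essentially the same approach as the paper: pull back the $\Omega_2$-action on $U$ along $\phi$ to obtain an $\Omega_1$-operated commutative monoid, then invoke the initiality of $\calf_{\Omega_1}$ from Corollary~\mref{co:initop}. The paper's proof is slightly terser, but the argument and the key definition $\tilde\beta^\omega(u):=\beta^{\phi(\omega)}(u)$ are identical.
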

Note that the proposition applies when $\Omega_1=\Omega_2$, whether or not $\phi$ is the identity map. The case $\Omega_1=\Omega_2$ and $\phi=\Id$ will be of interest later (Corollary \mref{thm:existencemapbranching}).
\begin{proof}
The map $\phi:\Omega_1\to \Omega_2$ induces an $\Omega_1$-operated monoid structure $\tilde{\beta}:=\{\tilde{\beta}^\omega\,|\, \omega \in \Omega_1\}$ on $U$ by pull-back
$$ \tilde{\beta}^\omega(u):= \beta^{\phi(\omega)}(u), \quad \text{ for all } \omega \in \Omega_1, u\in U.$$
The universal property of $\calf_{\Omega_1}$ in Corollary~\mref{co:initop} then yields a unique homomorphism
$\phi^\sharp: \calf_{\Omega_1} \to U$
of $\Omega_1$-operated commutative monoids as stated in the proposition.
\end{proof}

\subsection{\Loc sets and algebras}
\label{subs:loc}
We first recall the concept of a \loc set introduced in \cite{CGPZ1}.
\begin{defn}
A {\bf \loc set} is a  couple $(X, \top)$ where $X$ is a set and $ \top\subseteq X\times X$ is a binary {\em symmetric} relation on $X$. For $x_1, x_2\in X$,
denote $x_1\top x_2$ if $(x_1,x_2)\in \top$. We also use the alternative notations $X\times_\top X$ and {$X^{_\top 2}$} for $\top$. \end{defn}

In general, for any subset $U\subset X$, let
 			\begin{equation*}
 			U^\top:=\{x\in X\,|\, (x,U)\subseteq \top \}  			 \end{equation*}
denote the {\bf  {polar} subset} of $U$.
For integers $k\geq 2$, denote
$$ X^{_\top k}:=X\times_\top \cdots _\top X:= \{(x_1,\cdots,x_k)\in X^k\,|\, x_i \top x_j \text{ for all } 1\leq i<j\leq k \}.$$

 We call two subsets $A$ and $B$   of a \loc set $(X,\top )$  {\bf independent} if $A\times B\subset \top.$
Thus a \loc relation $\top$ on a set $X$ induces  a relation on the power set $\mathcal{P}(X)$, which we denote by the same symbol $\top$

Let $(X,\top )$ be a \loc set. Then the relation \loc induced on $\mathcal{P}(X)$ is symmetric, so that $(\mathcal{P}(X),\top ) $ is a \loc set~\mcite{CGPZ1}. Furthermore,
  $\mathcal{P}(X  )^{\top }=\mathcal{P}(X ^{\top })$, as can be checked directly.

Recall that two maps  $\Phi,\Psi:\left( X,\top _X\right)\to \left(Y, \top_Y\right)$ are called {\bf independent}  and we write $\Phi\top \Psi$ if $(\Phi\times \Psi)(\top _X) \subseteq \top _Y$, that is,
${x _1\top _X x _2}$ implies $\Phi(x_1)\top _Y\Psi\left(  x_2\right)$ for $x_1,x_2\in X$.
A map  $\Phi:(X,\top _X) \longrightarrow (Y,\top _Y )$   is called a {\bf \loc map} if $\Phi\top\Phi$. Given two \loc sets $(X,\top_X)$ and $(Y,\top_Y)$, let  $\calm or_\top(X,Y)$ denote the set of
			\loc maps from $X$ to $Y$.

Here are some examples of \loc sets used later on.

\begin{ex}
\begin{enumerate}
\item
 		 	The {\bf power set} ${\mathcal P}(S)$ of any set $S$  can be equipped with the independence relation:
 		 	\begin{equation*}
 		 	A \top B\Longleftrightarrow A\cap B=\emptyset,
 		 	\end{equation*}
 		 	so that $\left({\mathcal P}(S),\top\right)$ is a \loc set with ${\mathcal P}(S)^{\top}=\{\emptyset\}$.
\mlabel{ex:powerset}
\item A \loc structure on decorated forests can be deduced from a \loc structure on the set of decorations.
Indeed, given a \loc set $(\Omega,\, \top _\Omega )$, the set $\calf _{\Omega}$ of ${\Omega}$-decorated rooted forests can be equipped with the following
independence relation induced by that of $\calp (\Omega)$:
\begin{equation}
 					(F_1, d _1)\, \top _{\calf_\Omega}\,(F_2, d _2)\Longleftrightarrow d_1(V(F_1))\, \top _\Omega\, d _2(V(F_2))
 \mlabel{eq:IndForestss}
 \end{equation}
 				Then $(\calf _\Omega,\top _{\calf_\Omega})$ is a \loc set. Let $\bfk\calf _\Omega$ be its linear span, with the induced \loc relation denoted by $\top _{\bfk\,\calf _\Omega}$.
\mlabel{ex:DecoratedForest}
\mlabel{ex:DecoratedPlanarForest}
\end{enumerate}
\mlabel{ex:locsets}
\end{ex}

We also recall the concepts of \loc monoids and \loc algebras.

\begin{defn} \mlabel{defn:lsg}
\begin{enumerate}
\item
A {\bf \loc semigroup} is a \loc set $(G,\top)$ together with a product law defined on $\top$:
$$ m_G: G\times_\top G\longrightarrow  G, (x,y)\mapsto  x\cdot y=m_G(x,y), \quad \text{for all } (x,y)\in\top_G $$
for which the product is compatible with the \loc relation on $G$, more precisely
\begin{equation}\tforall U\subseteq G, \quad  m_G((U^\top\times U^\top)\cap\top)\subset U^\top
\mlabel{eq:semigrouploc}
\end{equation}
and such that
\begin{equation}
(x\cdot y) \cdot z = x\cdot (y\cdot z) \text{ for all }(x,y,z)\in G\times_\top G\times_\top G. 	
\mlabel{eq:asso}
\end{equation}
Note that, because of the condition \eqref{eq:semigrouploc}, both sides of Eq.~(\mref{eq:asso}) are well-defined for any triple in the given subset.
\mlabel{it:lsg}
\item
A  \loc semigroup is {\bf commutative} if $m_G(x,y)=m_G(y,x)$ for $(x,y)\in \top$.
\item
A  {\bf \loc   monoid} is a \loc   semigroup $(G,\top, m_G)$ together with a {\bf unit element} $1_G\in G$ given by the defining property
\[\{1_G\}^\top=G\quad \text{ and }\quad m_G(x, 1_G)= m_G(1_G,x)=x\quad \tforall  x\in G.\]
We denote the \loc  monoid by $(G,\top,m_G, 1_G)$.
\mlabel{defn:partial monoid}
\item A {\bf \loc group} is a \loc monoid  $(G,\top, m_G, 1_G)$ equipped with a \loc map
$$
s: G\longrightarrow  G, \quad
g\longmapsto  s(g), \text{ for all } g\in G,
$$
such that $(g, s(g))\in \top$ and $m_G(g, s(g))= m_G(s(g), g)=1_G$ for any $g\in G$.
\mlabel{it:lg}
\item
A {\bf \loc vector space} is a vector space $V$ equipped with a \loc relation $\top$ which is compatible with the linear structure on $V$ in the sense that, for any  subset $X$ of $V$, $X^\top$ is a linear subspace of $V$.
\item
Given two vector spaces $V$ and $W$, we equip their cartesian product $V\times W$ with a locality structure  $\top:=V\times_\top W \subseteq V\times W$ compatible with the vector space structure on the cartesian product, that is, for $X\subseteq V$ and $Y\subseteq W$, the subsets $X^\top$ and $^\top Y$ are subspaces of $W$ and $V$ respectively. A map $f: V\times_\top W \to U$ to a vector space $U$ is called {\bf \loc bilinear} if
$$f(v_1+v_2,w_1)=f(v_1,w_1)+f(v_2,w_1), \quad f(v_1,w_1+w_2)=f(v_1,w_1)+f(v_1,w_2),$$
$$f(kv_1,w_1)=kf(v_1,w_1), \quad
f(v_1,kw_1)=kf(v_1,w_1)$$
for all $v_1,v_2\in V$, $w_1,w_2\in W$ and $k\in  K $ for which all the pairs arising in the above expressions are in $V\times_\top W$.
\item A {\bf nonunitary \loc  algebra} over $K$ is a \loc vector space $(A,\top)$ over $K$ together with a \loc bilinear map
	$$ m_A: A\times_\top A \to A$$ such that
	$(A,\top, m_A)$ is a \loc semigroup.
	\item A {\bf \loc algebra} is a nonunitary \loc algebra $(A,\top, m_A)$ together with a {\bf unit} $1_A:K\to A$ in the sense that
	$(A,\top, m_A, 1_A)$ is a \loc monoid. We shall omit explicitly  mentioning the unit $1_A$ and the product $m_A$ unless doing so generates ambiguity.
\end{enumerate}
\end{defn}

Here is a straightforward consequence of the  above definition.
\begin{lem}
Let $(G,m_G,\top_G)$ be a \loc semigroup. Let $k\geq  2$ and $1\leq i\leq k$. For $(x_1,\cdots,x_k)\in G^{_\top k}$ we have
\begin{enumerate}
\item
$(\id_G^{i-1}\times m_G\times \id_G^{k-i-1})(x_1,\cdots,x_k)\in G^{_\top (k-1)}.$
\item
$(x_1\cdot \ldots \cdot x_i, x_{i+1}\cdot \ldots \cdot x_k)\in G\times_\top G,$.
\end{enumerate}
\mlabel{lem:locprod}
\end{lem}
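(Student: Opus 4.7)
The plan is to derive both statements from the single compatibility axiom \eqref{eq:semigrouploc}, which says that multiplying two elements of $U^\top$ that are themselves independent keeps us inside $U^\top$. The core idea is to choose $U$ to be the set of the remaining coordinates; then $x_i$ and $x_{i+1}$ both lie in $U^\top$ by the assumption $(x_1,\ldots,x_k)\in G^{_\top k}$, and they are independent of each other, so $x_i\cdot x_{i+1}\in U^\top$. Part (ii) will then follow by iterating part (i) and invoking associativity \eqref{eq:asso}.

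More precisely, for part (i), I would fix $i$ and set $U:=\{x_j\mid j\neq i, i+1\}$. Since $(x_1,\ldots,x_k)\in G^{_\top k}$, for every $j\notin\{i,i+1\}$ one has $x_i\top x_j$ and $x_{i+1}\top x_j$, which is to say $x_i,x_{i+1}\in U^\top$. Moreover $x_i\top x_{i+1}$, so $(x_i,x_{i+1})\in (U^\top\times U^\top)\cap\top$. The semigroup compatibility \eqref{eq:semigrouploc} then yields $x_i\cdot x_{i+1}\in U^\top$, meaning $(x_i\cdot x_{i+1})\top x_j$ for every $j\notin\{i,i+1\}$. Combined with the already-known independence relations among the $x_j$'s for $j\notin\{i,i+1\}$, this gives exactly
\[
(x_1,\ldots,x_{i-1},x_i\cdot x_{i+1},x_{i+2},\ldots,x_k)\in G^{_\top(k-1)},
\]
which is the content of part (i).

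For part (ii), I would proceed by iterating part (i). Starting from $(x_1,\ldots,x_k)\in G^{_\top k}$, apply part (i) at position $1$ to collapse $(x_1,x_2)$ to $x_1\cdot x_2$, producing an element of $G^{_\top(k-1)}$; repeat at position $1$ a total of $i-1$ times (each step is legal because, after each contraction, the resulting tuple lies in $G^{_\top(k-j)}$, so part (i) applies again). This leaves the tuple $(x_1\cdots x_i, x_{i+1}, x_{i+2}, \ldots, x_k)\in G^{_\top(k-i+1)}$, where the iterated product is well-defined and independent of bracketing by associativity \eqref{eq:asso} together with Lemma~\ref{lem:locprod}(i) itself, which guarantees that all intermediate triples sit in $G\times_\top G\times_\top G$ at each step. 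Now apply part (i) to position $2$ of this tuple $k-i-1$ times to collapse $(x_{i+1},\ldots,x_k)$ into $x_{i+1}\cdots x_k$, yielding
\[
(x_1\cdots x_i,\; x_{i+1}\cdots x_k)\in G\times_\top G,
\]
which is part (ii).

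The only mild subtlety, which I would highlight briefly in the write-up, is the well-definedness and bracketing-independence of the long products $x_1\cdots x_i$ and $x_{i+1}\cdots x_k$: this is automatic since part (i) guarantees that at every intermediate stage the tuple under consideration lies in a suitable $G^{_\top m}$, so every required binary product is defined, and the various associations coincide by \eqref{eq:asso}. No substantial obstacle is expected; the entire argument is a direct and essentially two-line application of the axiom \eqref{eq:semigrouploc} followed by a straightforward induction.
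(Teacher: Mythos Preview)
Your proof is correct and is precisely the direct verification from axiom~\eqref{eq:semigrouploc} that the paper has in mind; the paper itself omits the proof, stating only that the lemma is a straightforward consequence of the definition.
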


\begin {ex}
A  central example of a \loc monoid in this paper is that of trees.
 		 Given a \loc set $(\Omega,\, \top _\Omega )$,  the set $  {\calf}_\Omega $  of
 		 $\Omega$-decorated rooted forests can be equipped with	 the independence relation defined in
 		 Example \mref {ex:locsets}.(\mref{ex:DecoratedForest}).
	The concatenation product of forests induces a disjoint product  $(F_1,d_1)\cdot (F_2, d_2)$ of $\Omega$-decorated forests  defined as
	$(F_1\cdot F_2,d_{F_1\cdot F_2})$ with $d_{F_1\cdot F_2}|_{V(F_1)}=d_{1}$ and
	$d_{F_1\cdot F_2}|_{V(F_2)}=d_{2}$.
Then $\left(\calf_\Omega,\top _{\calf _\Omega}, \cdot,1\right)$ is a \loc monoid which induces a \loc algebra structure on  $\bfk\calf_\Omega$.
  \mlabel{ex:locstructtrees}
\end{ex}

See Paragraph~\ref{ss:ComplexPower} for another important example arising from meromorphic germs with linear poles.

\subsection{\Loc operated structures} \label{subsection:loc_op_structures}

We now combine   the \loc and operated structures.

\begin{defn} \mlabel{defn:loc_op_set}
Let $(\Omega, \top)$ be a \loc set. An {\bf $(\Omega,\top)$-operated \loc set} or simply a {\bf \loc operated set} is a \loc set $(U,\top_U)$ together with a {\bf partial action} $\beta$ of $\Omega$ on $U$ on a subset
$\top_{\Omega,U}:=\Omega\times_\top U\subseteq \Omega \times U$

$$ \beta:  \Omega\times_\top U \longrightarrow U, \ (\omega, x)\mapsto \beta ^\omega (x), $$
satisfying the following compatibility conditions.
	   \begin{enumerate}
\item
For
$$\Omega \times_\top U  \times_\top U: = \{(\omega,u,u')\in \Omega\times U\times U\,|\,
 (u,u')\in \top_U, (\omega, u), (\omega , u')\in \Omega \times_\top U\},$$
the map $\beta \times \Id_U: (\Omega\times_\top U) \times U \longrightarrow U\times U$ restricts to
$$\beta \times \Id_U: \Omega \times_\top U  \times_\top U
\longrightarrow  U\times_\top U.$$
 In other words,  {if} $(\omega,u,u')$ lies in $\Omega \times_\top U  \times_\top U$, then  $(\beta^\omega(u),u')$ lies in $\top_U$.
  \item
For
$$\Omega\times_\top \Omega \times_\top U:=\{(\omega,\omega',u)\in \Omega \times \Omega \times U\,|\, (\omega,\omega')\in T_\Omega, (\omega,u), (\omega',u)\in \Omega\times_\top U\},$$
the map $\Id_\Omega \times \beta: \Omega\times (U \times_\top U) \longrightarrow \Omega \times U$  {restricts to}
$$ \Id_\Omega \times \beta: \Omega\times_\top \Omega \times_\top U \longrightarrow \Omega \times_\top U. $$
In other words, if $(\omega,\omega',u')$ lies in $\Omega \times_\top \Omega  \times_\top U$, then  $(\omega,\beta^{\omega'}(u))$ lies in $\Omega\times_\top U$.
	\end{enumerate}
 \mlabel{defn:locopset}
 \end{defn}

There are variations and generalisations of the compatibility conditions, such as the subsequent {  direct consequences of the axioms}.
\begin {lem}
Let $(U,\top_U)$ be an $(\Omega,\top_\Omega)$-operated \loc set. For $m, n\geq 1$, denote
$$\Omega^{_\top m} \times_\top U^{_\top n}: =\left\{ (\omega_1,\cdots,\omega_m,x_1,\cdots,x_n)\in \Omega^m\times U^n\,\left |\,
\begin{array}{l}
(\omega_1,\cdots,\omega_m)\in \Omega^{_\top m}\\
 (x_1,\cdots,x_n)\in U^{_\top n} \\
 \omega_i\top_{\Omega,U}x_j\quad\forall(i,j)\in[m]\times[n]
 \end{array} \right . \right\}.$$
$ (\Id_\Omega^{m-1} \times \beta^{\omega_m} \times \Id_U^{n-1})(\Omega^{_\top m} \times_\top U^{_\top n})$ is contained in $\Omega^{_\top (m-1)}\times_\top U^{_\top n}$.

With a similar notation, we have
${(\beta \times \beta)} (\Omega\times_\top U\times_\top \Omega\times_\top U)\subseteq U\times_\top U.$
\mlabel{lem:itact}
\end{lem}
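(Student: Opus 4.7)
The plan is to reduce both assertions to iterated applications of the two compatibility axioms in Definition~\ref{defn:locopset}. Informally, axiom~(i) says that the action $\beta^\omega$ preserves independence from $U$-neighbours of the element acted upon, while axiom~(ii) says that it preserves independence from $\Omega$-neighbours of the operator label. Both claims express this kind of stability in a more elaborate combinatorial arrangement.

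For the first claim, I would fix $(\omega_1,\ldots,\omega_m,x_1,\ldots,x_n)\in\Omega^{_\top m}\times_\top U^{_\top n}$ and write the image tuple explicitly as $(\omega_1,\ldots,\omega_{m-1},\beta^{\omega_m}(x_1),x_2,\ldots,x_n)$. Membership in $\Omega^{_\top(m-1)}\times_\top U^{_\top n}$ unfolds into four independence requirements. Two of them---mutual $\top_\Omega$-independence of $\omega_1,\ldots,\omega_{m-1}$ and $(\omega_i,x_j)\in\Omega\times_\top U$ for $i\leq m-1$ and $j\geq 2$---are inherited verbatim from the hypothesis. The third, $(\beta^{\omega_m}(x_1),x_j)\in\top_U$ for $j\geq 2$, comes from axiom~(i) applied to $(\omega_m,x_1,x_j)$. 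The fourth, $(\omega_i,\beta^{\omega_m}(x_1))\in\Omega\times_\top U$ for $i\leq m-1$, comes from axiom~(ii) applied to $(\omega_i,\omega_m,x_1)$.

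For the second claim, given $(\omega_1,x_1,\omega_2,x_2)\in\Omega\times_\top U\times_\top \Omega\times_\top U$, I would first use axiom~(i) to conclude $(\beta^{\omega_1}(x_1),x_2)\in\top_U$, then use axiom~(ii) to obtain $(\omega_2,\beta^{\omega_1}(x_1))\in\Omega\times_\top U$. Together with the symmetry of $\top_U$, these show that $(\omega_2,x_2,\beta^{\omega_1}(x_1))$ lies in $\Omega\times_\top U\times_\top U$, so a second invocation of axiom~(i) yields $(\beta^{\omega_2}(x_2),\beta^{\omega_1}(x_1))\in\top_U$, whence the claim by symmetry of $\top_U$.

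The only genuine difficulty is notational: three distinct locality relations ($\top_\Omega$, $\top_U$, and the mixed one on $\Omega\times U$) are simultaneously in play, so each application of the axioms must be threaded through the correct tuple, with symmetrisations made explicit, to keep track of which components are being acted upon. No deeper machinery beyond Definition~\ref{defn:locopset} is required.
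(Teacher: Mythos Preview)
Your argument is correct and is exactly what the paper has in mind: the lemma is introduced there merely as ``direct consequences of the axioms'' without an explicit proof, and your step-by-step verification via axioms~(i) and~(ii) of Definition~\ref{defn:locopset} is precisely that direct derivation.
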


\begin{defn}
Let $(\Omega , \top )$ be a \loc set.
 \begin{enumerate}
  \item A {\bf \loc $(\Omega,\top)$-operated semigroup} is a  quadruple  $\left(U,\top _U, \beta ,m_U\right)$, where $(U,\top _U,m_U)$ is a \loc semigroup and
  $\left(U,\top _U, \beta \right)$ is a $(\Omega,\top)$-operated \loc set such that
  \begin{equation}
    (\omega, u, u')\in\Omega\times_\top U\times_\top U~\Longrightarrow~(\omega, uu')\in\Omega\times_\top U;
  \end{equation}
 \item A {\bf \loc $(\Omega,\top)$-operated monoid} is a  quintuple  $\left(U,\top _U, \beta ,m_U,1_U\right)$, where $(U,\top _U,m_U,1_U)$ is an \loc monoid
  and $\left(U,\top _U, \beta, m_U \right)$ is a $(\Omega,\top)$-operated \loc semigroup such that $\Omega \times 1_U\subset \Omega \times _\top U$.
    \item A {\bf $(\Omega,\top)$-operated \loc nonunitary algebra } (resp. {\bf $(\Omega,\top)$-operated \loc unitary algebra}) is a  quadruple  $\left(U,\top _U, \beta ,m_U\right)$ (resp. quintuple
    $(U,\top _U, \beta,$ $m_U, 1_U)$) which is a \loc algebra (resp. unitary algebra) and a \loc $(\Omega,\top)$-operated semigroup (resp. monoid), satisfying the additional condition that for any $\omega\in \Omega$, the set $\{\omega\}^{\top_{\Omega,U}}:=\{ u\in U\,|\, \omega\top_{\Omega,U} u \}$ is a subspace of $U$ on which the action of $\omega$ is linear. More precisely, the last condition means that for any $u_1, u_2\in \{\omega\}^{\top_{\Omega,U}}$ and for any $k_1, k_2\in \bfk$, we have $k_1u_1+k_2u_2\in \{\omega\}^{\top_{\Omega,U}}$ and $\beta^\omega(k_1u_1+k_2u_2)=k_1\beta^\omega(u_1)+k_2\beta^\omega(u_2)$
(resp. this condition and $\Omega \times 1_U\subset \Omega \times _\top U$).
\end{enumerate}
In each case, the $(\Omega,\top)$-operated structure is called {\bf commutative} if the corresponding \loc structure is.
\mlabel{defn:basedlocsg}
\end{defn}
\begin{ex}
A \loc \, semigroup $(G,\top,\cdot) $ is a \loc $(G, \top)$-operated commutative semigroup for the action $\alpha: G\times_\top G\to G$ given by the product on $G$.
\end{ex}

Directly from the definition, we have
\begin {lem}
Let $(U,\top_U)$ be an $(\Omega,\top_\Omega)$-operated \loc semigroup. For $i, j\geq 1$, the subset
$ (\Id_\Omega^i \times m_U \times \Id_U^{j-2})(\Omega^{_\top i} \times_\top U^{_\top j})$  is contained in $\Omega^{_\top i}\times_\top U^{_\top {j-1}}$
\mlabel{lem:actmult}
\end{lem}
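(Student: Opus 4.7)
The plan is to verify, for $i\geq 1$ and $j\geq 2$ (the case $j\geq 2$ being what is needed so that $m_U\times\Id_U^{j-2}$ makes sense), that a generic element
$$ (\omega_1,\ldots,\omega_i,\, u_1u_2,\, u_3,\ldots,u_j) $$
produced from an input $(\omega_1,\ldots,\omega_i,u_1,\ldots,u_j)\in \Omega^{_\top i}\times_\top U^{_\top j}$ lies in $\Omega^{_\top i}\times_\top U^{_\top(j-1)}$. Unpacking the definition of the target set, three conditions need checking: first, that $(\omega_1,\ldots,\omega_i)\in\Omega^{_\top i}$, which is given for free; second, that the shortened $U$-tuple $(u_1u_2,u_3,\ldots,u_j)$ lies in $U^{_\top(j-1)}$; and third, that $\omega_k\top_{\Omega,U} u_1u_2$ for every $k\in[i]$ (the relations $\omega_k\top_{\Omega,U} u_l$ for $l\geq 3$ being inherited from the hypothesis).

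For the first nontrivial check, the pairwise relations $u_k\top_U u_l$ for $3\leq k<l\leq j$ are inherited from the hypothesis $(u_1,\ldots,u_j)\in U^{_\top j}$, so the only new relations to verify are $u_1u_2\top_U u_l$ for $l=3,\ldots,j$. This is precisely the content of Lemma~\mref{lem:locprod}(i), applied to the \loc semigroup $(U,\top_U,m_U)$ and the triple $(u_1,u_2,u_l)\in U^{_\top 3}$, so I would simply quote that lemma (or, if preferred, redo its short proof from Eq.~\eqref{eq:semigrouploc} applied with $U=\{u_l\}^\top$, which contains both $u_1$ and $u_2$ and admits the product $u_1u_2$ by the \loc semigroup axiom).

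For the second nontrivial check, I would invoke the compatibility axiom of Definition~\mref{defn:basedlocsg}(1) for \loc operated semigroups. Fix $k\in[i]$. By hypothesis we have $\omega_k\top_{\Omega,U} u_1$, $\omega_k\top_{\Omega,U} u_2$ and $u_1\top_U u_2$, which together mean exactly that $(\omega_k,u_1,u_2)\in\Omega\times_\top U\times_\top U$. The cited axiom then directly yields $(\omega_k,u_1u_2)\in\Omega\times_\top U$, i.e.\ $\omega_k\top_{\Omega,U} u_1u_2$, as required.

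The proof is essentially bookkeeping: no deeper computation is involved once the axioms are unpacked. The main obstacle is mostly presentational, namely keeping track of the three layered locality conditions hidden in the notation $\Omega^{_\top i}\times_\top U^{_\top j}$, and making sure the edge cases ($j=2$, so that $\Id_U^{j-2}=\Id_U^0$ is understood as the empty factor) are handled correctly. The two inputs driving the argument are Lemma~\mref{lem:locprod} (locality of products within $U$) and Definition~\mref{defn:basedlocsg}(1) (compatibility between the $\Omega$-action and the $U$-product), which together package exactly the content of Lemma~\mref{lem:actmult}.
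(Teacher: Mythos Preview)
Your proof is correct and is precisely the natural unpacking of the definitions; the paper itself gives no proof beyond the phrase ``Directly from the definition, we have,'' and your argument using Lemma~\ref{lem:locprod} for the $U$-part and the axiom in Definition~\ref{defn:basedlocsg}(1) for the $\Omega$-$U$ compatibility is exactly what that phrase means.
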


\begin{lem}
A \loc operated  semigroup $(U,\mtop_U, \beta, m_U)$ expands to a \loc operated nonunitary algebra $(\bfk U, \mtop_{\bfk U}, \beta, m_{\bfk U})$ by linearity. The same holds for a \loc operated monoid and unitary algebra.
\mlabel{lem:loplinear}
\end{lem}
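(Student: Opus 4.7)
The plan is to extend the locality relation and all structural maps from $U$ to $\bfk U$ by tracking supports in the basis $U$. Every $x\in\bfk U$ admits a unique expression $x=\sum_{i}a_iu_i$ with distinct $u_i\in U$ and $a_i\in\bfk^{\times}$; write $\mathrm{supp}(x)\subseteq U$ for the finite set $\{u_i\}$. Declare
\[
x\,\top_{\bfk U}\,y\iff\mathrm{supp}(x)\times\mathrm{supp}(y)\subseteq\top_U,\qquad
\omega\,\top_{\Omega,\bfk U}\,x\iff\{\omega\}\times\mathrm{supp}(x)\subseteq\top_{\Omega,U}.
\]

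First I would verify that $(\bfk U,\top_{\bfk U})$ is a locality vector space: symmetry is inherited from $\top_U$, and for any $X\subseteq\bfk U$ the polar
\[
X^{\top_{\bfk U}}=\bfk\cdot\Bigl(\bigcup_{x\in X}\mathrm{supp}(x)\Bigr)^{\top_U}
\]
is manifestly a subspace. The same reasoning shows that each $\{\omega\}^{\top_{\Omega,\bfk U}}$ is the $\bfk$-span of $\{\omega\}^{\top_{\Omega,U}}$, hence a subspace.

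Next I would extend $m_U$ and $\beta$ by bilinearity and linearity: for $(x,y)=(\sum_ia_iu_i,\sum_jb_jv_j)\in\bfk U\times_\top\bfk U$, every pair $(u_i,v_j)$ lies in $\top_U$, so $m_{\bfk U}(x,y):=\sum_{i,j}a_ib_j\,m_U(u_i,v_j)$ is well defined; similarly $\beta^\omega(\sum_ia_iu_i):=\sum_ia_i\,\beta^\omega(u_i)$ is well defined whenever $(\omega,x)\in\Omega\times_\top\bfk U$. The locality-bilinearity of $m_{\bfk U}$ and the linearity of $\beta^\omega$ on its domain are then automatic from the support-based definitions.

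Finally I would verify the remaining axioms. Associativity \eqref{eq:asso}, the multiplicative locality compatibility \eqref{eq:semigrouploc}, the compatibility conditions (i) and (ii) of Definition~\ref{defn:locopset}, and the semigroup/algebra conditions of Definition~\ref{defn:basedlocsg} all pass to $\bfk U$ by expanding both sides in the basis $U$ and invoking the corresponding identity on each tuple of basis elements; the support-based definition guarantees that every tuple so produced lies in the appropriate locality set on $U$. For the unital version, $1_U\in U\hookrightarrow\bfk U$ is the unit: $\{1_U\}^{\top_U}=U$ forces $\{1_U\}^{\top_{\bfk U}}=\bfk U$, and $\Omega\times\{1_U\}\subseteq\Omega\times_\top U$ extends directly. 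No conceptual obstacle arises; the work is purely bookkeeping, which the support-based extension automates.
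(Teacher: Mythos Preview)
Your proposal is correct and follows the natural support-based linear extension that the paper tacitly invokes; the paper itself gives no detailed argument, merely noting that ``the proof follows from that of the case without the $(\Omega,\top_{\Omega})$-action,'' so you have in effect filled in what the authors leave implicit. There is nothing to compare beyond that: your construction is the standard one and the verifications are routine bookkeeping, exactly as you say.
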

The proof follows from that of the case without the $(\Omega,\top_{\Omega})$-action.

\begin{defn}\label{defn:morphoplocstr}
Given $(\Omega,\top _\Omega)$-operated \loc structures  (sets, semigroups, monoids, nonunitary algebras, algebras) $(U_i, \top _{U_i}, \beta_i)$, $i=1,2,$ a
{\bf morphism of \loc operated  \loc structures}  is a locality morphism (of sets, semigroups, monoids, nonunitary algebras, algebras) $f: U_1\to U_2$ such that
\begin{itemize}
\item
$(\Id_\Omega\times f)(\Omega\times_\top U_1)\subseteq \Omega\times_\top U_2$
and
\item  $f\circ \beta_1^\omega = \beta_2^{\omega} \circ f$ for all $\omega\in \Omega$.
\end{itemize}
\mlabel{defn:morphismoperatedloc}
\end{defn}	

We therefore have the categories of $\OS_{\Omega,\top _\Omega}$ (resp ${\bf OSG}_{\Omega,\top _\Omega}$, resp. $\OM_{\Omega,,\top _\Omega}$, resp. $\OA_{\Omega,\top _\Omega}$) of $(\Omega,\top _\Omega)$-operated sets (resp. semigroups, resp. monoids, resp. algebras).

\section{Universal properties of decorated rooted forests: the \loc version}
\mlabel{sec:univpropwordsforests}
\subsection{Properly decorated forests}
\mlabel{ss:propdecfor}

We first equip properly decorated rooted forests and the resulting linear space with the structures of a \loc $\Omega$-operated commutative monoid and algebra. We then prove their universal properties in the category of \loc operated commutative monoids and algebras.

\begin{defn}
Let $(\Omega,\top _\Omega)$ be
a \loc set. An {\bf $\Omega$-properly decorated rooted forest} is a decorated rooted forest  $\dforest=( F,  d_F )$
whose vertices are decorated by mutually independent elements of $\Omega$. When $\Omega$
is clear from context, we call them properly decorated forests.

Let  $ \calf _{\Omega,\top _\Omega} $ denote the set of  $\Omega$-properly decorated rooted forests, and $\bfk\,\calf  _{\Omega,\top _\Omega} $ be its linear span. The set $ \calf _{\Omega,\top _\Omega} $
inherits the independence relation $\top _{\calf_{\Omega}}$ of $\calf_\Omega$, denoted by $\top _{\calf_{\Omega, \top _\Omega}}$, and $\bfk\,\calf_{\Omega,\top _\Omega} $ inherits the independence relation $\top _{\bfk\,\calf_{\Omega}}$ of $\bfk\,\calf_\Omega$ denoted by $\top _{\bfk\,\calf_{\Omega, \top _\Omega }}$.
\mlabel{defn:properlydecoratedforest}
\end{defn}

It is easy to see that the disjoint union of forests in $\calf_{\Omega}$ defines a \loc monoid structure on
$\calf _{\Omega,\top _\Omega}$, and thus a \loc algebra structure on $\bfk\,\calf _{\Omega , \top _\Omega}$.
This leads  to the following straightforward yet fundamental result.

\begin{prop}
 Let $(\Omega,\top _\Omega)$ be a \loc set.
 Then
\begin{enumerate}
\item
 $(\calf_{\Omega, \top _\Omega},\top _{\calf_{\Omega, \top _\Omega}}, B_+,\cdot,1)$
  is a \loc $(\Omega,\top _\Omega)$-operated commutative monoid;
  \mlabel{it:opal1}
\item
 $(\bfk\,\calf_{\Omega, \top _\Omega},\top _{\bfk\,\calf_{\Omega, \top _\Omega }}, B_+,\cdot,1)$
is a \loc $(\Omega,\top _\Omega)$-operated  commutative algebra.
\mlabel{it:opal2}
 \end{enumerate}
\mlabel{prop:operatedalgebra}
\end{prop}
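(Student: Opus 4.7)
My plan is to verify the conditions in Definition~\ref{defn:basedlocsg} for the set $\calf_{\Omega,\top_\Omega}$, using throughout the key observation that, by Eq.~\eqref{eq:IndForestss} and Definition~\ref{defn:properlydecoratedforest}, all the locality relations on properly decorated forests translate directly into mutual independence statements about the underlying decoration sets. Part~\eqref{it:opal2} then follows from part~\eqref{it:opal1} by Lemma~\mref{lem:loplinear}, so the work is concentrated in part~\eqref{it:opal1}.

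First I would record the basic closure facts. Given $(\dforest_1,\dforest_2)\in\top_{\calf_{\Omega,\top_\Omega}}$, the decoration sets of $\dforest_1$ and $\dforest_2$ are independent in $(\calp(\Omega),\top)$; since each is already internally mutually independent, their union is also mutually independent, so the concatenation $\dforest_1\cdot\dforest_2$ is again properly decorated. This, together with commutativity, associativity (inherited from $\calf_\Omega$) and the fact that $1$ is independent from every forest (its decoration set is empty), shows $(\calf_{\Omega,\top_\Omega},\top,\cdot,1)$ is a \loc commutative monoid, reproducing Example~\mref{ex:locstructtrees} in the properly decorated setting. Similarly, if $(\omega,\dforest)\in\Omega\times_\top \calf_{\Omega,\top_\Omega}$, meaning $\omega\top_\Omega d(v)$ for every vertex $v$ of $\dforest$, then $B_+^\omega(\dforest)$ inherits the original independent decorations and adds a new root decorated by $\omega$, which is independent from each of them; hence $B_+^\omega(\dforest)\in\calf_{\Omega,\top_\Omega}$.

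Next I would verify the two compatibility conditions of Definition~\mref{defn:locopset}. For the first, suppose $(\omega,\dforest,\dforest')\in\Omega\times_\top\calf_{\Omega,\top_\Omega}\times_\top\calf_{\Omega,\top_\Omega}$. Then the decoration set of $B_+^\omega(\dforest)$ is $\{\omega\}\cup d_F(V(F))$; the condition $\omega\top_{\Omega,U}\dforest'$ gives $\{\omega\}\,\top_\Omega\, d_{F'}(V(F'))$, and $\dforest\top\dforest'$ gives $d_F(V(F))\,\top_\Omega\, d_{F'}(V(F'))$, whence $(B_+^\omega(\dforest),\dforest')\in\top$. For the second, if $(\omega,\omega',\dforest)\in\Omega\times_\top\Omega\times_\top\calf_{\Omega,\top_\Omega}$, then the decoration set of $B_+^{\omega'}(\dforest)$ is $\{\omega'\}\cup d_F(V(F))$, and both are independent from $\omega$ by assumption, giving $(\omega,B_+^{\omega'}(\dforest))\in\Omega\times_\top\calf_{\Omega,\top_\Omega}$. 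The monoid compatibility $(\omega,\dforest,\dforest')\in\Omega\times_\top\calf\times_\top\calf\Rightarrow(\omega,\dforest\cdot\dforest')\in\Omega\times_\top\calf$ is by the same union-of-decoration-sets argument.

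For part~\eqref{it:opal2}, I would apply Lemma~\mref{lem:loplinear} to promote the \loc operated commutative monoid to a \loc operated commutative algebra $(\bfk\calf_{\Omega,\top_\Omega},\top_{\bfk\calf_{\Omega,\top_\Omega}},B_+,\cdot,1)$. The only extra point is the linearity condition in Definition~\mref{defn:basedlocsg}: for each $\omega\in\Omega$, the set $\{\omega\}^{\top_{\Omega,\bfk\calf_{\Omega,\top_\Omega}}}$ is the linear span of properly decorated forests whose decorations are independent from $\omega$, hence is a subspace, and $B_+^\omega$ extends linearly to it by construction. I do not expect a genuine obstacle here; the main care needed is simply systematic bookkeeping to make sure the mutual-independence condition defining ``properly decorated'' is preserved after applying $B_+^\omega$ and $\cdot$, which is precisely guaranteed by the compatibility hypotheses built into each domain of definition.
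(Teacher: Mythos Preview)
Your proposal is correct and follows the same approach as the paper: a direct verification of the axioms in Definitions~\ref{defn:locopset} and~\ref{defn:basedlocsg}, using that all locality conditions on $\calf_{\Omega,\top_\Omega}$ reduce to independence statements on decoration sets. The paper's own proof is extremely terse (essentially ``follows from the definition'' for part~(\ref{it:opal1}) and ``by linear extension'' for part~(\ref{it:opal2})), whereas you have spelled out the checks in detail; there is no substantive difference in strategy.
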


\begin{proof}
(\mref{it:opal1}) It follows from the definition that $(\calf_{\Omega, \top _\Omega},\top _{\calf_{\Omega, \top _\Omega}},1)$ is a \loc monoid.   The grafting operators on this \loc monoid further satisfy the conditions of a
\loc $(\Omega,\top _\Omega)$-operated monoid;
\smallskip

\noindent
(\mref{it:opal2})  follows from (\mref{it:opal1})  by linear extension.
\end{proof}

\begin{lem}
Let $(\Omega,\top _{\Omega})$ be a \loc set. An ${\Omega}$-properly decorated rooted forest in $\calf _{\Omega,\top _{\Omega}}$ is either the empty tree $1$ or it can be written uniquely in one of the following forms:
\begin{enumerate}
\item
$(F_1,d_1)\cdots (F_n,d_n), \ n\ge 2,$ with rooted trees $F_i\not=1$ such that
$$(F_i,d_i)\top _{\calf _{\Omega,\top _{\Omega}}} (F_j,d_j),\ 1\leq i\not= j\leq n. $$
Furthermore $\deg(F,d)=\deg(F_1,d_1)+\cdots+\deg(F_n,d_n)$;
\item	
$ B_+^{\omega}(F,d)$ for some $(F,d) \in \calf _{\Omega,\top _{\Omega}}$ which is independent of  $\bullet_\omega.$ Furthermore,
$$\deg(B_+^{\omega}(F,d))=\deg (F,d)+1.$$
\end{enumerate}
\mlabel{lem:fdecomp}
\end{lem}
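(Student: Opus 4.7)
The plan is to reduce the statement to the non-locality decomposition given in Lemma \ref{lem:forestdecom} and then check that the proper decoration hypothesis automatically supplies the additional independence conditions.

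First, I would apply Lemma \ref{lem:forestdecom} in the case $X=\emptyset$ to the underlying decorated forest $(F,d)\in\calf_{\Omega,\top_\Omega}\subseteq\calf_\Omega$. Since $X=\emptyset$, the case $F\in\mathrm{Im}(i)$ is vacuous, so $(F,d)$ is either $1$, or of the form $B_+^\omega(\overline F,\overline d)$ for a unique pair $(\omega,(\overline F,\overline d))$, or a unique product $(F_1,d_1)\cdots(F_n,d_n)$ with $n\geq 2$ of non-trivial trees. The degree identities in both nontrivial cases are an immediate transcription of the degree identities in Lemma~\ref{lem:forestdecom}, and the uniqueness in the present lemma is inherited from the uniqueness in Lemma~\ref{lem:forestdecom}.

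It remains to verify the independence conditions. Recall from Definition~\ref{defn:properlydecoratedforest} that $(F,d)$ being properly decorated means the elements $\{d(v)\mid v\in V(F)\}$ are pairwise independent in $(\Omega,\top_\Omega)$. In the product case, $V(F)$ is the disjoint union of the $V(F_i)$, so for $i\neq j$ each element of $d_i(V(F_i))$ is independent of each element of $d_j(V(F_j))$; by the definition of $\top_{\calf_\Omega}$ given in Eq.~\eqref{eq:IndForestss}, this is precisely $(F_i,d_i)\top_{\calf_{\Omega,\top_\Omega}}(F_j,d_j)$. In the grafting case, $V(F)=V(\overline F)\sqcup\{r\}$ where $r$ is the new root decorated by $\omega$, so $\omega$ is independent of every element of $\overline d(V(\overline F))$, which by the same Eq.~\eqref{eq:IndForestss} translates to $\bullet_\omega\top_{\calf_{\Omega,\top_\Omega}}(\overline F,\overline d)$.

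Finally, I would observe that the pieces appearing in each decomposition are themselves properly decorated, because they inherit their decoration from $(F,d)$ and the mutual independence of their vertex decorations is a subcondition of the mutual independence of all vertex decorations of $(F,d)$. I expect no genuine obstacle: the lemma is essentially bookkeeping, extracting the \loc content from Eq.~\eqref{eq:IndForestss} and combining it with the non-\loc decomposition already available.
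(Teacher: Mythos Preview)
Your proposal is correct and follows essentially the same approach as the paper: reduce to the non-\loc decomposition (the paper cites \cite{G1}, you invoke Lemma~\ref{lem:forestdecom} with $X=\emptyset$) and then observe that the independence conditions are automatic from the proper-decoration hypothesis. Your write-up simply makes explicit the bookkeeping that the paper's two-sentence proof leaves to the reader.
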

\begin{proof}
The statements hold for any decorated forest without any independence requirement~\cite{G1}. If a decorated forest has independent decorations, then the independence conditions in the statements automatically hold.
\end{proof}			

The following results are also easy to verify.
\begin{lem}
\begin{enumerate}
\item \mlabel{lem:welldefined_point_i}
For rooted forests $(F_1,d_1),\cdots, (F_n,d_n)\in \bbf_{\Omega,\top_\Omega}$, the product forest $(F_1,d_1)\cdots (F_n,d_n)$ lies in $\bbf_{\Omega,\top_\Omega}$ if and only if
$(F_i,d_i)\top _{\calf _{\Omega,\top _{\Omega}}} (F_j,d_j)$ for $1\leq i\neq j\leq n$;
\item
For a decorated rooted forest $(F,d)\in \calf_{\Omega,,\top_\Omega}$, the rooted tree $B_+^{\omega}(F,d)$ lies in $\calf_{\Omega,\top_\Omega}$ if and only if $(F,d)$ is independent of  $\bullet_\omega.$
\end{enumerate}
\mlabel{lem:welldefined}
\end{lem}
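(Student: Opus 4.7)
The plan is to unwind the definitions and read off both equivalences directly. By Definition~\ref{defn:properlydecoratedforest}, a decorated forest $(F,d)$ lies in $\calf_{\Omega,\top_\Omega}$ if and only if its vertex decorations are pairwise independent, that is $d(V(F))\top_\Omega d(V(F))$ in the induced \loc relation on $\mathcal{P}(\Omega)$. Combined with the definition of $\top_{\calf_\Omega}$ from Equation~\eqref{eq:IndForestss}, this is the only fact we will need.

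For (\ref{lem:welldefined_point_i}), observe that the vertex set of the concatenation satisfies $V(F_1\cdots F_n)=\bigsqcup_{i=1}^n V(F_i)$ and the decoration $d$ restricts to $d_i$ on $V(F_i)$. Hence the image $d(V(F_1\cdots F_n))=\bigcup_i d_i(V(F_i))$. Mutual independence of this union is equivalent to mutual independence of each $d_i(V(F_i))$ together with pairwise independence $d_i(V(F_i))\top_\Omega d_j(V(F_j))$ for $i\neq j$. The first is the hypothesis that each $(F_i,d_i)\in\calf_{\Omega,\top_\Omega}$, and by the definition of $\top_{\calf_{\Omega,\top_\Omega}}$ the second is exactly $(F_i,d_i)\top_{\calf_{\Omega,\top_\Omega}}(F_j,d_j)$, giving both directions at once.

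For (\ref{lem:welldefined})(ii), the vertex set of $B_+^\omega(F,d)$ is $V(F)\sqcup\{r\}$ where $r$ is the new root, and the decoration of $r$ is $\omega$. So the decoration image is $d(V(F))\cup\{\omega\}$, and mutual independence of this set is equivalent to mutual independence of $d(V(F))$ together with $\{\omega\}\top_\Omega d(V(F))$. The first is $(F,d)\in\calf_{\Omega,\top_\Omega}$, which is given; the second, read through Equation~\eqref{eq:IndForestss}, is precisely $(F,d)\top_{\calf_{\Omega,\top_\Omega}}\bullet_\omega$. No step here is an obstacle — the only subtlety is keeping track of the distinction between mutual independence within a single forest (the properness condition) and pairwise independence across forests (the \loc relation on $\calf_\Omega$), which is cleanly handled by partitioning the vertex set.
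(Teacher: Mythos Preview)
Your proof is correct and follows exactly the approach the paper has in mind: the paper does not actually give a proof of this lemma but simply states that ``the following results are also easy to verify,'' and your definitional unwinding is precisely that easy verification. One very small wording point: speaking of ``mutual independence of the image $d(V(F_1\cdots F_n))$'' is slightly loose, since properness is a condition on pairs of \emph{vertices} (i.e., $d(v)\top_\Omega d(w)$ for distinct $v,w$) rather than on the image set per se, but this does not affect the argument and is in any case consistent with the paper's own use of images in Equation~\eqref{eq:IndForestss}.
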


The following characterisation of a morphism of operated commutative monoids will later be useful.
\begin{lem}
Let $U$ be an $(\Omega,\top_\Omega)$-operated commutative monoid. A map $\mphi:\calf_{\Omega,\top_\Omega} \to U$ is a morphism of $(\Omega,\top_\Omega)$-operated commutative monoids if and only if
\begin{enumerate}
\item
$\mphi(1) = 1_{U}$;
\mlabel{it:mid}
\item for any $(F,d),(F',d') \in \calf_{\Omega,\top_\Omega}$, if $(F,d) \top _{\bbf_{\Omega,\top _{\Omega}}}(F',d')$, then $\mphi(F,d) \top _U \mphi(F',d')$;
\mlabel{it:mloc}
\item
For any $(T_1,d_1)\cdots (T_n,d_n)\in \calf_{\Omega,\top_\Omega}$ where $(T_1,d_1), \cdots, (T_n,d_n)$ are decorated rooted trees, the equation
$$\mphi ((T_1,d_1)\cdots (T_n,d_n)) = \mphi(T_1,d_1)\cdots\mphi(T_n,d_n)$$
holds;
\mlabel{it:mprod}
\item If $(\omega,(F,d))$ is in $\Omega\times_\top \calf_{\Omega,\top_\Omega}$, then $(\omega, \mphi (F,d))$ is in $\Omega\times _\top U$;
\mlabel{it:mact}
\item
For any $B_+^\omega(F,d)\in \calf_{\Omega,\top_\Omega}$, the equation
$\mphi \left( B_+^\omega(F,d)\right) =  \beta_{U,+}^{\omega}\left(\mphi(F,d)\right)$ holds.   \mlabel{it:mgraft}
 \end{enumerate}
 \mlabel{lem:char}
\end{lem}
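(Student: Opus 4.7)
The plan is to prove both directions separately, with the forward direction being essentially immediate and the reverse direction hinging on extending (iii) from tree-products to arbitrary independent forest-products by means of the decomposition in Lemma~\ref{lem:fdecomp}.

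For the forward implication, suppose $\mphi$ is a morphism of $(\Omega,\top_\Omega)$-operated commutative monoids in the sense of Definition~\ref{defn:morphoplocstr}. As a locality morphism of monoids it must send the unit to the unit, giving (i); it must be a locality map, giving (ii); and it must be multiplicative on all locality-independent pairs, which in particular yields (iii) when applied iteratively to a concatenation of trees $(T_1,d_1)\cdots(T_n,d_n)\in\calf_{\Omega,\top_\Omega}$, whose pairwise independence is guaranteed by Lemma~\ref{lem:welldefined_point_i}. As a morphism of the $(\Omega,\top_\Omega)$-operated structure it must satisfy the two bullets of Definition~\ref{defn:morphismoperatedloc}, which are exactly (iv) and (v).

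For the reverse implication, assume (i)--(v). Conditions (i), (ii), (iv), (v) directly provide unit preservation, the locality property, the compatibility of $\mphi$ with the partial $\Omega$-action, and equivariance under grafting. Thus the only remaining point is to verify full multiplicativity, namely that $\mphi((F,d)(F',d')) = \mphi(F,d)\,\mphi(F',d')$ for every pair $(F,d)\top_{\calf_{\Omega,\top_\Omega}}(F',d')$. Using Lemma~\ref{lem:fdecomp}, decompose each as a product of decorated rooted trees, $(F,d)=T_1\cdots T_m$ and $(F',d')=T'_1\cdots T'_n$, whose factors are pairwise independent within each product. By Definition~\ref{defn:properlydecoratedforest} and the locality relation on forests from Example~\ref{ex:DecoratedForest}, the hypothesis $(F,d)\top(F',d')$ translates into $T_i\top T'_j$ for all $i,j$, and hence by Lemma~\ref{lem:welldefined_point_i} the concatenation $T_1\cdots T_m T'_1\cdots T'_n$ also lies in $\calf_{\Omega,\top_\Omega}$. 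Applying (iii) to this total concatenation and separately to $(F,d)$ and $(F',d')$, and using associativity of $m_U$ in the locality monoid $U$ (Lemma~\ref{lem:locprod}), the two expressions agree.

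The main obstacle is really the bookkeeping in this last step: one must check carefully that all intermediate partial products are formed from elements lying in the appropriate locality-independent subsets of $U$, so that every instance of $m_U$ written down is in fact defined. Condition (ii) ensures that $\mphi(T_i)\top_U\mphi(T_j)$ and $\mphi(T_i)\top_U\mphi(T'_j)$, and Lemma~\ref{lem:locprod} then makes the iterated applications of $m_U$ legitimate. Once this is in place, the verification of multiplicativity reduces to pure associativity in $U$, and the lemma follows.
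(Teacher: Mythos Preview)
Your proof is correct and follows essentially the same route as the paper's: both directions are handled as you do, with the only nontrivial step being the reverse implication's multiplicativity, which you (like the paper) obtain by decomposing each forest into trees via Lemma~\ref{lem:fdecomp} and applying condition~(\ref{it:mprod}) to the full concatenation. Your added remark about using Lemma~\ref{lem:locprod} to ensure the iterated products in $U$ are defined is a welcome bit of care that the paper leaves implicit.
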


\begin{proof}
($\Longrightarrow$). Suppose  that $\mphi:\calf_{\Omega,\top_\Omega}\to U$ is a morphism of operated commutative monoids. Then conditions~(\mref{it:mloc}), (\mref{it:mact}) and (\mref{it:mgraft}) hold by the locality of the map
$\mphi$ and its compatibility with the actions of $(\Omega,\top_\Omega)$. Condition (\mref{it:mid}) is the unitary condition and Condition (\mref{it:mprod}) follows since the concatenation is the product in
$\calf_{\Omega,\top_\Omega}$. \smallskip

\noindent
($\Longleftarrow$). Now suppose that all the conditions are satisfied, so that we only need to verify that $\mphi$ is multiplicative
for any $(F_1,d_1), (F_2,d_2)\in \calf_{\Omega,\top_\Omega}$ with
$(F_1,d_1)\top_{\Omega} (F_2,d_2)$.

By Lemma~\mref{lem:fdecomp}, we have decompositions
$$ (F_i,d_i)=({T}_{i,1},d_{i,1})\cdots ({T}_{i,n_i},d_{i,n_i}), i=1,2,$$
of $(F_i,d_i)$ into rooted trees. Furthermore, by Lemma~\mref{lem:welldefined}, the concatenation $(F_1,d_1)(F_2,d_2)$ is well-defined in $\calf_{\Omega,\top_\Omega}$ and then the decomposition of $(F_1,d_1)(F_2,d_2)$ in Lemma~\mref{lem:fdecomp} is
$$(F_1,d_1)(F_2,d_2)=({T}_{1,1},d_{1,1})\cdots ({T}_{1,n_1},d_{1,n_1}) ({T}_{{2},1},d_{1,1})\cdots ({T}_{{2},n_{2}},d_{{2},n_{2}}).$$
This gives the multiplicativity of $\mphi$:
\begin{eqnarray*}
\mphi((F_1,d_1)(F_2,d_2))
&=&\mphi(({T}_{1,1},d_{1,1})\cdots ({T}_{1,n_1},d_{1,n_1}) ({T}_{{2},1},d_{1,1})\cdots ({T}_{{2},n_{2}},d_{{2},n_{2}}))\\
&=& \mphi({T}_{1,1},d_{1,1})\cdots \mphi({T}_{1,n_1},d_{1,n_1}) \mphi({T}_{{2},1},d_{1,1})\cdots \mphi({T}_{{2},n_{2}},d_{{2},n_{2}}) \\
&=& \mphi(F_1,d_1)\mphi(F_2,d_2),
\end{eqnarray*}
as required.
\end{proof}

\subsection{The universal property of properly decorated rooted forests}
\label{subsec:propdecforests}
In this part, we study the universal properties of properly decorated rooted forests.
It is  well known that the number of vertices in a forest defines a grading which makes $\calf_\Omega$ into a graded monoid, and further $\bfk\,\calf_\Omega$ into a graded algebra. Restricting the grading to properly decorated forests, we obtain \loc graded operated monoids and algebras.

\begin{thm} \mlabel{thm:univ_prop_trees_loc}
Let a \loc set $(\Omega,\top _{\Omega})$ be given.
\begin {enumerate}
\item  The quintuple $(\calf _{\Omega,\top_\Omega},\top _{\calf _{\Omega,\top_\Omega}},  B_+, \cdot, 1 )$ is the initial object in the category of
$(\Omega,\top _{\Omega})$-operated commutative \loc monoids. More precisely, for any $(\Omega,\top_\Omega)$-operated commutative \loc monoid $U:=(U,\top_U,\beta_U,m_U,1_U)$, there is a unique morphism $\mphi:=\mphi_U: \calf_{\Omega,\top_\Omega}\to U$ of $(\Omega,\top_\Omega)$-operated commutative \loc monoids;
\mlabel{it:commonoid}
\item  $(\bfk\,\calf _{\Omega,\top_\Omega},\top _{\bfk\,\calf _{\Omega,\top_\Omega}},  B_+, \cdot, 1 )$ is the initial object in the category of
$(\Omega,\top _{\Omega})$-operated commutative \loc algebras;
\mlabel{it:commalg}
\end{enumerate}
\mlabel{thm:locinit}
\end{thm}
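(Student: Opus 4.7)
The plan is to construct the morphism $\mphi_U: \calf_{\Omega,\top_\Omega} \to U$ recursively using the canonical decomposition provided by Lemma~\mref{lem:fdecomp}, and then verify that it satisfies the hypotheses of Lemma~\mref{lem:char}. Specifically, set $\mphi(1) = 1_U$, and for a properly decorated forest of positive degree apply the decomposition: if $(F,d) = B_+^\omega(F',d')$, define $\mphi(F,d) = \beta_U^\omega(\mphi(F',d'))$; if $(F,d) = (T_1,d_1)\cdots(T_n,d_n)$ is a product of $n\geq 2$ rooted trees, define $\mphi(F,d) = \mphi(T_1,d_1)\cdots\mphi(T_n,d_n)$, where the product is the iterated multiplication in $U$.

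For this recursion to make sense in the \loc setting, I plan to verify simultaneously, by induction on degree, two compatibility statements: (a) whenever $(F,d)\,\top_{\calf_{\Omega,\top_\Omega}}\,(F',d')$, one has $\mphi(F,d)\,\top_U\,\mphi(F',d')$; and (b) whenever $(\omega,(F,d)) \in \Omega\times_\top \calf_{\Omega,\top_\Omega}$, one has $(\omega,\mphi(F,d)) \in \Omega\times_\top U$. Property (a), applied to the pairwise independent subtrees $T_1,\ldots,T_n$ of a properly decorated forest, ensures that the iterated product $\mphi(T_1,d_1)\cdots\mphi(T_n,d_n)$ is legitimately defined in $U$ by Lemma~\mref{lem:locprod}; property (b), applied to the subforest $(F',d')$ of $B_+^\omega(F',d')$ (whose decorations are by hypothesis independent of $\omega$), ensures that $\beta_U^\omega(\mphi(F',d'))$ is defined.

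The inductive step for (a) splits into cases according to the decomposition in Lemma~\mref{lem:fdecomp} of each of $(F,d)$ and $(F',d')$. When both sides are graftings $B_+^\omega(G,e)$ and $B_+^{\omega'}(G',e')$, one combines the inductive hypotheses — $\omega \top_\Omega \omega'$, $(G,e)\,\top\,(G',e')$ and the cross-independence of each root decoration with the other subforest's decorations — with the axioms of Definition~\mref{defn:locopset}: condition~(2) yields $(\omega, \beta_U^{\omega'}(\mphi(G',e'))) \in \Omega\times_\top U$, and a double application of condition~(1) then gives $\beta_U^\omega(\mphi(G,e))\,\top_U\,\beta_U^{\omega'}(\mphi(G',e'))$. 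The mixed case (a grafting against a concatenation) and the product-against-product case are handled analogously, invoking Lemmas~\mref{lem:itact} and~\mref{lem:actmult} to distribute independence across both operator applications and products. The inductive step for (b) is a sub-argument of the above and uses condition~(2) of Definition~\mref{defn:locopset} directly.

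Once (a) and (b) are established, the five conditions of Lemma~\mref{lem:char} are immediate: (i), (iii), (v) hold by the very definition of $\mphi_U$, (ii) is (a), and (iv) is (b); hence $\mphi_U$ is a morphism of $(\Omega,\top_\Omega)$-operated commutative \loc monoids. Uniqueness is forced by the recursive identities combined with the uniqueness of the decomposition in Lemma~\mref{lem:fdecomp}. Part~(\mref{it:commalg}) then follows from part~(\mref{it:commonoid}) by linear extension, using Lemma~\mref{lem:loplinear} to pass from the monoid to its linear span. The main obstacle is the simultaneous induction on (a) and (b), which demands careful bookkeeping of how pairwise independence of decorations propagates through the recursive construction; the point is that the axioms in Definition~\mref{defn:locopset} are precisely calibrated so that this propagation goes through.
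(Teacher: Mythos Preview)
Your proposal is correct and follows essentially the same route as the paper: define $\mphi$ recursively via the decomposition of Lemma~\mref{lem:fdecomp}, reduce the morphism property to the five conditions of Lemma~\mref{lem:char}, and carry out a simultaneous induction on degree for the two locality compatibilities (your (a) and (b), the paper's Conditions~$(\mref{it:mloc})_k$ and~$(\mref{it:mact})_k$), with the same case split according to whether each forest is a grafting or a concatenation. The paper organises the induction via the filtration $\calf_k=\{(F,d):|F|\le k\}$ and appeals to Lemmas~\mref{lem:locprod} and~\mref{lem:itact} in the subcases, but this is exactly the bookkeeping you describe.
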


\begin{proof}
(\mref{it:commonoid})
Let an $(\Omega,\top_\Omega)$-operated commutative \loc monoid $(U,\top_U,\beta_U,m_U,1_U)$ be given. We only need to prove that there is a unique morphism of
$(\Omega,\top _{\Omega})$-operated commutative \loc algebras
$$\mphi=\mphi_U: (\calf _{\Omega,\top_\Omega},\top _{\calf _{\Omega,\top_\Omega}},  B_+, \cdot, 1 )\longrightarrow	 (U,\top_U,\beta_U,m_U,1_U).$$	

By Lemma~\mref{lem:char}, we only need to prove that there is a unique map
$\mphi:\calf_{\Omega,\top_\Omega} \to U$ satisfying the conditions (\mref{it:mid}) --- (\mref{it:mgraft}).
For $k\geq 0$, we set
$$\calf_k: =\{ (F,d)\in \calf_{\Omega,\top_\Omega}\, |\, |F|\leq k\}.$$
We will prove by induction on $k\geq 0$ that there is a unique map $\mphi_k: \bbf_k\to U$ fulfilling the five conditions in
Lemma~\mref{lem:char} when $\calf_{\Omega,\top_\Omega}$ is replaced by $\calf_k$, in which case, we call the corresponding conditions
Condition~$(j)_k$ for $j\in \{\mref{it:mid}, \mref{it:mloc}, \mref{it:mprod}, \mref{it:mact}, \mref{it:mgraft}\}$.

When $k=0$, we have $\calf_k=\{1\}$. Then only Condition~(\mref{it:mid}) and Condition~(\mref{it:mloc}) apply when $(F,d)=(F',d')=1$, giving the unique map
$$\mphi_0:\calf_0\to U, \quad 1\mapsto 1_U.$$
Since $(1_U,1_U)\in U\times_T U$, Condition~(\mref{it:mloc}) is satisfied.

Another  instructive example to study before the inductive step is the case   $k=1$, so
$\calf_1=\{1\}\cup \{\bullet_\omega\,|\, \omega\in \Omega\}$. Since $\bullet_\omega=B_+^\omega(1)$, the only map $\mphi:\calf_1\to U$ satisfying Conditions (\mref{it:mid}) --- (\mref{it:mgraft}) is given by
$$ \mphi(1)=1_U, \quad \mphi(\bullet_\omega)=\mphi(\beta^\omega(1)) =\beta^\omega(\mphi(1))=\beta^\omega(1_U).$$

Now let $k\geq 0$ be given and assume that there is a unique map $\mphi_k: \calf_k\to U$ satisfying Conditions~$(\mref{it:mid})_k$ --- $(\mref{it:mgraft})_k$. Consider $\dforest=(F,d)\in \calf_{k+1}$. If $\dforest$ is already in $\calf_k$, then $\mphi(\dforest)$ is uniquely defined by the induction hypothesis. If $\dforest\in \calf_{k+1}$ is not in $\calf_k$, then the degree of $\dforest$ is at least $1$. So Lemma~\mref{lem:fdecomp} shows that either there is a factorisation $\dforest=\dforest_1\cdots \dforest_n, n\geq 2,$ into independent properly decorated rooted trees, or
$\dforest=B_+^\omega(\overline{\dforest})$ for $\overline{\dforest}$ independent of $\bullet_\omega$ and necessarily in $\calf_k$. The assignment
\begin{equation}
\mphi_{k+1}(\dforest)=\left\{\begin{array}{ll} \mphi_k(\dforest_1)\cdots \mphi_k(\dforest_n), & \text{ if } \dforest = \dforest_1\cdots \dforest_n, \\
\beta^\omega(\mphi_k(\overline{\dforest})), & \text{ if } \dforest = B_+^\omega(\overline{\dforest}). \end{array} \right.
\mlabel{eq:mphik+1}
\end{equation}
is then well-defined  since $\mphi_{k}$ satisfies Conditions $(\mref{it:mid})_{k}$ --- $(\mref{it:mgraft})_{k}$ by assumption.

Note that this is in fact the only way to define $\mphi_{k+1}$ satisfying the conditions in Lemma~\mref{lem:char}, proving the uniqueness of $\mphi_{k+1}(\dforest)$.

Next we verify that the map $\mphi_{k+1}$ obtained this way indeed satisfies Conditions $(\mref{it:mid})_{k+1}$ --- $(\mref{it:mgraft})_{k+1}$. By the above equation and the inductive hypothesis, $\mphi_{k+1}$ satisfies Conditions~$(\mref{it:mid})_{k+1}$, $(\mref{it:mprod})_{k+1}$ and $(\mref{it:mgraft})_{k+1}$.

To verify Condition~$(\mref{it:mloc})_{k+1}$, consider $\dforest, \dforest'\in \calf_{k+1}$ with $\dforest \top_{\calf_{\Omega,\top_\Omega}} \dforest'$.
Depending on whether $\dforest$ or $\dforest'$ lies or not in $\calf_k$, there are four cases to consider.
In the case when both $\dforest$ and $\dforest'$ are in $\calf_k$, the condition is satisfied by the induction hypothesis. For the remaining three cases, the verifications are similar, the most complicated one being when neither $\dforest$ nor $\dforest'$ lies in $\calf_k$. So we will only verify this case. For this case, we further have four subcases depending on which of the two forms in Lemma~\mref{lem:fdecomp} that $\dforest$ or $\dforest'$ takes.
\smallskip

\noindent
{\bf Subcase 1. $\dforest=\dforest_1\cdots \dforest_n, n\geq 1,$ for independent properly decorated trees $\dforest_1,\cdots, \dforest_n$ and $\dforest'=\dforest_1'\cdots \dforest'_{n'}$ for independent properly decorated trees $\dforest'_1,\cdots, \dforest'_{n'}$.} Then all the factor trees are pairwise independent. Since all the factor trees are in $\calf_k$, by the inductive assumption of Condition~$(\mref{it:mloc})_k$, their images
$$\mphi_k(\dforest_i), 1\leq i\leq n, \quad \mphi_k(\dforest'_j), 1\leq j\leq n',$$ are pairwise independent in $U$. By Lemma~\mref{lem:locprod}, the products
$\mphi_k(\dforest_1)\cdots \mphi_k(\dforest_n)$ and $\mphi_k(\dforest'_1)\cdots \mphi_k(\dforest'_{n'})$ are independent. But by the construction of $\mphi_{k+1}$ in Eq.~(\mref{eq:mphik+1}), the last two products equal to $\mphi_{k+1}(\dforest_1\cdots \dforest_n)$ and $\mphi_{k+1}(\dforest'_1\cdots \dforest'_{n'})$. This gives Condition~$(\mref{it:mloc})_{k+1}$ in this subcase.
\smallskip

\noindent
{\bf Subcase 2.  $\dforest=B_+^\omega(\overline{\dforest})$ for $(\omega,\overline{\dforest})\in \Omega\times_\top \calf_{\Omega,\top_\Omega}$ and $\dforest'=\dforest'_1\cdots \dforest'_n, n\geq 2,$ for independent properly decorated trees $\dforest'_1,\cdots, \dforest'_n$.} Since $n\geq 2$, we have $\dforest'=\dforest'_A \dforest'_B$ with independent $\dforest'_A$ and $\dforest'_B$, both in $\calf_k$. Since $\dforest$ and $\dforest'$ are independent, we have
$$(\omega,\overline{\dforest},\dforest'_A,\dforest'_B)\in \Omega\times_\top \calf_k\times_\top \calf_k \times_\top \calf_k.$$
Then Conditions~$(\mref{it:mloc})_k$ and $(\mref{it:mact})_k$ lead to
$$(\omega,\mphi_k(\overline{\dforest}),\mphi_k(\dforest'_A),\mphi_k(\dforest'_B)) \in \Omega\times_\top U\times_\top U \times_\top U.$$
Applying Lemma~\mref{lem:itact} gives $(\beta^\omega(\mphi_k(\overline{\dforest})), \mphi_k(\dforest'_A)\mphi_k(\dforest'_B)) \in U \times_\top U$, that is,
$(\mphi_k(\beta^\omega(\overline{\dforest})), \mphi_k(\dforest'_A\dforest'_B))$ is in $U \times_\top U$, by Eq.~(\mref{eq:mphik+1}). This gives Condition~$(\mref{it:mloc})_{k+1}$ in this subcase.
\smallskip

\noindent
{\bf Subcase 3. $\dforest=\dforest_1\cdots \dforest_n, n\geq 2,$ for independent properly decorated trees $\dforest_1,\cdots, \dforest_n$ and $\dforest'=B_+^\omega(\overline{\dforest}')$ for $(\omega,\overline{\dforest}')\in \Omega\times_\top \calf_{\Omega,\top_\Omega}$.} This subcase follows from the previous subcase by the commutativity of the concatenation of the forest product and the \loc relation.
\smallskip

\noindent
{\bf Subcase 4. $\dforest=B_+^\omega(\overline{\dforest})$ and $\dforest'=B_+^{\omega'}(\overline{\dforest}')$ for $(\omega,\overline{\dforest}), (\omega',\overline{\dforest}')\in \Omega\times_\top \calf_{\Omega,\top_\Omega}$.}
Since the two forests are independent, we have
$$(\omega,\overline{\dforest},\omega',\overline{\dforest}')\in
\Omega\times_\top \calf_{k}\times_\top \Omega \times_\top \calf_k.$$
Then the locality of $\mphi_k$, in particular Condition~$(\mref{it:mact})_k$,  guaranteed by the induction hypothesis, gives
$$(\omega,\mphi(\overline{\dforest}),\omega',\mphi(\overline{\dforest}'))\in
\Omega\times_\top \calf_{k}\times_\top \Omega \times_\top \calf_k,$$
which yields, by Lemma~\mref{lem:itact} and Eq.~(\mref{eq:mphik+1}),
$$(\mphi(B_+^\omega(\overline{\dforest})), \mphi(B_+^{\omega'}(\overline{\dforest}'))) =(\beta^\omega(\mphi(\overline{\dforest})), \beta^{\omega'}(\mphi(\overline{\dforest}')))\in
U\times_\top U.$$
This gives Condition~$(\mref{it:mloc})_{k+1}$ in this subcase.

We have therefore completed the verification of Condition~$(\mref{it:mloc})_{k+1}$.

Let us finally check Condition~$(\mref{it:mact})_{k+1}$ assuming the induction hypothesis, distinguishing two cases:  $\dforest\in \calf_{k+1}$ is of the form $\dforest=\dforest_1\cdots \dforest_n, n\geq 2,$ for independent properly decorated trees $\dforest_1,\cdots, \dforest_n\in \calf_k$ or $\dforest=B_+^{\omega'}(\overline{\dforest})$ for $(\omega',\overline{\dforest})\in \Omega\times_\top \calf_k$.

In the first case, we write $\dforest=\dforest_A \dforest_B$ with $\dforest_A, \dforest_B\in \calf_k$. Then from $(\omega,\dforest)\in \Omega\times_\top \calf_{k+1}$ we have
$(\omega,\dforest_A,\dforest_B)\in \Omega\times_\top \calf_k\times_\top \calf_k$ which implies
$(\omega,\mphi_k(\dforest_A),\mphi_k(\dforest_B))\in \Omega\times_\top U\times_\top U$. This gives
$$ (\omega,\mphi_{k+1}(\dforest_A\dforest_B)) =(\omega,\mphi(\dforest_A)\mphi(\dforest_B))\in \Omega\times_\top U,$$ as needed.

In the second case, similarly we have
$(\omega,\omega',\overline{\dforest}')\in \Omega\times_\top \Omega\times_\top \calf_k$ which implies
$(\omega,\omega',\mphi(\overline{\dforest}'))\in \Omega\times_\top \Omega\times_\top U$.
Therefore,
$(\omega,\mphi(B_+^{\omega'}(\overline{\dforest}'))) =(\omega,\beta^{\omega'}(\mphi(\overline{\dforest}')))$
is in $\Omega\times_\top U$ by the assumption on $\beta$.

This completes the verification of Condition~$(\mref{it:mact})_{k+1}$. Together with the verification of the other conditions for the existence of $\mphi_{k+1}$ above, as well as that of the uniqueness of $\mphi_{k+1}$ after Eq.~(\ref{eq:mphik+1}), the inductive step is completed.
\end{proof}

\subsection{Branching of \loc maps}

We next move to the relative case. Given two   \loc  sets   $(\Omega_i, \top _{\Omega_i})$,  $i=1,2 $, let
\begin{itemize}
  	\item
  	$\mathcal{L}_{\top }( \Omega_1, \Omega_2)$ denote the set of   \loc maps  $\phi:\Omega_1\longrightarrow \Omega_2$;
  	\item  $\mathcal{L}_{\Omega_1,\Omega_2}(U_1, U_2)$ denote the set of morphisms from a $(\Omega_1,\top_{\Omega_1})$-operated \loc structure $(U_1,T_{U_1},\beta_1)$ of certain type (i.e., semigroup, monoid, nonunitary algebra, algebra) to a $(\Omega_2, \top _{\Omega_2})$-operated \loc structures $(U_2, \top_{U_2}, \beta_2)$ of the same type.
  	\end{itemize}

Recall that all these sets  are equipped with the independence relation of maps: $\phi, \psi: (A,\top _A)\to (B,\top _B)$
  		\[\phi\top \psi\Longleftrightarrow \left(a_1\top _A a_2\Longrightarrow \phi(a_1)\top _B \psi(a_2)\right).\]

By the same proof as the one of  Proposition~\mref{pp:liftrel}, we obtain

\begin{prop}\label{pp:liftedphi} Let
$\phi: \left(\Omega_1,\top _{\Omega_1}\right)\longrightarrow \left(\Omega_2,\top _{\Omega_2}\right)$ be a \loc map between \loc sets $(\Omega_1,\top_{\Omega_1})$ and $(\Omega_2,\top_{\Omega_2})$.

\begin {enumerate}
\item
For any commutative $\left(\Omega_2,\top _{\Omega_2}\right)$-operated \loc  monoid  $(U, \top _U ,\beta _{U,+} ,m_U, 1_U)$ $($with its $(\Omega_1,\top_{\Omega_1})$-operated \loc monoid structure induced by $\phi$$)$, there is a unique lift of $\phi$ to a  morphism  of operated  commutative
\loc monoids $\phi^\sharp: {\bbf }_{\Omega_1, \top _{\Omega_1}}\longrightarrow U$, which gives rise to a map
$$	\sharp: \left(	\mathcal{L} _\top( \Omega_1, \Omega_2),\top\right) \longrightarrow  \left(\mathcal{L}_{\Omega_1,\Omega_2}(\bbf _{\Omega_1,\top _{\Omega_1}}, U),\top\right), \quad
					\phi\longmapsto  \phi^\sharp.
$$
\item For any commutative \loc  algebra  $(U, \top _U ,\beta _{U,+} ,m_U, 1_U)$, there is a unique lift of $\phi$ to a morphism  of operated  commutative \loc algebras   $\phi^\sharp: {\mathcal F}_{\Omega_1, \top _{\Omega_1}}\longrightarrow U$, which gives rise to a map
$$
\sharp: \left(	\mathcal{L} _\top( \Omega_1, \Omega_2),\top\right) \longrightarrow \left(\mathcal{L}_{\Omega_1,\Omega_2}(\bfk\,\calf _{\Omega_1,\top _{\Omega_1}}, U),\top\right), \quad
					\phi\longmapsto  \phi^\sharp.
$$
\end{enumerate}
				We call $\phi^\sharp$ the {\bf $\phi$-lifted   map}, which by construction is characterised by the following properties:
				  \begin{eqnarray}
					 & \phi^\sharp(\emptyset) = 1_{U}, \label{eq:identity} \\
					 &    \phi^\sharp ((F_1,d_1)\cdots (F_n,d_n)) = \phi^\sharp(F_1,d_1)\cdots\phi^\sharp(F_n,d_n),\label{rec_branching_proc}\\
					 &{ \phi^\sharp \left( B_+^\omega(F,d)\right) =  \beta_{U,+}^{\phi(\omega)}\left(\phi^\sharp(F,d)\right),}  \label{concatenation}
					 \end{eqnarray}
					 for any mutually independent properly $\Omega_1$-decorated rooted forests $(F_1,d_1),\cdots, (F_n, d_n) $    and any $\omega\in \Omega_1$ independent of $(F,d)$.
\end{prop}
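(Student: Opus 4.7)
The plan is to emulate the proof of Proposition \ref{pp:liftrel}, replacing the non-locality universal property (Corollary \ref{co:initop}) by its locality refinement Theorem \ref{thm:locinit}. First, I would use $\phi$ to pull back the $(\Omega_2,\top_{\Omega_2})$-operated locality structure on $U$ to a $(\Omega_1,\top_{\Omega_1})$-operated locality structure on the same underlying locality monoid. Concretely, declare $(\omega,u)\in\Omega_1\times_\top U$ iff $(\phi(\omega),u)\in\Omega_2\times_\top U$, and set $\tilde\beta^\omega(u):=\beta_{U,+}^{\phi(\omega)}(u)$. The locality of $\phi$ is exactly what is needed to transport the two compatibility axioms of Definition \ref{defn:loc_op_set}, as well as the monoid axioms of Definition \ref{defn:basedlocsg}, along $\phi$: a $\top_{\Omega_1}$-independent pair $(\omega_1,\omega_2)$, together with $\top_U$-independent elements of $U$, is mapped by $\phi$ to a $\top_{\Omega_2}$-independent pair, so the original compatibility of the $(\Omega_2,\top_{\Omega_2})$-action then applies verbatim.

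With $(U,\top_U,\tilde\beta,m_U,1_U)$ now a $(\Omega_1,\top_{\Omega_1})$-operated commutative locality monoid, Theorem \ref{thm:locinit}(\ref{it:commonoid}) produces a unique morphism $\phi^\sharp:\calf_{\Omega_1,\top_{\Omega_1}}\to U$ in that category. The three identities in the statement then come for free from Lemma \ref{lem:char}: Eq.~(\ref{eq:identity}) is Condition (\ref{it:mid}), Eq.~(\ref{rec_branching_proc}) is Condition (\ref{it:mprod}) combined with the multiplicativity of $\phi^\sharp$, and Eq.~(\ref{concatenation}) unfolds as
$$\phi^\sharp\bigl(B_+^\omega(F,d)\bigr)=\tilde\beta^\omega\bigl(\phi^\sharp(F,d)\bigr)=\beta_{U,+}^{\phi(\omega)}\bigl(\phi^\sharp(F,d)\bigr)$$
by the very definition of $\tilde\beta$. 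Assigning $\phi\mapsto\phi^\sharp$ then yields the announced map $\sharp$, which is well-defined by the uniqueness clause. The algebra version in part (ii) follows by linear extension using Lemma \ref{lem:loplinear}, since a linear $\phi$ induces an $(\Omega_1,\top_{\Omega_1})$-operated locality algebra structure on $U$ by pullback.

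The main obstacle I anticipate is the bookkeeping of the pullback locality structure: one must check carefully that every appearance of the relation $\top_{\Omega_1,U}$ on $\Omega_1\times U$ used when verifying the axioms coincides with what the pullback of $\top_{\Omega_2,U}$ through $\phi$ provides, and that the additional closure conditions (e.g. that $\{\omega\}^{\top_{\Omega_1,U}}$ is still a subspace in the algebra case, and that the action remains linear on it) are inherited from their $\Omega_2$-counterparts via $\phi$. Once this pullback is cleanly set up, the proof reduces to an invocation of the initial-object property of $\calf_{\Omega_1,\top_{\Omega_1}}$, with no further recursive construction needed beyond what is already carried out in the proof of Theorem \ref{thm:locinit}.
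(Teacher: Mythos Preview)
Your proposal is correct and follows exactly the approach the paper indicates: the paper's entire proof is the single line ``By the same proof as the one of Proposition~\ref{pp:liftrel}, we obtain'', i.e.\ pull back the $(\Omega_2,\top_{\Omega_2})$-operated locality structure along $\phi$ and invoke the initial-object property of Theorem~\ref{thm:locinit} in place of Corollary~\ref{co:initop}. Your write-up actually supplies more detail (the explicit pullback relation, the verification sketch for the axioms, the appeal to Lemma~\ref{lem:char}) than the paper itself.
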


The following example of \loc operated monoids will be useful in the sequel.

\begin{lem}\label{lem:Omegaphi}
 Let $(\Omega,\top_\Omega,m_\Omega,1_\Omega)$ be a \loc monoid and let $\phi:\Omega\longrightarrow\Omega$ be a map such that $\phi \top \Id _\Omega$. Define
$$
\beta_\phi:\top _\Omega  \longrightarrow\Omega, \quad
	 (\omega,\omega')  \longmapsto \beta_\phi^\omega( \omega'):=\phi(m_\Omega (\omega,\omega')).
$$
Then $(\Omega,\top_\Omega,\beta _\phi, m_\Omega,1_\Omega)$
is an {$(\Omega,\top _\Omega)$}-operated monoid.
\end{lem}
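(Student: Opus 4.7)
The plan is to verify, one by one, the axioms in Definitions~\ref{defn:locopset} and~\ref{defn:basedlocsg} for the quintuple $(\Omega,\top_\Omega,\beta_\phi,m_\Omega,1_\Omega)$, viewing $\Omega$ as acting on itself with $U=\Omega$, $\top_U=\top_\Omega$ and $\top_{\Omega,U}=\top_\Omega$. The essential inputs will be the semigroup compatibility~\eqref{eq:semigrouploc}, which propagates independence under the product $m_\Omega$, together with the hypothesis $\phi\top\Id_\Omega$, which propagates independence under $\phi$.

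First, I would observe that $\beta_\phi$ is well-defined on $\top_\Omega$: for $(\omega,\omega')\in\top_\Omega$ the product $m_\Omega(\omega,\omega')$ is defined in $\Omega$, and then so is $\phi(m_\Omega(\omega,\omega'))$. Next, I would check the two compatibility conditions of Definition~\ref{defn:locopset}. For the first, take $(\omega,u,u')\in\Omega\times_\top\Omega\times_\top\Omega$; then $\omega,u\in\{u'\}^\top$, so by~\eqref{eq:semigrouploc} applied to $U=\{u'\}$ we get $m_\Omega(\omega,u)\in\{u'\}^\top$, i.e.\ $m_\Omega(\omega,u)\top u'$. The hypothesis $\phi\top\Id_\Omega$ then yields $\phi(m_\Omega(\omega,u))\top u'$, which is precisely $(\beta_\phi^\omega(u),u')\in\top_\Omega$. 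The second compatibility condition is entirely analogous: from $(\omega,\omega',u)\in\Omega\times_\top\Omega\times_\top\Omega$ we get $m_\Omega(\omega',u)\top\omega$ by~\eqref{eq:semigrouploc}, and then $\phi(m_\Omega(\omega',u))\top\omega$ by $\phi\top\Id_\Omega$, giving $(\omega,\beta_\phi^{\omega'}(u))\in\top_\Omega$.

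Then I would verify the semigroup-compatibility axiom in Definition~\ref{defn:basedlocsg}, namely that $(\omega,u,u')\in\Omega\times_\top\Omega\times_\top\Omega$ implies $(\omega,m_\Omega(u,u'))\in\top_\Omega$. This does not involve $\phi$ at all and is a direct application of~\eqref{eq:semigrouploc} to $U=\{\omega\}$. Finally the unit axiom $\Omega\times\{1_\Omega\}\subseteq\Omega\times_\top\Omega$ is immediate since $\{1_\Omega\}^\top=\Omega$ by definition of a \loc monoid. Since $(\Omega,\top_\Omega,m_\Omega,1_\Omega)$ is already a \loc monoid by assumption, the above verifications are all that is needed.

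There is no serious obstacle here; the mildly delicate point is to recognise that the role of the hypothesis $\phi\top\Id_\Omega$ is exactly to ensure that the independence of $m_\Omega(\omega,\omega')$ with a third element survives after applying $\phi$, which is what the two compatibility conditions of Definition~\ref{defn:locopset} demand. Everything else is a bookkeeping exercise matching the domains of $\beta_\phi$ and $m_\Omega$ to the subsets $\Omega\times_\top\Omega$ appearing in the definitions.
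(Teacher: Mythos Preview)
Your proof is correct and follows essentially the same approach as the paper's own proof: both check the two compatibility conditions of Definition~\ref{defn:locopset} using the semigroup locality condition~\eqref{eq:semigrouploc} followed by $\phi\top\Id_\Omega$, then verify the semigroup and unit axioms of Definition~\ref{defn:basedlocsg}. The paper additionally remarks that $\phi\top\Id_\Omega$ implies $\phi$ is a \loc map, but this is not used in the argument, so your omission of it is harmless.
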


\begin{proof}
Let $\left(\Omega,\top _{\Omega}\right) $ and $\phi:\Omega\longrightarrow\Omega$ be as in the statement of the lemma. By symmetry of $\top_\Omega$, $\phi\top Id$ implies $\phi\top\phi$, i.e. that $\phi$ is a \loc map.

The axioms for an $(\Omega,\top_\Omega)$-\loc operated monoid on $U=\Omega$ are checked as follows.

Let $\left( \omega, u, u' \right)$ be in $\Omega^{_\top 3}$. Then $m_\Omega (\omega,\, u)\top_\Omega  u'$ holds.
Since $\phi \top \Id _\Omega$, we obtain
$(\beta_\phi^\omega(u), u')= (\phi(m_\Omega (\omega , u)), u')$ which is in $\top_\Omega.$

Let $  \left( \omega,\omega', u \right)$ be in $\Omega^{_\top 3}$. Then $\omega \top_\Omega m_\Omega(\omega',u)$ and hence $\omega\top_\Omega \phi(m_\Omega (\omega' , u))$. This means that $(\omega, \beta_\phi^{\omega'}(u))$ is in $\top _\Omega$.

Let $\left( \omega, u, u' \right)$ be in $\Omega^{_\top 3}$. Then $\omega \top_\Omega  m_\Omega (u,u')$ holds, that is,
$(\omega, m_\Omega (u,u'))\in \top _\Omega$.

Finally $\Omega \times 1_\Omega \subset \top _\Omega$ since $(\Omega, \top_\Omega,m_\Omega, 1_\Omega)$ is a \loc monoid.
\end{proof}

Applying Proposition~\ref{pp:liftedphi}, we obtain

\begin{coro}\label{thm:existencemapbranching}
Let     $\left(\Omega,\top _{\Omega}\right) $ be a commutative \loc  monoid $($resp. a unital  commutative \loc  algebra$)$.
A {map}  $$\phi: \left(\Omega,\top _{\Omega}\right)\longrightarrow \left(\Omega,\top _{\Omega}\right)$$
such that $\phi \top \Id _\Omega$ induces a unique morphism of commutative \loc monoids $($resp. unital  commutative \loc  algebras$)$
   \[   \widehat \phi: \bbf _{\Omega, \top _{\Omega}}\longrightarrow (\Omega, \top_\Omega), \]
   $($resp.
  $\widehat \phi: \bfk\,\calf _{\Omega, \top _{\Omega}}\longrightarrow (\Omega, \top_\Omega)$$)$.
$\widehat \phi$ is called  the {\bf $\phi$-branched map}.
 By construction it is characterised by the following properties:
\begin{eqnarray}
                    &\hat  \phi(\emptyset) = 1_{\Omega} \label{eq:identitybranched} \\
                    &  \hat   \phi  ((F_1,d_1)\cdots (F_n,d_n)) = \hat \phi(F_1,d_1)\cdots\hat\phi (F_n,d_n)\label{rec_branching_procbranched}\\
                    &    \widehat \phi \big( B_+^\omega(F,d)\big) =  \phi\big(\omega\,\big(\widehat\phi (F,d)\big)\big), \label{Bbranched}
\end{eqnarray}
for any  mutually independent properly decorated forests $(F_1,d_1), \cdots, (F_n,d_n)\in \mathcal{F}_{\Omega_1,\top _{\Omega_1}}$, and any $\omega\in \Omega_1$  which is independent of $(F,d)$.
\end{coro}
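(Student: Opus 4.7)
The plan is to derive this corollary as a direct consequence of Proposition~\ref{pp:liftedphi} combined with Lemma~\ref{lem:Omegaphi}, by choosing $\Omega_1 = \Omega_2 = \Omega$, taking the lifting map to be the identity $\Id_\Omega$, and using the $(\Omega,\top_\Omega)$-operated structure on $\Omega$ induced by $\phi$.

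First I would note that the hypothesis $\phi \top \Id_\Omega$ ensures, by symmetry of $\top_\Omega$, that $\phi$ is a \loc map, so it is a legitimate input for Lemma~\ref{lem:Omegaphi}. That lemma then endows $\Omega$ with the structure of an $(\Omega,\top_\Omega)$-operated commutative \loc monoid whose grafting operators are $\beta_\phi^\omega(u) = \phi(m_\Omega(\omega,u))$. Next, I would apply Proposition~\ref{pp:liftedphi} to the \loc map $\Id_\Omega : (\Omega,\top_\Omega) \to (\Omega,\top_\Omega)$, targeting $U := (\Omega,\top_\Omega, \beta_\phi, m_\Omega, 1_\Omega)$. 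This yields a unique lift $\Id_\Omega^\sharp : \calf_{\Omega,\top_\Omega} \longrightarrow U$ in the category of $(\Omega,\top_\Omega)$-operated commutative \loc monoids; I set $\widehat\phi := \Id_\Omega^\sharp$.

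The three characterising equations \eqref{eq:identitybranched}–\eqref{Bbranched} are then just a translation of equations \eqref{eq:identity}–\eqref{concatenation} of Proposition~\ref{pp:liftedphi} into the present setting: the first two are immediate, and the third reads
\[
  \widehat\phi(B_+^\omega(F,d)) \;=\; \beta_{U,+}^{\Id_\Omega(\omega)}\bigl(\widehat\phi(F,d)\bigr) \;=\; \beta_\phi^\omega\bigl(\widehat\phi(F,d)\bigr) \;=\; \phi\bigl(\omega\,\widehat\phi(F,d)\bigr),
\]
which is exactly \eqref{Bbranched}. Uniqueness is inherited from the uniqueness statement in Proposition~\ref{pp:liftedphi}, since any morphism of commutative \loc monoids from $\calf_{\Omega,\top_\Omega}$ to $\Omega$ satisfying \eqref{Bbranched} is in particular a morphism of $(\Omega,\top_\Omega)$-operated commutative \loc monoids with respect to $\beta_\phi$ and thus a lift of $\Id_\Omega$.

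For the algebra statement, I would then invoke Lemma~\ref{lem:loplinear} to extend everything by $\bfk$-linearity: the \loc operated monoid structure on $\Omega$ (viewed inside an ambient \loc algebra) extends to a \loc operated algebra structure, and Proposition~\ref{pp:liftedphi}(ii) furnishes the unique $\bfk$-linear lift $\widehat\phi : \bfk\,\calf_{\Omega,\top_\Omega} \to \Omega$. The only mild point to double-check is the compatibility condition $\Omega \times 1_\Omega \subset \top_\Omega$, which holds because $(\Omega,\top_\Omega,m_\Omega,1_\Omega)$ is by hypothesis a \loc monoid; there is no genuine obstacle in the argument, as everything reduces cleanly to the already established universal property.
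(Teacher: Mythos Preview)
Your proposal is correct and follows essentially the same approach as the paper: use Lemma~\ref{lem:Omegaphi} to equip $\Omega$ with the $(\Omega,\top_\Omega)$-operated structure $\beta_\phi$, then apply Proposition~\ref{pp:liftedphi} with $\Omega_1=\Omega_2=\Omega$ and the identity map, setting $\widehat\phi:=\Id_\Omega^\sharp$. The paper's proof is more terse but identical in substance; your added unpacking of equation~\eqref{Bbranched} and the uniqueness remark are accurate elaborations.
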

 \begin{proof}
As before, $\phi\top Id$ implies $\phi\top\phi$,i.e. that $\phi$ is a \loc map.
Then by Lemma \ref{lem:Omegaphi} we can apply Proposition~\ref{pp:liftedphi} in the special case $\Omega_1=\Omega_2=\Omega$ with the map $Id_\Omega:(\Omega,\top_\Omega)\longrightarrow(\Omega,\beta_\phi)$. Then we have $\widehat\phi:=Id^\sharp$, which has the stated properties by the definition of $\sharp$.
 \end{proof}
 \begin{rk}
  \begin{itemize}
   \item Writing $\widehat\phi:=Id^\sharp$ might seem confusing since no $\phi$ appears on the right hand side. Of course, this is a notation artifact: $\phi$ does play a role in the right hand side, since the lift $\sharp$ is
   made with respect to the operation $\beta_\phi$. A more rigorous notation would be $Id^{\sharp,\beta_\phi}$ which we have not opted for in order to lighten the notations.
   \item One could also prove the Corollary as a consequence of Theorem~\mref{thm:univ_prop_trees_loc}, in the same way that Proposition~\mref{prop:lift_phi_non_loc} is derived from
   Corollary~\mref{co:initop}.
  \end{itemize}
 \end{rk}

\section {Multivariate regularisation of branched integrals}
\mlabel{sec:tree}

We apply the framework previously developed to study Kreimer's toy model in building a regularisation by  means of locality algebra homomorphisms. So our first goal is to define a suitable decoration set. There are several possible decoration sets; in this paper we use a very simple one  similar to that in~\cite {GPZ4}. An alternative choice can be found in~\cite{CGPZ2}.

\subsection{Linear complex powers: the \loc algebra $\calm [\call]$}
\mlabel{ss:ComplexPower}

We adapt the terminology from~\cite{GPZ1} to the algebraic locality framework developed in~\cite{CGPZ1}.

Consider a sequence $(\R^\infty,Q):=(\bbR^n,Q_n)_{n\geq 1}$ of Euclidean space with inner product $Q_n$ on $\bbR^n$ such that, under the direct system $i_n: \R ^n \to \R ^{n+1}, n\geq 1,$ of standard embeddings, we have
$Q_{n+1}|_{\R^n\times \R^n}=Q_n, n\geq 1.$

Since $Q_n$ on $\R^n$ induces an isomorphism
$$Q_n^*: \R ^n \to (\R ^n)^*,
$$
the maps
$$j_{n}=(Q_n^*)^{-1} \circ i_n^*\circ Q _{n+1}^*: \R ^{n+1} \to \R ^n,
$$
give rise to the direct systems
$$j_{n}^*: (\R ^n)^*\to (\R ^{n+1})^*,
$$
$$j_{n}^*: \calm (\R ^n \otimes \C)\to \calm (\R ^{n+1}\otimes \C).
$$
Here $\calm(\R^n \ot \C)$ is the algebra of multivariate meromorphic germs at $0$ with linear poles and real coefficients ~\mcite{GPZ1,GPZ2}.

We set
$$\call: =\varinjlim _n (\bbR ^n)^*,
\quad
\calm:=\varinjlim _n \calm (\bbR ^n\otimes \bbC).
$$

By~\mcite{GPZ2,GPZ3}, the algebra $\calm(\R^n\ot \C)$ is equipped with the linear decomposition
$$ \calm(\R^n\ot \C) = \calm_+(\R^n\ot \C)\oplus \calm ^Q_-(\R^n\ot \C)$$
as the direct sum of the subalgebra $\calm_+(\R^n\ot \C)$ of holomorphic germs and the space $\calm_-(\R^n\ot \C)$ generated by {\bf polar germs}, defined to be fractions of the form
\begin{equation}
\frac{h(\ell_1,\cdots, \ell_k)}{L_1\cdots L_n},
\mlabel{eq:polar}
\end{equation}
where $\ell_1,\cdots, \ell_k, L_1,\cdots,L_n$ are linear form such that $\{\ell_1,\cdots,\ell_k\}\perp\{L_1,\cdots,L_n\}$.
These decompositions are compatible with the  maps $j_{n}^*$, thus we have
\begin{equation}
\calm =\calm_+\oplus \calm_-, \quad \calm_\pm:= \varinjlim_n \calm_\pm (\R^n\ot \C).
\mlabel{eq:calmdecom}
\end{equation}
Let
\begin{equation}
\pi_+: \calm \longrightarrow \calm_+
\mlabel{eq:pi+}
\end{equation}
denote the projection onto $\calm_+$ along $\calm_-$.

In the algebra of functions with complex variables in $\C^\infty$ and {with} one real variable $x$, we consider the $\calm$-module of linear combinations
$$\Big\{ \sum_{i=1}^k f_i x^{L_i} \,\Big|\, f_i \in \calm, L_i\in \call, 1\leq i\leq k, k\geq 1\Big\}.$$
It is an $\calm$-subalgebra since $x^L, L\in \call$ is closed under multiplication. It can further be checked that $x^{L}, L\in \call,$ are linearly independent over $\calm$. Thus it is  isomorphic to the group ring $\calm[\call]$ over $\calm$ generated by the additive monoid $\call$. We will   henceforth make this identification.

\begin{rk} Elements $x^L \in \calm[\call]$ are   particular instances of the more general (multivariate) holomorphic families of classical symbols considered in   \cite{CGPZ2} -- with the difference that in \cite{CGPZ2} they are
symbols on $[0, +\infty)$ whereas  here  they are defined on $(0, +\infty)$.
\end{rk}

We equip the algebra $\calm$ with the
independence relation $\perp$  introduced in \cite[Definition 2.14]{CGPZ1}
\begin{equation}\label{locperp}\left(f\perp g\right) \ \Longleftrightarrow \ \left( {\rm Dep}(f) \perp {\rm Dep} (g)\right).\end{equation}
The pointwise product gives rise to a \loc  algebra $\left(\calm , \perp\right)$ and $(\call,\perp)$ is viewed as a \loc subspace of the \loc linear space  $\left(\calm , \perp\right)$. See~\mcite{CGPZ3} for another \loc structure on $\calm$.

The \loc relation $\perp$ on $\calm$ induces one on the space $\calm [\call] $:
\begin{equation}
\left(\sum_i f_ix^{L_i}\perp \sum_j g_jx^{\ell _j}\right) \ \Longleftrightarrow \ \left( \{f_i, L_i\}_i\perp \{g_j, \ell _j\}_j \text{ in } \calm\right)
\mlabel{eq:mrel}
\end{equation}
and  $\left(\calm [\call ], \perp\right)$ is a locality algebra.

Using $\call $ as the set of decorations for rooted forests in the previous sections, we have the $\call $-operated monoid   $\calf_{\call }$, the $(\call,\perp)$-operated \loc monoid $\calf_{\call,\perp}$ and the corresponding locality algebra $\R \calf_{\call,\perp}$.

\subsection{$\calm [\call]$ as an operated \loc algebra}
\label{subsec:localg}
We revisit a map defined  in  \cite {GPZ4}, viewed here as an operating map on $\calm [\call]$:

\begin {lem}\label{lem:opL} For any $L\in \call$, the operator
$$ {\mathcal I}^L(f)(x):=\int_0^\infty  \frac {f(y)y^{-L}}{y+x} dy,$$
defines a linear map from $\calm [\call]$ to $\calm [\call]$.
With
$$
 {\mathcal I}: \call\times \calm [\call]{\longrightarrow} \calm [\call],\quad
(L, f)\longmapsto {\mathcal I}^L(f),
$$
denoting the resulting action of $\call$ on $\calm[\call]$, the triple $(\calm [\call], \perp, {\mathcal I})$ is an  $(\call,\perp)$-operated algebra.
\end{lem}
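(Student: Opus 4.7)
The plan is to reduce everything to a direct computation on the $\calm$-basis monomials of $\calm[\call]$ and then verify the operated locality axioms by tracking dependency sets. By linearity, it is enough to evaluate $\mathcal{I}^L$ on an element of the form $g\cdot x^M$, with $g\in\calm$ and $M\in\call$. The classical Mellin identity
\[
\int_0^\infty \frac{y^{s-1}}{y+x}\,dy \;=\; \frac{\pi\, x^{s-1}}{\sin(\pi s)},
\]
initially valid for $0<\Re s<1$ and extended to all complex $s$ by meromorphic continuation, yields
\[
\mathcal{I}^L(g\cdot x^M)(x) \;=\; g\cdot\frac{\pi}{\sin(\pi(L-M))}\,x^{M-L}.
\]
Since $L-M\in\call$ and $\pi/\sin(\pi(L-M))$ is a meromorphic germ at $0$ with linear poles (located at $L-M\in\Z$), the right-hand side lies in $\calm[\call]$. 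Hence $\mathcal{I}^L$ is a well-defined, $\bfk$-linear self-map of $\calm[\call]$, and $\mathcal{I}\colon (L,f)\mapsto \mathcal{I}^L(f)$ is an action of $\call$ on $\calm[\call]$.

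The verification of the operated locality axioms of Definitions~\ref{defn:loc_op_set} and~\ref{defn:basedlocsg} then reduces to bookkeeping of dependency sets. The explicit formula above makes it clear that
\[
 {\rm Dep}(\mathcal{I}^L(f)) \;\subseteq\;  {\rm Dep}(f)\cup {\rm Dep}(L) \quad \tforall  f\in\calm[\call],
\]
and one trivially has $ {\rm Dep}(f\cdot f')\subseteq  {\rm Dep}(f)\cup  {\rm Dep}(f')$. Combined with the characterisations of $\perp$ in~\eqref{locperp} and~\eqref{eq:mrel}, these inclusions immediately give the two compatibility conditions of Definition~\ref{defn:loc_op_set}: if $(L,f,f')\in \call\times_\perp\calm[\call]\times_\perp\calm[\call]$, then $\mathcal{I}^L(f)\perp f'$ follows from $L\perp f'$ and $f\perp f'$; if $(L,L',f)\in \call\times_\perp\call\times_\perp\calm[\call]$, then $L\perp \mathcal{I}^{L'}(f)$ follows from $L\perp L'$ and $L\perp f$. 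The operated-semigroup compatibility $L\perp ff'$ whenever $L\perp f$ and $L\perp f'$ is likewise immediate from the inclusion $ {\rm Dep}(ff')\subseteq  {\rm Dep}(f)\cup  {\rm Dep}(f')$, and the linearity of $\mathcal{I}^L$ on $\{L\}^{\perp}$ required in Definition~\ref{defn:basedlocsg} has already been established.

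The only genuinely subtle point in the plan is the meromorphic continuation of the Mellin integral needed to make sense of $\mathcal{I}^L(g\cdot x^M)$ as an element of $\calm[\call]$ for arbitrary $L, M\in\call$. This is the standard Beta-function computation and sits squarely within the framework of multivariate meromorphic germs with linear poles of~\cite{GPZ1,GPZ2}, so it poses no real obstacle; all the remaining axioms are then formal consequences of the dependency-set bookkeeping described above.
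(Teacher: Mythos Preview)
Your proposal is correct and follows essentially the same route as the paper: reduce by $\calm$-linearity to monomials $g\,x^M$, use the classical identity $\int_0^\infty \frac{y^{-z}}{y+x}\,dy=\frac{\pi}{\sin(\pi z)}x^{-z}$ (the paper's Eq.~\eqref{eq:formula_IL} is exactly your formula), extend meromorphically, and then observe that the locality axioms are formal. The only difference is emphasis: the paper spells out the convergence strip and the analytic continuation a bit more, while you spell out the dependency-set bookkeeping for the locality axioms that the paper dismisses as ``easy to check''; neither adds anything beyond the other.
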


\begin {proof} We first prove that ${\mathcal I}^L$ defines a map from $\calm [\call]$ to $\calm [\call]$. By the $\calm$-linearity, we only need to prove $\cali ^L(x^\ell )\in \calm [\call]$.

For fixed $x>0$ and $z\in \C$, the map $y\longmapsto \frac{y^{-z}}{(y+x)^k}$ is locally integrable on $ (0, +\infty)$. Since $\frac{y^{-z}}{(y+x)^k}\underset{y\to 0}{\sim}C\, y^{-z}$ and $\frac{y^{-z}}{(y+x)^k} \underset{y\to \infty}{\sim} y^{-z-k}$, the integral $\int_0^\infty \frac{y^{-z}}{(y+x)^k}\,dy$ converges absolutely on the strip $ \Re (z) \in (1-k,1)$ and  the map $x\longmapsto \int_0^\infty \frac{y^{-z}}{y+x}\, dy$ is smooth with $k$-th derivative $x\longmapsto (-1)^k\,k!\,\int_0^\infty \frac{y^{-z}}{(y+x)^{k+1}} \, dy. $

For $\alpha >0$, we have
$$\lim _{x\to 0+}x^\alpha \,\ln (x)=0 , \ \lim _{x\to \infty }x^{-\alpha} \,\ln (x)=0,
$$ so that for fixed $x>0$,  the map $z\longmapsto \int_0^\infty \frac{y^{-z}}{y+x}\,dy$ is
  holomorphic in $z$ on the  strip $\Re (z)\in (0,1)$ with derivative $z\longmapsto -z\,\int_0^\infty \frac{\ln y\, y^{-z}}{y+x}\,dy$.
For any real number $ a \in (0, 1)$,  an explicit computation~\cite[Lemma 4.5]{GPZ4} gives:
 $$\int_0^\infty \frac{x^{-a}\,d x}{x+1}
  = \frac{\pi}{\sin (\pi a)},$$
 and hence
$$\int_0^\infty \frac{y^{-a}}{y+x}\,dy=
 x^{-a}\, \int_0^\infty \frac{y^{-a}\,d y}{y+1}=\frac{\pi}{\sin(\pi a)} x^{-a}.$$  It follows that for any complex number in the strip $\Re (z)\in (0,1)$, we have
$$\int_0^\infty \frac{y^{-z}}{y+x}\,dy=\frac{\pi}{\sin(\pi z)} x^{-z}.
$$

As a consequence of  this explicit formula, the map $z\longmapsto  \int_0^\infty \frac{y^{-z}}{y+x}\,dy$ extends to a meromorphic family (in the  parameter $z$) of smooth functions in $x$. Thus, when restricted to a neighborhood of $0$ in the parameter space, ${\mathcal I}^L$ can be viewed as a $\calm$-linear map from $\calm [\call] $ to
$\calm [\call] $. More precisely,
\begin{equation} \label{eq:formula_IL}
 {\mathcal I}^L(f x^{-\ell})=\left(x\mapsto f(x) \frac{\pi}{\sin(\pi(L+\ell))} x^{-L-\ell}\right), \quad f\in \calm, L\in \call
\end{equation}
(on the l.h.s, we write $x^{-\ell}$ for $x\to x^{-\ell}\in\calm[\call]$).
The conditions for $(\calm [\call],{\mathcal I} )$ to be an $(\call,\perp)$-operated locality algebra are then easy to check.
\end{proof}

\subsection{Locality morphisms on rooted forests}
\label{subsec:locmorforests}
Recall from Eq.~(\mref{eq:mrel}) that $\R \calf_ {\call, \perp}$ comes equipped with a \loc relation induced by the \loc relation $\perp$ on $\call$, for which $\left(\R \calf_ {\call, \perp}, \perp\right)$
is a \loc algebra. The universal property of the $(\call,\perp)$-operated algebra $\R \calf_{\call,\perp}$ discussed in Theorem~\mref{thm:locinit} yields a unique locality algebra homomorphism
$${\mathcal R} : (\R  \calf_{\call,\perp }, \top_{\calf_{\call,\perp }},B_+)\to (\calm  [\call], \top,{\mathcal I}),
$$
(with $\mathcal{I}:=\{\mathcal{I}^L,L\in\mathcal{L}\}$)
which is characterized by the following conditions:
\begin{align}
\calr(\bullet_ L) & = \int_0^\infty \frac{y^{-L}}{y+x}dy = \frac{\pi}{\sin (\pi L)} x^{-L},
\mlabel{eq:rinit}
\\
\calr((F_1,d_1)(F_2,d_2)) & = \calr(F_1,d_1) \calr(F_1,d_2) \quad \text{for all } (F_1,d_1)(F_2,d_2)\in \calf_{\call,\perp},
\label{eq:rprod}\\
\calr\left( B_+^L((F, d))\right) & = {\mathcal I}^L\left(\calr((F,d))\right) \quad \text{for all } B_+^L((F, d))\in \calf_{\call,\perp}.
\mlabel{eq:rgraft}
\end{align}

The subsequent statement follows from~\cite[Lemma 4.5]{GPZ4}.
\begin{lem} \label{lem:RFd}
For any properly decorated rooted forest $(F,d)\in  \calf_{\call,\perp }$,
$${\mathcal R}  (F,d)(x)=x^{-\sum _{v\in V(F)}d(v)}\,\prod_{v\in V(F)}\frac {\pi}{\sin {(\pi L_v)}},
$$
where, for any $v\in V(F)$, $L_v$ is the sum of decorations associated to vertices of the sub-trees with root $v$: $L_v:=\sum_{w\geq v} d(w)$.
\mlabel{lem:Rformula}
\end{lem}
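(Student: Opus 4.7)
The plan is to proceed by induction on $n = |V(F)|$, using the decomposition of properly decorated rooted forests provided by Lemma~\ref{lem:fdecomp} together with the defining equations~(\ref{eq:rinit}),~(\ref{eq:rprod}),~(\ref{eq:rgraft}) for $\calr$ and the closed formula~(\ref{eq:formula_IL}) for the action of $\cali^L$ on $\calm[\call]$.

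For the base case $n=0$ we have $F=1$ and both sides equal $1$ (empty product and empty sum in the exponent). For the one-vertex case $F=\bullet_L$, the root $v$ is the only vertex, so $L_v = L$ and the claimed formula reduces to $\frac{\pi}{\sin(\pi L)}\, x^{-L}$, which is exactly~(\ref{eq:rinit}).

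For the inductive step, assume the formula holds for every properly decorated forest with fewer than $n$ vertices, and consider $(F,d)$ with $n$ vertices. By Lemma~\ref{lem:fdecomp} there are two cases. If $(F,d) = (F_1,d_1)\cdots(F_k,d_k)$ with $k\ge 2$ mutually independent decorated trees, each $(F_i,d_i)$ has strictly fewer than $n$ vertices; applying~(\ref{eq:rprod}) and the inductive hypothesis to each factor, and using that $V(F)$ is the disjoint union of the $V(F_i)$ and that $L_v$ for $v\in V(F_i)$ only depends on the subtree of $F_i$ below $v$, yields the desired product formula. If instead $F = B_+^L(\bar F,\bar d)$ with $(\bar F,\bar d)$ independent of $\bullet_L$, apply~(\ref{eq:rgraft}) to write $\calr(F,d) = \cali^L(\calr(\bar F,\bar d))$. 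Using the inductive hypothesis, $\calr(\bar F,\bar d)(x) = C\, x^{-\ell}$ where $\ell = \sum_{v\in V(\bar F)} \bar d(v)$ and $C = \prod_{v\in V(\bar F)} \frac{\pi}{\sin(\pi L_v)}$, with $C$ independent of $x$. Since $C\in\calm$ is constant in the integration variable, formula~(\ref{eq:formula_IL}) gives
\[
\cali^L(C\, x^{-\ell}) \;=\; C\cdot\frac{\pi}{\sin(\pi(L+\ell))}\, x^{-(L+\ell)}.
\]
It then remains to identify the extra factor with the contribution of the new root $v_0$: by definition $d(v_0)=L$ and the subtree of $F$ rooted at $v_0$ is all of $F$, so $L_{v_0} = L + \ell = L + \sum_{v\in V(\bar F)} d(v) = \sum_{v\in V(F)} d(v)$; moreover for any $v\in V(\bar F)$ the subtree of $F$ rooted at $v$ coincides with the subtree of $\bar F$ rooted at $v$, so the $L_v$'s inherited from the inductive hypothesis are unchanged. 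Combining these identifications gives exactly the claimed expression for $\calr(F,d)$, completing the induction.

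The step I expect to require the most care is the grafting case: one must check that the proper-decoration hypothesis ensures $(L,\bar F,\bar d)\in \call\times_\top \calf_{\call,\perp}$ (so that $\cali^L$ may legitimately be applied and the meromorphic poles of $\frac{\pi}{\sin(\pi(L+\ell))}$ are consistent with the locality setting), and that the bookkeeping $L_{v_0} = L + \ell$ together with invariance of $L_v$ for $v\in V(\bar F)$ is correctly matched against the factor produced by~(\ref{eq:formula_IL}); the remaining verifications are routine multiplicativity arguments.
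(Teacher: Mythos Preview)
Your proposal is correct and follows essentially the same route as the paper. The only cosmetic difference is packaging: the paper defines the right-hand side as a candidate map, checks that it satisfies the three characterising properties~(\ref{eq:rinit})--(\ref{eq:rgraft}) (using~(\ref{eq:formula_IL}) for the grafting step), and then invokes the \emph{uniqueness} of $\calr$ from Theorem~\ref{thm:locinit}; you instead carry out the equivalent direct induction on $|V(F)|$, which amounts to verifying those same three properties case by case.
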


\begin{proof}
Because of the uniqueness of the operated algebra homomorphism $\calr$ satisfying the conditions in Eqs.~(\mref{eq:rinit})--(\mref{eq:rgraft}), we only need to verify that the desired equation in
the lemma satisfies the same three properties. Taking $(F,d)=\bullet_L$ gives Eq.~(\mref{eq:rinit}). The second and third properties are easy to verify noting that the right hand side of the desired equation is
$$\prod_{v\in V(F)}x^{-d(v)}\frac {\pi}{\sin (\pi L_v)}
$$
and using the explicit expression \eqref{eq:formula_IL} of $\mathcal{I}^L$.
\end{proof}

We also prove the following property for later use.
\begin{lem}
With the notation in Lemma~\mref{lem:Rformula}, the set $\{L_{v}\}_{v\in V(F)}$ is linearly independent.
\mlabel{le:indep}
\end{lem}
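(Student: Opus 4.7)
The plan is to reduce the linear independence of $\{L_v\}_{v\in V(F)}$ to two observations: first, that the proper-decoration hypothesis on $(F,d)$ already forces the family $\{d(v)\}_{v\in V(F)}$ itself to be linearly independent in $\call$; second, that the change-of-basis matrix from $\{d(v)\}$ to $\{L_v\}$ is upper triangular with $1$'s on the diagonal, along any linear extension of the tree order.

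For the first step, I would argue as follows. Since $(F,d)\in \calf_{\call,\perp}$ is properly decorated, the family $\{d(v)\}_{v\in V(F)}$ is pairwise independent in $(\call,\perp)$. By the definition~\eqref{locperp} of $\perp$, this means that the (one-dimensional) dependence subspaces $\mathrm{Dep}(d(v))$ are pairwise $Q$-orthogonal. Nonzero vectors spanning pairwise orthogonal lines in an inner-product space are linearly independent, hence $\{d(v)\}_{v\in V(F)}$ is a linearly independent subset of $\call$.

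For the second step, I would fix a linear extension $v_1\prec v_2\prec\cdots\prec v_n$ of the tree partial order $\leq$ on $V(F)$, so that $v_i\leq v_j$ implies $i\leq j$. The definition of $L_v$ then gives
\[
L_{v_i} \;=\; \sum_{v_j\geq v_i} d(v_j) \;=\; d(v_i) + \sum_{\substack{j>i \\ v_j\geq v_i}} d(v_j),
\]
because any $v_j\geq v_i$ with $j\neq i$ must satisfy $j>i$ by the choice of linear extension. Consequently, the transition matrix $A=(A_{ij})$ defined by $L_{v_i}=\sum_j A_{ij}\,d(v_j)$ is upper triangular with unit diagonal, so $\det A=1$. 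It follows that $\{L_v\}_{v\in V(F)}$ spans the same linear subspace of $\call$ as the linearly independent family $\{d(v)\}_{v\in V(F)}$ and has the same cardinality, and hence is itself linearly independent.

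The main obstacle is really the first step, namely the passage from $\perp$-independence to ordinary linear independence of the decorations; this uses the specific form~\eqref{locperp} of the locality relation on $\call\subset \calm$ coming from dependence subspaces, together with the fact that nonzero $Q$-orthogonal vectors are linearly independent. Once this is granted, the triangularity of $A$ is purely combinatorial and the conclusion follows without further work.
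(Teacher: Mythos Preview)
Your argument is correct and follows essentially the same route as the paper: both reduce to the linear independence of the decorations $\{d(v)\}_{v\in V(F)}$ (which the paper uses implicitly when it deduces $b_w=0$ from proper decoration) and then exploit the triangular relation $L_v=\sum_{w\geq v}d(w)$. You phrase the triangularity via an explicit unitriangular transition matrix along a linear extension, whereas the paper reaches the same conclusion by a contradiction argument picking an extremal vertex $v_0$ with $c_{v_0}\neq 0$; the content is the same.
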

\begin{proof}
Suppose not, then there is a linear combination
$\sum_{v\in V(F)} c_v L_{v}=0$ in which not all coefficients are zero. Let $v_0$ be maximal such that $c_{v_0}\neq 0$ with respect to the order $v\leq w$ if $w\in F_v$. By the definition of $L_{v}$, we have
$$\sum_{v\in V(F)} c_v L_{v}=\sum_{w\in V(F)} b_w d(w),$$
where
$b_w=\sum_{v\leq w} c_{v}$. Since $F$ is properly decorated, from $\sum_{v\in V(F)} c_v L_{v}=0$ we obtain $b_w=0$ for all $w\in V(F)$.
Now by the maximality of $v_0$ with nonzero coefficients, we have
$0=b_{v_0}=\sum_{v\leq v_0} c_v=c_{v_0}$. This gives the desired contradiction.
\end{proof}

\section{Multivariate renormalisation of branched integrals}
\label{sec:renCK}
\subsection{The minimal substraction scheme on fractions}
In preparation for the actual multivariate renormalisation, in this paragraph, we  investigate the holomorphic projections of a special family of meromorphic germs, that is $\pi_+\left( \frac{g(L_{w}, w\in W )}{\prod_{v\in V}L_{v}}\right)$ of a fraction $ \frac{g(L_{w}, w\in W )}{\prod_{v\in V}L_{v}}$ built from a holomorphic germ $g$, together with  their evaluation  at zero ${\rm ev}_0\circ \pi_+\left( \frac{g(L_{w}, w\in W )}{\prod_{v\in V}L_{v}}\right)$, for a fixed set of   independent linear forms $\{L_w, w\in W\}$ indexed by $W$ and a subset $V$ of $W$. The latter expression amounts to implementing a minimal substraction scheme on the fraction  $\frac{g(L_{w}, w\in W )}{\prod_{v\in V}L_{v}}$.
\begin{thm}\label{thm:reductionproc} Given  $V\subset W$ and $g=g(z_w, w\in W)$    a holomorphic germ at zero,
\begin{enumerate}
\item there are holomorphic germs at zero $f_v(L_{w}, w\in W)$ for $ v\in V$   such that
 \begin{equation}\label{eq:piplusg}\pi_+\left( \frac{g(L_{w}, w\in W )}{\prod_{v\in V}L_{v}}\right)=\sum_{v\in V}\pi_+\left( \frac {f_v(L_{w}, w\in W)}{\prod_{u\in V\setminus \{v\}}L_u} \right).\end{equation}
\mlabel{it:red1}
 \item for two sets of linear forms $L_{iv}, v\in W, i=1,2$ related by  \begin{equation}
Q(L_{1v},L_{1w})=c\, Q(L_{2v}, L_{2w})\quad \text{for all } v, w\in W,
\mlabel{eq:qc}
\end{equation}
with $c$ some constant, we have \begin{equation}
{\rm ev}_0 \circ \pi _+  \left (\frac {g(L_{1w}, w \in W)}{\prod _{v\in V}L_{1v}} \right )={\rm ev}_0 \circ \pi _+  \left (\frac {g(L_{2w}, w\in W)}{\prod _{v\in V}L_{2v}} \right ).
\mlabel{eq:veq}
\end{equation}
\label{it:red2}
\end{enumerate}
\end{thm}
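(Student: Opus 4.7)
The plan is to prove part~(1) by an iterated Hadamard decomposition that reduces the pole order one variable at a time, combined with an orthogonal decomposition of the residual term, and to prove part~(2) by combining equivariance of $\pi_+$ under orthogonal transformations with a homogeneity argument.

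For part~(1), I would proceed by induction on $n = |V|$. The base $n=1$ is immediate: $\pi_+(g/L_{v_0})$ is by definition a holomorphic germ, so set $f_{v_0} := \pi_+(g/L_{v_0})$. For the inductive step, applying the multivariate Hadamard lemma to the holomorphic germ $g$ at $0$ yields
\[
g(z_w, w \in W) \;=\; g|_{z_V = 0}(z_w, w \in W \setminus V) \;+\; \sum_{v \in V} z_v\, G_v(z_w, w \in W),
\]
with each $G_v$ holomorphic (obtainable via integral formulas such as $G_v(z)=\int_0^1 \partial_{z_v} g(\cdots)\,dt$). Substituting $z_w=L_w$ and dividing by $\prod_{v\in V} L_v$ gives
\[
\frac{g(L_w)}{\prod_{v\in V} L_v} \;=\; \frac{g_V(L_w, w \in W \setminus V)}{\prod_{v\in V} L_v} \;+\; \sum_{v\in V}\frac{G_v(L_w)}{\prod_{u\in V\setminus\{v\}} L_u},
\]
where $g_V := g|_{z_V=0}$. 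Under $\pi_+$ the second sum is already in the claimed form with $f_v:=G_v$. The remaining task is to absorb $\pi_+\bigl(g_V/\prod_{v\in V} L_v\bigr)$ into the sum. Here I invoke the orthogonal-decomposition techniques of~\cite{GPZ3}: split each $L_w$ ($w\in W\setminus V$) as $L_w = L_w^{\parallel}+L_w^{\perp}$ with $L_w^{\parallel}$ in the span of $\{L_v\}_{v\in V}$ and $L_w^{\perp}$ perpendicular to it. Taylor-expanding $g_V(L_w^{\parallel}+L_w^{\perp})$ in the parallel components, the $L^{\perp}$-only leading term gives $g_V(L_w^{\perp})/\prod_{v\in V} L_v$, which is polar (numerator orthogonal to denominator) and hence in $\calm_-$, killed by $\pi_+$; each remaining term carries a factor $L_w^{\parallel}$ that, being a linear combination of $\{L_v\}_{v\in V}$, cancels at least one denominator factor. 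After repackaging (moving any excess $L_v$'s into the numerator so each residual fraction has exactly $|V|-1$ denominator factors), these terms fit into the claimed form.

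For part~(2), the argument rests on two invariances of $\mathrm{ev}_0 \circ \pi_+$. First, $\pi_+$ is equivariant under orthogonal transformations $\Phi$ of $(\R^\infty,Q)$: the defining orthogonality condition $\{\ell_i\}\perp\{L_j\}$ for polar germs is preserved by the induced action on linear forms, so the decomposition $\calm=\calm_+\oplus\calm_-$ is $\Phi$-equivariant, and combined with $\Phi(0)=0$ we get invariance of $\mathrm{ev}_0\circ\pi_+$ under pullback by $\Phi$. Second, $\mathrm{ev}_0\circ\pi_+$ is invariant under uniform rescaling $L_w\mapsto\mu L_w$ ($\mu>0$): Taylor-expanding $g(\mu L_w)=\sum_{\alpha} c_\alpha\,\mu^{|\alpha|}L^\alpha$, each $L^\alpha/\prod_{v\in V} L_v$ is homogeneous of degree $|\alpha|-|V|$; since $\calm_+$ and $\calm_-$ are both graded by homogeneity and a holomorphic germ of strictly negative degree must vanish, only the terms with $|\alpha|=|V|$ survive $\mathrm{ev}_0\circ\pi_+$, and their $\mu$-factors cancel the overall $\mu^{-|V|}$ from $\prod L_{1v}$. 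Combining the two: the hypothesis $Q(L_{1v},L_{1w})=c\,Q(L_{2v},L_{2w})$ forces $c>0$ by positivity of $Q$; setting $\mu=\sqrt{c}$, the families $(L_{1w})_{w\in W}$ and $(\mu L_{2w})_{w\in W}$ share a common Gram matrix, so standard linear algebra supplies an orthogonal transformation $\Phi$ of a sufficiently large $\R^n$ such that $L_{1w}=\mu L_{2w}\circ\Phi$, and the two invariances combine to yield~\eqref{eq:veq}.

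The main obstacle, I expect, is the absorption step in part~(1): without the orthogonal splitting, the residual term $\pi_+(g_V/\prod_V L_v)$ is not manifestly a polar germ, since the $L_w$ for $w\notin V$ need not be $Q$-orthogonal to the $L_v$ for $v\in V$. The parallel-perpendicular decomposition of each $L_w$ with respect to $E_V:=\mathrm{span}\{L_v\}_{v\in V}$, which is exactly the computational technique of~\cite{GPZ3}, is what actually unlocks the reduction. A secondary subtlety is verifying that $\pi_+$ respects both orthogonal equivariance and the homogeneity grading used in part~(2); both follow from the structural definition of polar germs via $Q$-orthogonality and can be cited directly from~\cite{GPZ2,GPZ3}.
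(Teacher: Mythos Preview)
Your argument for part~(\ref{it:red1}) is correct and follows essentially the paper's own route: orthogonal decomposition of each $L_w$ relative to $\mathrm{span}\{L_v:v\in V\}$, followed by a telescopic/Hadamard expansion showing the non-polar remainder is divisible by some $L_v$. Your preliminary Hadamard step in the $z_v$-variables is harmless but superfluous --- the paper applies the orthogonal splitting directly to $g$, observing that $L'_v=0$ for $v\in V$, so its polar term $g(L'_w)/\prod_{v\in V} L_v$ already coincides with your $g_V(L_w^\perp)/\prod_{v\in V} L_v$.

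For part~(\ref{it:red2}), however, you take a genuinely different route. The paper proceeds by induction on $|V|$, parallel to part~(\ref{it:red1}): it shows that the orthogonal-projection coefficients $a_{wv}$ agree for the two families (since the linear systems determining them differ only by the overall scalar $c$), so that the holomorphic germs $f_v$ produced in the reduction are related by the substitution $L_{1w}\mapsto L_{2w}$, and the inductive hypothesis on $|V|-1$ applies to each summand. Your argument instead uses global symmetry: $\pi_+$ is equivariant under pullback by orthogonal transformations and under dilations (both preserve the $Q$-orthogonality defining $\calm_-$), $\mathrm{ev}_0$ is fixed by both, and equal Gram matrices force the existence of an orthogonal map carrying $(\mu L_{2w})_w$ to $(L_{1w})_w$. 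This is cleaner and avoids re-running the recursion, at the price of importing the equivariance/grading properties of $\pi_+$ from~\cite{GPZ2,GPZ3}; the paper's approach is more self-contained and keeps the explicit reduction formulas of part~(\ref{it:red1}) visible throughout. One small remark: your termwise Taylor argument for scaling invariance tacitly uses that $\pi_+$ respects the homogeneity filtration on germs, which needs a word of justification for infinite sums; it is simpler to observe directly that the dilation $z\mapsto\mu z$ preserves $\perp$, hence the splitting $\calm=\calm_+\oplus\calm_-$, and fixes the origin.
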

\begin{proof} The proof uses techniques similar to those of the proof of \cite[Theorem 2.11]{GPZ3}.
(\ref{it:red1}).
To prove Eq.~(\ref{eq:piplusg}), we proceed by induction on $|V|$.

When $|V|=0$, the fraction is already holomorphic and hence the result is the holomorphic function. In general, for a given $V\subseteq W$, let
$$ (\R^n)^* = {\rm span}(V)\bigoplus {\rm span} (V)^{\perp} $$
be the decomposition of $(\R^n)^*$ as the direct sum of the linear span of $L_v, v\in V$ (which, without loss of generality, can be assumed to be linearly independent) and its orthogonal complement.
Then $w\in W$ has the decomposition
\begin{equation}
L_{w}=L''_w+L'_w=\sum _{v\in V} a_{wv}L_{v}+L'_w.
\mlabel{eq:perp}
\end{equation}
The projections $L'_w$ and $L''_w$ are unique, thus the choice of the coefficients $a_{wv}$ with respect to the linearly independent set $\{L_v, v\in V\}$ is unique.

We therefore get the decomposition
$$
\frac {g(L_{w}, w\in W)}{\prod _{v\in V}L_{v}}=\frac {g(L'_{w}, w\in W)}{\prod _{v\in V}L_{v}}+\frac {g(L_{w},w \in W)-g(L'_{w}, w \in W)}{\prod _{v\in V}L_{v}},$$
where
$$\frac {g(L'_{w}, w\in W)}{\prod _{v\in V}L_{v}}$$
is a polar germ as defined in Eq.~(\mref{eq:polar}) and hence is annihilated by $\pi_+$. Thus we only need to consider the second fraction
\begin{eqnarray}
&&\frac {g(L_{w},w \in W)-g(L'_{w}, w \in W)}{\prod _{v\in V}L_{v}} \notag\\
 &=& \frac {g(\sum_{v\in V} a_{wv}L_{v}+L'_{w},w \in W)-g(L'_{w}, w \in W)}{\prod _{v\in V}L_{v}}.
\mlabel{eq:second}
\end{eqnarray}
Using the expression of $L'_w=L_w-\sum a_{wv}L_v$ from Eq.~(\mref {eq:perp}), we note  that the fraction is a function in $L_{v}, v\in W$.

Before proceeding further, we introduce some notations to simplify the presentation, setting
$$ \vec{L}:=(L_{w}, w\in W), \quad \vec{L}':=(L'_{w}, w\in W).$$
For $v\in V$, $u\in W$, we set $\vec{L}_{uv}:=(\delta_{wu} L_{v}, w\in W)$. Then we have
\begin{equation}
\vec{L}=\vec{L}'+\sum_{j=1}^N c_{j} \vec{M}_{j}, \quad \text{where }\vec{M}_{j}=\vec{L}_{uv} \text{ and } c_{j}=a_{uv} \text{ for some }v\in V, u\in W.
\mlabel{eq:perp2}
\end{equation}

With these notations, we have the telescopic expansion

\begin{eqnarray*}
&&g\big(L'_{w}+\sum_{v\in V} a_{wv}L_{v},w \in W\big)-g(L'_{w}, w \in W)\\
&=&
g\big(\vec{L}'+\sum_{j=1}^N c_{j} \vec{M}_{j}\big)
-g(\vec{L}') \\
&=& \sum_{\ell=1}^N\Big( g\big(\vec{L}'+\sum_{j=1}^\ell c_{j}\vec{M}_{j} \big)
- g\big(\vec{L}'+\sum_{j=1}^{\ell-1} c_{j}\vec{M}_{j}\big)\Big)
\end{eqnarray*}
with the convention that the sum over the empty set is zero.

If $\vec{M}_{\ell }=\vec{L}_{u_\ell v_\ell}$, then the fraction
$$h_\ell(L_{w}, w\in W):=\frac{g\big(\vec{L}'+\sum_{j=1}^\ell c_{j}\vec{M}_{j} \big)
- g\big(\vec{L}'+\sum_{j=1}^{\ell-1} c_{j}\vec{M}_{j}\big)}{{L}_{v_\ell}}$$
is holomorphic  as a function  of $L_w, w\in W$. So
\begin{equation*} \frac{g\big(\vec{L}'+\sum_{j=1}^\ell c_{j}\vec{M}_{j} \big)
- g\big(\vec{L}'+\sum_{j=1}^{\ell-1} c_{j}\vec{M}_{j}\big)}{\prod_{v\in V} L_{v}}
=\frac {h_\ell(L_{w}, w\in W)}{\prod_{v\in V\setminus \{v_\ell\}} L_{v}}, 1\leq \ell \leq N.
\mlabel{eq:frec}
\end{equation*}
Now taking
$$f_v(L_{w}, w\in W):=\sum h_\ell(L_{w}, w\in W)$$
where the sum is over all $\ell$ with $\vec{M}_{\ell}=\vec{L}_{uv}$ for some $u\in W$, we have Eq. (\ref{eq:piplusg}).
\smallskip

\noindent
(\ref{it:red2}).
To prove Eq.~(\ref{eq:veq}) we proceed by induction on $|V|$ along the lines of the first part of the proof. The identity clearly holds when $|V|=0$ since both sides equal to $g(z_w, w \in W)|_{z_w=0}$.

Assume that the equation holds when $|V|=k$ for $k\geq 0$ and consider the case when $|V|=k+1$. For $w\in W$ and $i=1,2$, the decomposition in  Eq.~(\mref{eq:perp}) yields
\begin{equation}
L_{iw}=L''_{iw}+L'_{iw} =\sum _{v\in V} a_{iwv}L_{iv}+L'_{iw}.
\mlabel{eq:perp3}
\end{equation}

By Eq.~(\ref{eq:qc}), we have
\begin{claim}
For any $w\in W$ and $v\in V$, the equations $a_{1wv}=a_{2wv}$ holds.
\mlabel{cl:aequal}
\end{claim}

\begin{proof} Taking inner product with $L_{iu}$, $u\in V$ in Eq.~(\ref {eq:perp3}), we have a linear system for $a_{iwv}$:
$$\sum _{v\in V} Q(L_{iv}, L_{iu})a_{iwv}=Q(L_{iw}, L_{iu}).
$$
Since $\{L_{iu}, u\in V\}$ is linearly independent by assumption,
the matrix
$$(Q(L_{iv}, L_{iu})_{u,v \in V})
$$
is invertible and this system has a unique solution, as already shown before. Further, by Eq.~(\mref{eq:qc}), the coefficients of the linear system for $i=2$ are $c$ times those for $i=1$. Hence the two systems have the same solution, proving the equation.
\end{proof}

With notations similar to those of the first part of the proof,
\begin{eqnarray*}
\lefteqn{{\rm ev}_0 \circ \pi _+  \left ( \frac {g(L_{iw}, w\in W)}{\prod _{v\in V}L_{iv}}\right )={\rm ev}_0 \circ \pi _+  \left (\frac {g(L_{iw},w \in W)-g(L'_{iw}, w \in W)}{\prod _{v\in V}L_{iv}}\right )}\\
&=&\sum_{\ell=1}^N {\rm ev}_0 \circ \pi _+ \Big( \frac {g\big(\vec{L}'+\sum_{j=1}^\ell c_{j}\vec{M}_{j} \big)
- g\big(\vec{L}'+\sum_{j=1}^{\ell-1} c_{j}\vec{M}_{j}\big)}{\prod _{v\in V}L_{iv}}\Big)\\
&=&\sum_{\ell=1}^N {\rm ev}_0 \circ \pi _+ \Big( \frac {h_\ell(L_{iv}, v\in W)}{\prod _{v\in V\setminus \{v_\ell\}}L_{iv}}\Big).
\end{eqnarray*}
By Claim~\mref{cl:aequal}, if $L_{1v}$ is replaced by $L_{2v}, v\in W$, then $L'_{1v}, \vec{L}'_{1}, c_{1j}, M_{1j}, \vec{M}_{1j}$ are replaced by $L'_{2v}, \vec{L}'_{2}, c_{2j}, M_{2j}, \vec{M}_{2j}$ respectively. Thus the fraction
$$\frac {h_\ell(L_{iv}, v\in W)}{\prod _{v\in V\setminus \{v_\ell\}}L_{iv}}$$
when $i=2$ is obtained from the fraction when $i=1$ upon replacing its variables $L_{1v}$ by $L_{2v}, v\in W$.
Thus the induction assumption yields the conclusion.
\end{proof}

\subsection{Locality morphisms on properly decorated rooted forests}
\label{subsec:renmorph}

 In order to renormalise branched integrals by means of a locality multivariate regularisation, we need to introduce some locality morphisms.
We state a straightforward yet useful preliminary result without proof.

\begin{lem} The evaluation map at $x=1$:
\begin{equation}
{\rm ev}_1:={\rm ev}_{x=1}: \left(\calm [\call ], \perp\right)\longrightarrow \left(\calm, \perp\right), \quad f x^{L}\mapsto f, \quad f\in \calm, L\in \call,
\mlabel{eq:ev1}
\end{equation}
is a homomorphism of \loc algebras.
\end{lem}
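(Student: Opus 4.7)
The plan is to verify each ingredient separately, since $\text{ev}_1$ being a locality algebra morphism amounts to checking (i) $\bfk$-linearity, (ii) compatibility with the unit, (iii) multiplicativity, and (iv) preservation of the locality relation $\perp$. All four checks reduce to direct computations using the fact that $\calm[\call]$ is defined as the group ring over $\calm$ generated by $\call$, with the $x^L$ linearly independent over $\calm$.

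For the algebraic part, linearity is immediate from the formula $fx^L \mapsto f$ extended by linearity. For the unit, the unit of $\calm[\call]$ is $1 \cdot x^0$ (viewing $0 \in \call$), and $\text{ev}_1(1 \cdot x^0) = 1 = 1_\calm$. For multiplicativity it suffices, by linearity, to check on basis elements: given $fx^L$ and $gx^{L'}$, the product in $\calm[\call]$ is $(fg)x^{L+L'}$, hence
\[
\text{ev}_1\bigl((fx^L)(gx^{L'})\bigr) = \text{ev}_1\bigl((fg)x^{L+L'}\bigr) = fg = \text{ev}_1(fx^L)\,\text{ev}_1(gx^{L'}).
\]

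For the locality part, I would unwind the definition~\eqref{eq:mrel}: if $\sum_i f_i x^{L_i} \perp \sum_j g_j x^{\ell_j}$ in $\calm[\call]$, then by definition the sets $\{f_i, L_i\}_i$ and $\{g_j, \ell_j\}_j$ are mutually independent in $(\calm, \perp)$. In particular $\{f_i\}_i \perp \{g_j\}_j$ in $\calm$, and since $\perp$ is compatible with the linear structure of $\calm$ (any polar subset is a linear subspace by Definition~\ref{defn:lsg}), we conclude that $\sum_i f_i \perp \sum_j g_j$ in $\calm$. But these are precisely $\text{ev}_1\bigl(\sum_i f_i x^{L_i}\bigr)$ and $\text{ev}_1\bigl(\sum_j g_j x^{\ell_j}\bigr)$, so $\text{ev}_1$ is a locality map.

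There is really no serious obstacle here; the statement is essentially a bookkeeping lemma. The only subtle point to stress is that the locality relation on $\calm[\call]$ was designed in~\eqref{eq:mrel} precisely so that projection onto the $\calm$-coefficient (which is what $\text{ev}_1$ does) automatically preserves $\perp$. So the proof is a two-line verification plus an explicit reference to~\eqref{eq:mrel} and Definition~\ref{defn:lsg}.
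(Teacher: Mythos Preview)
Your proposal is correct. The paper itself omits the proof entirely, stating only ``We state a straightforward yet useful preliminary result without proof,'' so there is nothing to compare against; your verification of linearity, unit, multiplicativity, and preservation of $\perp$ via Eq.~\eqref{eq:mrel} and the locality vector space axiom is exactly the routine check the authors had in mind.
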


Combining the locality  properties of ${\rm ev}_1$ and ${ \mathcal  R}$ leads to the following properties of their composition:
\begin{prop}
The map
$${ \mathcal  R}_1 := {\rm ev}_1 \circ { \mathcal  R}: \bbR \calf _{\call,\perp}\to \calm
$$
sends properly decorated forests to
\begin {equation}
\label {eq:BarR}
{\mathcal  R}_1(F,d)=\prod_{v\in V(F)}\frac {\pi}{\sin {(\pi L_v)}}
\end{equation}
(with $L_v$ as in Lemma \mref{lem:RFd})
and defines  a homomorphism of locality algebras
\begin{equation}\label{eq:calr1}\calr_1  : (\bbR \calf _{\call,\perp}, \perp _{\calf _{\call,\perp}})\to \left(\calm , \perp \right).\end{equation}

 In particular,  for independent properly decorated forests $(F, d)$ and $(F', d')$, we have
$$ { \mathcal  R}_1 (F, d)\perp { \mathcal  R}_1(F', d') \ \text{ and } \ { \mathcal  R}_1\left( (F, d)\bullet (F', d')\right)= { \mathcal  R}_1 (F, d)\, { \mathcal  R}_1(F', d'). $$
\mlabel{thm:locrenorm}
\end{prop}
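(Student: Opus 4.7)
The statement has two content-bearing parts: the explicit formula \eqref{eq:BarR} and the locality algebra homomorphism property \eqref{eq:calr1}. My plan is to derive both from material already in the excerpt, without any further computation beyond a single substitution.

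For the formula, I would read off ${\mathcal R}_1$ by applying ${\rm ev}_1$ to the expression given in Lemma \mref{lem:Rformula}, namely
$$
{\mathcal R}(F,d)(x) = x^{-\sum_{v \in V(F)} d(v)} \prod_{v \in V(F)} \frac{\pi}{\sin(\pi L_v)}
$$
for any $(F,d) \in \calf_{\call,\perp}$. Since ${\rm ev}_1$ sends $f\,x^L$ to $f$, that is, substitutes $x=1$ and hence replaces each $x^{-d(v)}$ by $1$, the monomial prefactor collapses and we are left with the product $\prod_{v \in V(F)} \frac{\pi}{\sin(\pi L_v)}$, which is \eqref{eq:BarR}.

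For the homomorphism property, I would note that ${\mathcal R}_1$ is a composition of two locality algebra homomorphisms: ${\mathcal R}$, supplied by the universal property of Theorem \mref{thm:locinit} as constructed in Paragraph \ref{subsec:locmorforests}, and ${\rm ev}_1$, shown to be a locality algebra homomorphism by the lemma stated immediately before the proposition. A composition of locality algebra homomorphisms is again one: if $(F,d) \top_{\calf_{\call,\perp}} (F',d')$, then ${\mathcal R}(F,d) \perp {\mathcal R}(F',d')$ by locality of ${\mathcal R}$, and then ${\rm ev}_1({\mathcal R}(F,d)) \perp {\rm ev}_1({\mathcal R}(F',d'))$ by locality of ${\rm ev}_1$; multiplicativity transports through the composition on independent pairs in the same way, yielding \eqref{eq:calr1}. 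The ``in particular'' assertion is then a direct restatement of locality and (partial) multiplicativity for the specific pair $(F,d), (F',d')$.

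I do not anticipate any genuine obstacle. The only subtlety worth tracking is the partial nature of the products on $\bbR \calf _{\call,\perp}$, $\calm[\call]$, and $\calm$: because these products are defined only on $\perp$-related pairs, I must verify that independence is transported through each arrow before invoking multiplicativity. This transport is, however, guaranteed at every stage by the locality statements already established, so the proof reduces entirely to bookkeeping.
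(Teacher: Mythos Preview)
Your proposal is correct and matches the paper's approach exactly: the paper does not give a separate proof block but introduces the proposition with the sentence ``Combining the locality properties of ${\rm ev}_1$ and ${\mathcal R}$ leads to the following properties of their composition,'' which is precisely your argument, with the explicit formula read off from Lemma~\mref{lem:Rformula} by setting $x=1$.
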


Recall from Eq.~(\mref{eq:pi+}) that $\calm_-$ is a \loc ideal of $\calm$, implying that $\pi_+:\calm\to \calm_+$ is a \loc morphism as proved in \cite{CGPZ1}. Composing  the locality algebra homomorphism
${ \mathcal  R}_1  : \bbR \calf _{\call, \perp}\to \calm
$ with the    \loc morphism $\pi _+$   and the evaluation
$${\rm ev}_0:={\rm ev}_{z=0}:\calm_+\to \C$$
of a holomorphic germ at zero yields the subsequent statement.

\begin{propdefn}
The map ${ \mathcal  R}^{\rm ren}$  on $\bbR \calf _{\call, \perp}$   defined by
\begin{equation}
\label{eq:calrren} \mathcal  R^{\rm ren}  :={\rm ev}_0\circ\pi_+\circ  { \mathcal  R}_1
\end{equation}
   is a  locality character on the \loc algebra $\bbR \calf _{\call, \perp}$, called the {\bf renormalised character}.
\mlabel{thm:locrenormr}
\end{propdefn}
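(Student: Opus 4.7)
The plan is to exhibit $\mathcal{R}^{\rm ren}$ as the composition of three maps, each with good (locality) multiplicative behaviour, and then assemble the pieces. First, by Proposition~\ref{thm:locrenorm}, the map $\mathcal{R}_1:(\bbR\calf_{\call,\perp},\perp_{\calf_{\call,\perp}})\to(\calm,\perp)$ is a \loc algebra homomorphism. Second, the projection $\pi_+:\calm\to \calm_+$ is a \loc algebra homomorphism; this is recorded immediately before the statement, and follows from the fact that $\calm_-$ is a \loc ideal of $\calm$ (a result from \cite{CGPZ1}). Third, since $\calm_+$ consists of holomorphic germs at the origin, the evaluation ${\rm ev}_0:\calm_+\to\bbC$ is a well-defined unital algebra homomorphism; regarding $\bbC$ with the trivial \loc relation $\bbC\times\bbC$, it is automatically a \loc algebra homomorphism.

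Since any map into $\bbC$ (with the full locality relation) is vacuously \loc, the only thing to verify for $\mathcal{R}^{\rm ren}$ is that it is a unital algebra homomorphism on pairs of independent elements. For unitality, $\mathcal{R}_1(1)=1\in\calm_+$, so $\pi_+\mathcal{R}_1(1)=1$ and ${\rm ev}_0(1)=1$. For multiplicativity, given properly decorated forests $(F,d)\perp(F',d')$, Proposition~\ref{thm:locrenorm} gives both $\mathcal{R}_1(F,d)\perp\mathcal{R}_1(F',d')$ in $\calm$ and $\mathcal{R}_1((F,d)(F',d'))=\mathcal{R}_1(F,d)\,\mathcal{R}_1(F',d')$. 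The locality multiplicativity of $\pi_+$, applied to this independent pair, then yields
\[
\pi_+\bigl(\mathcal{R}_1((F,d)(F',d'))\bigr)=\pi_+\bigl(\mathcal{R}_1(F,d)\bigr)\,\pi_+\bigl(\mathcal{R}_1(F',d')\bigr),
\]
and a final application of the (ordinary) algebra homomorphism ${\rm ev}_0$ delivers the desired identity $\mathcal{R}^{\rm ren}((F,d)(F',d'))=\mathcal{R}^{\rm ren}(F,d)\,\mathcal{R}^{\rm ren}(F',d')$.

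The real content has already been established earlier in the paper: the universal property of $\calf_{\call,\perp}$ (Theorem~\ref{thm:locinit}) supplies $\mathcal{R}_1$, the explicit formula~\eqref{eq:BarR} verifies that its image consists of products of $\tfrac{\pi}{\sin(\pi L_v)}$ whose pole-carrying linear forms $L_v$ are independent (Lemma~\ref{le:indep}), and the locality multiplicativity of $\pi_+$ was proved in \cite{CGPZ1}. Consequently, I expect no serious obstacle: the proof is a short composition argument and a straightforward check of the unital and multiplicative properties on independent inputs.
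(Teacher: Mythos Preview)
Your proposal is correct and follows essentially the same route as the paper: the paper's argument is contained in the paragraph immediately preceding the Proposition-Definition, where the result is obtained by composing the locality algebra homomorphism $\mathcal{R}_1$ (Proposition~\ref{thm:locrenorm}), the locality morphism $\pi_+$ (from \cite{CGPZ1}), and the evaluation ${\rm ev}_0$. Your write-up is slightly more explicit in spelling out unitality and multiplicativity on independent pairs, and your reference to Lemma~\ref{le:indep} is not actually needed here (it is used later in \S\ref{ss:similar}), but the substance is the same.
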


\subsection{Renormalised values on similar properly decorated rooted forests}
 \mlabel{ss:similar}
Theorem \ref {thm:reductionproc} provides a useful recursive procedure for $\calr^{\rm ren}$ by means of an algorithm to evaluate the renormalised value of any given branched integral.

Inserting
in Eq.~(\mref{eq:BarR}),  the Laurent expansion at $x=0$,
$$\frac {\pi}{\sin {(\pi x)}}=\frac{1}{x} + h(x),$$
where $h(x)$ is holomorphic, yields
\begin{eqnarray}
{\mathcal  R}_1(F,d)
&=& \prod_{ v\in V(F)}\left(\frac 1{ L_{v}}+h( L_{v})\right) \notag\\
&=&\sum _{V\subseteq V(F)}\frac 1{\prod _{v\in V}L_{v}}\left( \prod _{v\in V(F)\setminus V}h(L_{v})\right)\mlabel{eq:r1}\\
&=& \sum _{V\subseteq V(F)} \frac{g(L_{w}, w\in V(F)\setminus V)}{\prod_{v\in V}L_{v}}, \notag
\end{eqnarray}
where $g(z_w,w\in V(F)\setminus V):=\prod _{w\in V(F)\setminus V}h(z_w)$ is holomorphic. In view of Lemma~\mref{le:indep}, the linear forms $\{L_{w},w\in V(F)\}$ are linearly independent, thus the fraction can be regarded as a function with variables in $L_w, w\in V(F)$.

The  renormalised locality character ${ \mathcal  R}^{\rm ren}$ has very special properties besides functorial ones.

 \begin {defn} \label{defn:similar}Two properly decorated rooted forests $(F_1, d_1)$ and $(F_2, d_2)$ are called {\bf similar} if $F_1=F_2$ and if there exists a constant $c\in \bbR_{>0}$ such that $Q(d_1(v),d_1(v))=c\, Q(d_2(v), d_2(v))$ for any $v\in V(F_1)=V(F_2)$.
\end{defn}

\begin{coro}
\label {pp:similarity}
For similar properly decorated rooted forests $(F_1, d_1)$ and $(F_2, d_2)$, we have
$$\calr ^{\rm ren} (F_1, d_1)=\calr ^{\rm ren} (F_2, d_2).
$$
\end{coro}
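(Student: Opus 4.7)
The plan is to combine the sum-over-subsets expansion of $\mathcal{R}_1(F, d)$ from Eq.~\eqref{eq:r1} with the $Q$-invariance statement of Theorem~\ref{thm:reductionproc}.(ii), showing that each term of the expansion for $\calr^{\rm ren}(F, d_1)$ agrees with the corresponding term for $\calr^{\rm ren}(F, d_2)$. Since $\pi_+$ and $\mathrm{ev}_0$ are linear, I first distribute them termwise over Eq.~\eqref{eq:r1} to obtain
$$\calr^{\rm ren}(F, d) = \sum_{V \subseteq V(F)} \mathrm{ev}_0 \circ \pi_+ \left( \frac{g_V(L_w,\, w \in V(F) \setminus V)}{\prod_{v \in V} L_v} \right),$$
where $g_V(z_w) = \prod_{w \in V(F) \setminus V} h(z_w)$ is a holomorphic germ depending only on the combinatorial subset $V$, not on the decoration, and the family $\{L_v : v \in V(F)\}$ is linearly independent by Lemma~\ref{le:indep}.

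The crux is to verify the proportionality hypothesis of Theorem~\ref{thm:reductionproc}.(ii) at the level of composite forms, namely that $Q(L_{1v}, L_{1w}) = c\, Q(L_{2v}, L_{2w})$ for all $v, w \in V(F)$. Here the proper decoration is essential: the locality relation on $(\call, \perp)$ is precisely $Q$-orthogonality, so decorations at distinct vertices are pairwise $Q$-orthogonal. Expanding bilinearly,
$$Q(L_{iv}, L_{iw}) = \sum_{u \geq v,\, u' \geq w} Q(d_i(u), d_i(u')) = \sum_{u \geq v \text{ and } u \geq w} Q(d_i(u), d_i(u)),$$
since all off-diagonal contributions vanish. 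The similarity hypothesis $Q(d_1(u), d_1(u)) = c\, Q(d_2(u), d_2(u))$ then yields the desired factor $c$ uniformly in $(v, w)$.

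Consequently, for each $V \subseteq V(F)$, Theorem~\ref{thm:reductionproc}.(ii) applied with $W = V(F)$, the common germ $g_V$, and the two proportional families $\{L_{1w}\}$ and $\{L_{2w}\}$ delivers equality of the two $\mathrm{ev}_0 \circ \pi_+$ contributions. Summing over $V$ concludes $\calr^{\rm ren}(F, d_1) = \calr^{\rm ren}(F, d_2)$. The only real obstacle is the bilinear bookkeeping above, but proper decoration collapses cross-terms so that the diagonal similarity condition on $Q(d_i(v),d_i(v))$ is exactly what Theorem~\ref{thm:reductionproc}.(ii) requires.
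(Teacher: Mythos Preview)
Your proof is correct and follows essentially the same route as the paper: expand $\calr_1(F,d)$ via Eq.~\eqref{eq:r1}, verify the proportionality hypothesis $Q(L_{1v},L_{1w})=c\,Q(L_{2v},L_{2w})$ using that proper decoration kills the off-diagonal terms in the bilinear expansion, and then apply Theorem~\ref{thm:reductionproc}.(\ref{it:red2}) termwise with $W=V(F)$. The paper phrases the diagonal reduction via the trichotomy $F_v\cap F_w\in\{\emptyset,F_v,F_w\}$ rather than your condition ``$u\geq v$ and $u\geq w$'', but these are equivalent formulations of the same computation.
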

\begin{proof}
Let $ F_v$ denote the (maximal) subtree
of $F=F_1=F_2$ with root $v$. As before let $L_{iv}= \sum_{w\in V(F_v)} d_i( w)$ for $i=1,2$, and $V(F)=V(F_1)=V(F_2)$.

Note that for any $v, w\in V(F)$, the intersection $F_v\cap F_w$ is either $\emptyset$ or $F_v$ (when $v\in V(F_w)$) or $F_w$ (when $w\in V(F_v)$). Thus
$$Q(L_{iv},L_{iw})=\left \{\begin{array}{ll}
0, & F_v\cap F_w=\emptyset, \\
\sum_{u\in F_v} Q(d_i(u),d_i(u)), & F_v\cap F_w = F_v,\\
\sum_{u\in F_w} Q(d_i(u),d_i(u)), & F_v\cap F_w=F_w.
\end{array} \right . $$
Thus by the similarity of $(F_1,d_1)$ and $(F_2,d_2)$, we have Eq.~(\ref{eq:qc}).

Let $F=F_i, i=1,2,$ in Eq.~(\mref{eq:r1}), we have
\begin{eqnarray*}
{\mathcal  R}_1(F_i,d_i)
&=& \sum _{V\subset V(F)} \frac{g(L_{iw}, w\in V(F))}{\prod_{v\in V}L_{iv}},
\end{eqnarray*}
where $g(z_v,v\in V(F)\setminus V):=\prod _{v\in V(F)\setminus V}h(z_v)$ is holomorphic in variables  $\{L_{iw},w\in V(F)\}$.
Theorem \ref{thm:reductionproc}.(\ref{it:red2}) with $W=V(F)$ yields the statement.
\end{proof}

This concludes the renormalisation of branched integrals via \loc morphisms by means of the multivariate renormalisation scheme developed in \cite{CGPZ1}   in the framework of  Kreimer's toy model.
\smallskip

\noindent

{\bf Acknowledgements. } The authors acknowledge supports from the Natural Science Foundation of China (Grant Nos. 11521061, 11771190, 11890663, and 11821001) and  the German Research Foundation (DFG project FOR 2402). The first  and third author thank the Perimeter Institute where the paper was completed.

\end{document}